 \newtheorem{thm}{Theorem}[section]
 \newtheorem{lem}[thm]{Lemma}
 \newtheorem{prop}[thm]{Proposition}
 \theoremstyle{definition}
 \newtheorem{defn}[thm]{Definition}
 \theoremstyle{remark}
 \newtheorem{rem}[thm]{Remark}
 \newtheorem*{ex}{Example}
 \numberwithin{equation}{section}
\begin{document}

%-------------------------------------------------------------------------
% editorial commands: to be inserted by the editorial office
%
%\firstpage{1} \volume{228} \Copyrightyear{2004} \DOI{003-0001}
%
%
%\seriesextra{Just an add-on}
%\seriesextraline{This is the Concrete Title of this Book\br H.E. R and S.T.C. W, Eds.}
%
% for journals:
%
%\firstpage{1}
%\issuenumber{1}
%\Volumeandyear{1 (2004)}
%\Copyrightyear{2004}
%\DOI{003-xxxx-y}
%\Signet
%\commby{inhouse}
%\submitted{March 14, 2003}
%\received{March 16, 2000}
%\revised{June 1, 2000}
%\accepted{July 22, 2000}
%
%
%
%---------------------------------------------------------------------------
%Insert here the title, affiliations and abstract:
%

\title[The Referential Gradient]
 {On the Geometric Structure of Flows I: \\ The Referential Gradient. A Generally Covariant Measure of Flow Geometry}

%----------Author 1
\author[J.K. Edmondson]{Justin K. Edmondson}

\address{%
Department of Aerospace Engineering \\
University of Michigan\\
1320 Beal Avenue \\
%2455 Hayward Street\\
Ann Arbor, MI 48105}

\email{jkedmond@umich.edu}

\thanks{This research was supported, in part, by NSF Grant AGS-1043012, and NASA LWS Grant NNX10AQ61G.}
%----------Author 2
%\author{A Second Author}
%\address{The address of\br
%the second author\br
%sitting somewhere\br
%in the world}
%\email{dont@know.who.knows}
%----------classification, keywords, date
\subjclass{Primary 70G45, 37C10; Secondary 53C44, 53C15}

\keywords{Differential-geometric methods, Vector fields, flows, ordinary differential equations; Geometric evolution equations, General geometric structures on manifolds}

\date{\today}
%----------additions
%\dedicatory{To my boss}
%%% ----------------------------------------------------------------------

\begin{abstract}
Assuming \textit{a-priori} a smooth generating vector field, we introduce a generally covariant measure of the flow geometry called the referential gradient of the flow. The main result is the explicit relation between the referential gradient and the generating vector field, and is provided for from two equivalent perspectives: a Lagrangian specification with respect to a generalized parameter, and an Eulerian specification making explicit the evolution dynamics. Furthermore, we provide explicit non-trivial conditions which govern the transformation properties of the referential gradient object.
\end{abstract}

%%%% ----------------------------------------------------------------------
\maketitle
%%%% ----------------------------------------------------------------------
%\tableofcontents
%\section{Document Preamble}
%Start the article with the command
%
%\begin{verbatim}\documentclass{birkjour}\end{verbatim}
%
%After that, needed macro packages and new commands can be inserted
%as in every \LaTeX\ or \AmS-\LaTeX\ document. Don't use commands
%that change the page layout (like
%\verb+\textwidth, \oddsidemargin+
%etc.) or fonts.\bigskip
%
%\section{Frontmatter}
%The command
%\begin{verbatim}\begin{document}\end{verbatim}
%starts -- as always -- the article.
%
%\subsection{Author Data}
%
%Afterwards, insert title, author(s) and affiliation(s), as in the source file to this document,
%\verb+bjourdoc.tex+. E.g.,
%\begin{verbatim}
%\title[An Example for birkjour]
% {An Example for the Usage of the\\ birkjour Class File}
%%----------Author 1
%\author[Birkh\"auser]{Birkh\"{a}user Publishing Ltd.}
%\address{%
%Viaduktstr. 42\\
%P.O. Box 133\\
%CH 4010 Basel\\
%Switzerland}
%\email{info@birkhauser.ch}
%\end{verbatim}
%For each author the commands \verb+\author+, \verb+\address+ and \verb+\email+ should be used separately. See %the last page of this document for the typesetting layout of the above addresses.
%
%\subsection{Abstract, Thanks, Key Words, MSC}
%
%The \verb+abstract+ environment typesets the abstract:
%\begin{verbatim}
%\begin{abstract}
%The aim of this work is to provide the contributors to edited
%books with an easy-to-use and flexible class file compatible
%with \LaTeX\ and \AmS-\LaTeX.
%\end{abstract}
%\end{verbatim}
%In addition, the Mathematical Subject Codes, some key words and thanks can be given:
%\begin{verbatim}
%\thanks{This work was completed with the support of our
%\TeX-pert.}
%\subjclass{Primary 99Z99; Secondary 00A00}
%\keywords{Class file, journal}
%\end{verbatim}
%Finally, \verb+\maketitle+ typesets the title.

\section{Introduction}
\label{Introduction}

The present work concerns the geometric structure of the flow of a vector field in four-dimensional spacetime. We work from the perspective that  the generating vector field satisfying some set of governing evolution equations is the primary quantity, and the flow is the secondary (or derived) quantity. Assuming \textit{a-priori} a smooth generating vector field, we introduce a generally covariant measure of the flow geometry called the \textit{Referential Gradient of the Flow}. The main result of this work is the explicit relation between the referential gradient of the flow and the generating vector field, and is provided for from two equivalent perspectives: a Lagrangian specification with respect to a generalized parameter, and an Eulerian specification making explicit the evolution dynamics. Furthermore, we provide explicit non-trivial conditions which govern the transformation properties of the referential gradient object.

The layout of this paper is as follows. Section (\ref{S:FlowAndRefGrad}) provides the standard differential geometry context (\ref{SS:Preliminaries}) and formalism of mathematical flow representations (\ref{SS:GeneralConnectivityMap}) in order to provide a rigorous framework from which to define the referential gradient. The main results of the paper are put forth in section (\ref{S:RefGradDefined}) where we define the referential gradient of the flow. In section (\ref{SS:LagrangeSpecRefGrad}), we prove the Lagrangian specification Theorem (\ref{Thm:RefGradLagrangian}) from which is given a non-local, closed-form functional solution with respect the generating vector field. Furthermore, due to the non-local nature of the Lagrangian specification of the referential gradient, in section (\ref{SS:RefGradTransforms}) we prove three lemmas that identify the non-trivial referential gradient transformation conditions: Lemma (\ref{lem:RefGrad}) provides the condition for manifest covariance; Lemma (\ref{lem:RefGradGroup}) provides for the group property with respect to the connectivity parameter; and Lemma (\ref{lem:RepresentationRelations}) provides the proper relations between the corresponding referential gradient representations associated with a change of integration variable. In section (\ref{SS:RefGradDynamics}), we prove the Eulerian specification Theorem (\ref{Thm:RefGradDynamics}) which in a coordinate chart  makes explicit the referential gradient dynamics at each point of the manifold. %Finally, in section (\ref{S:Summary})

Throughout the paper, we work in both coordinate-free language, show coordinate-dependent expressions, and give explicit illustrative examples. Greek indices $\lbrace \ 0 \dots 3 \ \rbrace$ denote spacetime components, and Latin indices $\lbrace \ 1 \dots 3 \ \rbrace$ denote only spatial components; in addition we employ the standard Einstein summation convention.

\section{The Flow of a Vector Field}
\label{S:FlowAndRefGrad}

\subsection{Preliminaries}
\label{SS:Preliminaries}

Let $\left( \ \mathbb{M} , g \ \right)$ be a simply connected region of spacetime with (semi-Riemannian) metric $g$. Furthermore, assume the spacetime topology may be foliated as $\mathbb{M} \simeq \Sigma \times \mathbb{R}$, where $\Sigma$ is a three-dimensional spatial hypersurface, and $\mathbb{R}$ is the time axis. Let $\left( \ \mathbb{U}_{I} , x_{I} \ \right)_{I \in \mathcal{I}}$ be an atlas of $\mathbb{M}$, such that an arbitrary point $p \in \mathbb{U}_{I}$ has coordinate component functions $x_{I} \left( p \right) = x^{\mu}_{I} \left( p \right) \in \Sigma \times \mathbb{R}$, where $x^{0}_{I}$ and $x^{i}_{I}$ are identified with coordinate time and 3-space, respectively.

Given the atlas $\left( \ \mathbb{U}_{I} , x_{I} \ \right)_{I \in \mathcal{I}}$ over $\mathbb{M}$, we have on each chart $\mathbb{U}_{I}$ the local coordinate basis $\hat{e}^{I}_{\mu} \equiv  \partial^{I}_{\mu}$ and local dual basis $\hat{e}^{\mu}_{I} \equiv dx^{\mu}_{I}$, defined, respectively, via $\partial^{I}_{\mu} \bigl( \ x^{\nu}_{I} \left( p \right) \ \bigr) = \delta^{\nu}_{\mu}$ and $dx^{\mu}_{I} \Bigl( \ \partial^{I}_{\nu} \bigl( \ x_{I} \left( p \right) \ \bigr) \ \Bigr) = \delta^{\mu}_{\nu}$, where $\delta^{\mu}_{\nu}$ is the Kronecker delta. The invariant interval is given in coordinates by,
\begin{equation} \label{E:Metric}
\displaystyle
ds^{2} = g_{\mu \nu} \ dx^{\mu}_{I} \ dx^{\nu}_{I} \\
\end{equation}

\noindent Where in general, in the coordinate chart $\mathbb{U}_{I}$ the metric components are given functions of position, $g_{\mu \nu} = g_{\mu \nu} \bigl( \ x_{I} \left( p \right) \ \bigr)$.

We assume an affine connection $\nabla : T^{m}_{n} \left( \mathbb{M} \right) \rightarrow T^{m}_{n+1} \left( \mathbb{M} \right)$, that satisfies the standard axioms (see e.g., Ref. \cite{Wald1984} $\S$ 3.1). In a coordinate chart $\mathbb{U}_{I}$, the connection components are defined via the action on basis vectors, $\nabla_{\nu} \ \hat{e}^{I}_{\mu} = \Gamma^{\rho}_{\mu \nu} \ \hat{e}^{I}_{\rho}$ and $\nabla_{\nu} \ \hat{e}^{\mu}_{I} = - \Gamma^{\nu}_{\mu \rho} \ \hat{e}^{\rho}_{I}$, such that the covariant derivative of a mixed tensor field $t \left( p \right) \in T^{m}_{n} \left( \mathbb{M} \right)$ is given by,
\begin{equation} \label{E:TensorCovDeriv}
\displaystyle
\nabla_{\nu} \ t^{\eta_{1} \dots \eta_{m}}_{\beta_{1} \dots \beta_{n}} = \frac{\partial}{\partial x^{I}_{\nu}} \ t^{\eta_{1} \dots \eta_{m}}_{\beta_{1} \dots \beta_{n}} + \sum_{k = 1}^{m} \ \Gamma^{\eta_{k}}_{\nu \rho} \ t^{\eta_{1} \dots \rho \dots \eta_{m}}_{\beta_{1} \dots \beta_{n}}  - \sum_{l = 1}^{n} \ \Gamma^{\rho}_{\nu \beta_{l}} \ t^{\eta_{1} \dots \eta_{m}}_{\beta_{1} \dots \rho \dots \beta_{n}} \\
\end{equation}

\noindent In addition, the covariant derivative with respect to $\nabla$ in the direction of the vector field $w \in T^{1}_{0} \left( \mathbb{M} \right)$ is given by, $\nabla_{w} = i_{w} \circ \nabla : T^{m}_{n} \left( \mathbb{M} \right) \rightarrow T^{m}_{n} \left( \mathbb{M} \right)$, where $i_{w}$ is the contraction map (with some abuse of notation where the meaning is clear, we write the contraction map $i_{w} \circ \nabla = w \cdot \nabla$, interpreted as the standard scalar product).

Beyond what is physically reasonable, we make no assumptions regarding the metric or the connection. Without loss of generality, one may perform particular calculations using the Levi-Civita connection with components given by,
\begin{equation} \label{E:ChristoffelSymbols}
\displaystyle
\Gamma^{\rho}_{\mu \nu} = \frac{1}{2} \ g^{\rho \sigma} \ \biggl( \ \frac{\partial g_{\sigma \mu}}{\partial x^{\nu}_{I}} + \frac{\partial g_{\sigma \nu}}{\partial x^{\mu}_{I}} - \frac{\partial g_{\mu \nu}}{\partial x^{\sigma}_{I}} \ \biggr) \\
\\
\end{equation}

\subsection{The Flow of a Vector Field}
\label{SS:GeneralConnectivityMap}

Let $B: \mathbb{M} \rightarrow T \mathbb{M}$ be a smooth vector field everywhere on $\mathbb{M}$ (unless explicitly stated otherwise, by $T \mathbb{M}$ we mean $T^{1}_{0} \left( \mathbb{M} \right)$). In a chart $\mathbb{U}_{I}$, the vector field may be written,
\begin{equation} \label{E:GenVectField}
\displaystyle
B \left( p \right) = B^{\mu}_{I} \bigl( \ x_{I} \left( p \right) \ \bigr) \ \hat{e}^{I}_{\mu} \\
\end{equation}

\noindent Where the component functions $B^{\mu}_{I} \bigl( \ x_{I} \left( p \right) \ \bigr) \in C^{\infty} \left( \mathbb{M} \right)$. 

\begin{ex}
Identify the vector field $B \left( p \right)$ with a four-velocity field in Minkowski spacetime, curvilinear coordinates (see e.g., Ref. \cite{Tsamparlis2010} $\S$ 6.2), the contravariant components of which are given by,
\begin{equation} \label{Ex:FourVelocity}
\displaystyle
U^{\mu}_{I} \bigl( \ x_{I} \left( p \right) \ \bigr) = \Bigl( \ U^{0}_{I} \bigl( \ x_{I} \left( p \right) \ \bigr) \ , \ U^{i}_{I} \bigl( \ x_{I} \left( p \right) \ \bigr) \ \Bigr) \\
%\displaystyle
%U^{\mu}_{I} \bigl( \ x_{I} \left( p \right) \ \bigr) = \gamma \ \Bigl( \ c \ , \ \mathbf{v} \bigl( \ x_{I} \left( p \right) \ \bigr) \ \Bigr) \\
\end{equation}
\noindent Where $U^{0}_{I} \bigl( \ x_{I} \left( p \right) \ \bigr)$ is the velocity field in a co-moving coordinate frame.%, and $U^{i}_{I} \bigl( \ x_{I} \left( p \right) \ \bigr)$ is the spatial velocity vector field.
%\noindent Where $\mathbf{v} \bigl( \ x_{I} \left( p \right) \ \bigr)$ is the 3-space velocity vector field, $\gamma$ is the Lorentz factor, and $c$ the speed of light.
\end{ex}

\begin{ex}
Identify the vector field $B \left( p \right)$ with the four-magnetic field (see e.g., Ref. \cite{Tsamparlis2010} $\S$ 13.10.2), the contravariant components of which are given by,
\begin{equation} \label{Ex:FourMagneticField}
\displaystyle
B^{\mu}_{I} \bigl( \ x_{I} \left( p \right) \ \bigr) = \frac{1}{2} \ \frac{\varepsilon^{\mu \alpha \beta \nu}}{\sqrt{\left( -g \right)}} \ F_{\alpha \beta} \bigl( \ x_{I} \left( p \right) \ \bigr) \ U^{I}_{\nu} \bigl( \ x_{I} \left( p \right) \ \bigr) \\
\end{equation}
\noindent Where $\varepsilon^{\mu \alpha \beta \nu}$ %= \frac{1}{\sqrt{g}} \varepsilon_{\mu \alpha \beta \nu}$ 
is the Levi-Civita tensor density, and $g = \text{det} \Bigl( \ g_{\mu \nu} \bigl( \ x_{I} \left( p \right) \ \bigr) \ \Bigr)$. In the frame of the observer, $F_{\alpha \beta} \bigl( \ x_{I} \left( p \right) \ \bigr)$ is the electromagnetic tensor, and $U^{I}_{\mu} \bigl( \ x_{I} \left( p \right) \ \bigr) = g_{\mu \nu} \bigl( \ x_{I} \left( p \right) \ \bigr) \ U^{\mu}_{I} \bigl( \ x_{I} \left( p \right) \ \bigr)$ are the covariant components of the four-velocity field. %We note, Minksowski spacetime in Cartesian coordinates, equation (\ref{Ex:FourMagneticField}) reduces to $B^{\mu}_{I} \bigl( \ x_{I} \left( p \right) \ \bigr) = \Bigl( \ 0 \ , \ B^{i}_{I} \bigl( \ x_{I} \left( p \right) \ \bigr) \ \Bigr)$.

%$F_{\alpha \beta} \bigl( \ x_{I} \left( p \right) \ \bigr)$ is the electromagnetic tensor given in the frame of the observer,
%\begin{equation} \label{Ex:ElectromageticTensor}
%\displaystyle
%F_{\mu \nu} \bigl( \ x_{I} \left( p \right) \ \bigr) = 
%\begin{pmatrix}
%\ 0 \ & \ -E_{x} \bigl( \ x_{I} \left( p \right) \ \bigr) / c \ & \ -E_{y} \bigl( \ x_{I} \left( p \right) \ \bigr) / c \ & \ -E_{z} \bigl( \ x_{I} \left( p \right) \ \bigr) / c \ \\
%\ E_{x} \bigl( \ x_{I} \left( p \right) \ \bigr) / c \ & \ 0 \ & \ B_{z} \bigl( \ x_{I} \left( p \right) \ \bigr) \ & \ -B_{y} \bigl( \ x_{I} \left( p \right) \ \bigr) \ \\
%\ E_{y} \bigl( \ x_{I} \left( p \right) \ \bigr) / c \ & \ -B_{z} \bigl( \ x_{I} \left( p \right) \ \bigr) \ & \ 0 \ & \ B_{x} \bigl( \ x_{I} \left( p \right) \ \bigr) \ \\
%\ E_{z} \bigl( \ x_{I} \left( p \right) \ \bigr) / c \ & \ B_{y} \bigl( \ x_{I} \left( p \right) \ \bigr) \ & \ -B_{x} \bigl( \ x_{I} \left( p \right) \ \bigr) \ & \ 0 \
%\end{pmatrix}
%\end{equation}
\end{ex}

\begin{defn} \label{D:MCFlow}
\emph{(The Flow of a Vector Field)}
Let $\mathbb{U} \subseteq \mathbb{M}$ be an open set, and $I \subseteq \mathbb{R}$ and open interval containing $0$. The \textit{flow} of $B$ is a map $\phi : I \times \mathbb{U} \rightarrow \mathbb{M}$, such that for any point $p \in \mathbb{U}$,
\begin{equation} \label{E:MCFlow}
\begin{array}{c}
\displaystyle \frac{\partial}{\partial \lambda} \ \phi \left( \lambda , p \right) = B \bigl( \ \phi \left( \lambda , p \right) \ \bigr) \\ 
\\
\displaystyle
\phi \left( 0 , p \right) = p \\
\end{array}
\end{equation}
\end{defn}

We refer to $B$ as the \textit{generating vector field}, $\lambda$ as the \textit{connectivity parameter} (associated with the generating vector field), $p$ as the \textit{reference point}, and $\phi \left( 0 , p \right) = p$ as the \textit{reference condition}.

If $p \in \mathbb{U}_{I}$ for some coordinate chart $\mathbb{U}_{I}$, then so also lies a segment of $\phi \left( \lambda , p \right)$ in $\mathbb{U}_{I}$. For that segment the coordinate components of equation (\ref{E:MCFlow}) with respect to the basis $\hat{e}^{I}_{\mu}$ are,
\begin{equation} \label{E:MCFlowCoords}
\begin{array}{c}
\displaystyle \frac{\partial}{\partial \lambda} \ x^{\mu}_{I} \bigl( \ \phi \left( \lambda , p \right) \ \bigr) = B^{\mu}_{I} \Bigl( \ x_{I} \bigl( \ \phi \left( \lambda , p \right) \ \bigr) \ \Bigr) \\
\\
\displaystyle
x^{\mu}_{I} \bigl( \ \phi \left( 0 , p \right) \ \bigr) = x^{\mu}_{I} \left( p \right) \\
\end{array}
\end{equation}

Equations (\ref{E:MCFlowCoords}) represent an initial value problem in (four) first-order differential equations in the parameter $\lambda$, covariant under a change of coordinates. Depending on the nature of the generating vector field, equations (\ref{E:MCFlowCoords}) may be non-autonomous or autonomous in the parameter $\lambda$. Hence, for fixed reference condition, by standard theorems of existence, uniqueness, and extension for ordinary differential equations (see e.g., Refs. \cite{HirschSmale1974,Arnold1992,Taylor1996}), the solution $x_{I} \bigl( \ \phi \left( \lambda , p \right) \ \bigr)$ exists, is unique, smooth, and maximal for any $p \in \mathbb{U}_{I}$ and all $\lambda \in I$. %(note, if the generating vector field is complete, then $I = \mathbb{R}$).

\begin{ex}
For a generating vector field identified with the four-velocity field (\ref{Ex:FourVelocity}), the associated connectivity parameter is identified with time, and equations (\ref{E:MCFlowCoords}) are non-autonomous.
\end{ex}

\begin{ex}
For a generating vector field identified with the four-magnetic field (\ref{Ex:FourMagneticField}), the associated connectivity parameter is identified with a distance per magnetic field strength, and equations (\ref{E:MCFlowCoords}) are autonomous.
\end{ex}

\begin{rem}
By equation (\ref{E:MCFlowCoords}) the units of the flow $\phi \left( \lambda , p \right)$ are identified with the position coordinates $x^{\mu}_{I} \bigl( \ \phi \left( \lambda , p \right) \ \bigr) \in \mathbb{U}_{I}$. Hence, the units of the connectivity parameter $\lambda$ are coordinate-dependent; that is, in a given coordinate chart, $\lambda$ has coordinate units per generating vector field units. In $\S$ \ref{SS:RefGradDynamics}, we return to a full discussion of the coordinate representation of the connectivity parameter.
\end{rem}

From the flow may be defined two collections of maps: 
\begin{defn} \label{D:MCFlow}
\emph{(The Orbit of $p$)}
For fixed reference point $p \in \mathbb{M}$, the \textit{orbit of $p$} is the $C^{\infty}$ map $\lambda \mapsto \phi_{p} \left( \lambda \right)$ from an interval in $\mathbb{R}$ into $\mathbb{M}$; e.g., $\phi_{p} : I \rightarrow \mathbb{M}$ such that,
\begin{equation} \label{E:MFL}
\displaystyle
\phi_{p} \left( \lambda \right) = \lbrace \ \phi \left( \lambda , p \right) \ \vert \ \lambda \in I \ \text{and fixed} \ p \in \mathbb{M} \ \rbrace
\end{equation}
\end{defn}

\noindent If $p \in \mathbb{U}_{I}$, then the coordinate maps, $\lambda \mapsto x_{I} \bigl( \ \phi_{p} \left( \lambda \right) \ \bigr) \in \mathbb{U}_{I}$ are smooth curves with tangent vector everywhere defined by, and equal to, the (smooth) generating vector field.

\begin{ex}
For a generating vector field identified with the four-velocity field (\ref{Ex:FourVelocity}) in the non-relativistic limit, the orbit of $p$ is a \textit{streamline}.
\end{ex}

\begin{ex}
For a generating vector field identified with the four-magnetic field (\ref{Ex:FourMagneticField}) in the non-relativistic limit, the orbit of $p$ is a \textit{magnetic line of force} (or \textit{magnetic field line}). 
\end{ex}

\begin{defn} \label{D:MCMap}
\emph{(The Connectivity Map)}
For fixed connectivity parameter $\lambda \in I \subseteq \mathbb{R}$, the \textit{connectivity map} is the one-parameter group of (active) diffeomorphisms $\phi_{\lambda} : \mathbb{M} \rightarrow \mathbb{M}$,
\begin{equation} \label{E:MCM}
\displaystyle
\phi_{\lambda} \left( p \right) = \lbrace \ \phi \left( \lambda , p \right) \ \vert \ p \in \mathbb{M} \ \text{and fixed} \ \lambda \in I \ \rbrace
\end{equation}
\end{defn}

If $I = \mathbb{R}$, then for all possible values of connectivity parameter $\lambda \in \mathbb{R}$, the flow $ \phi \left( \lambda , p \right)$ solutions form a Lie group $G$. That is to say, for each value of $\lambda \in \mathbb{R}$ is associated a smooth transformation of the space to itself $p \mapsto \phi \left( \lambda , p \right)$, such that the connectivity map satisfies the following group properties: $\phi_{\lambda_1 + \lambda_2} \left( p \right) = \phi_{\lambda_1} \circ \phi_{\lambda_2} \left( p \right)$ for any $\lambda_1 , \lambda_2 \in \mathbb{R}$; $\phi_{0} \left( p \right) = p$ is the identity element; and $\phi_{-\lambda} \left( p \right)$ is the inverse element, such that $\phi_{-\lambda} \circ \phi_{\lambda} \left( p \right) = \phi_{0} \left( p \right)$. 

\begin{ex}
For divergence-free generating vector fields, the flow is identified with the group of volume-preserving diffeomorphisms $S\text{Diff} \left( \mathbb{M} \right)$ (see e.g., Ref.~\cite{ArnoldKhesin1998} $\S$ I.1).
\end{ex}

The flow represents an equivalence class under a affine transformations of the connectivity parameter, $\lambda \mapsto \lambda \left( \sigma \right)$ with $\lambda \left( 0 \right) = 0$ and $\frac{d \lambda}{d \sigma} > 0$ (see e.g., Ref. \cite{Arnold1992} $\S$ 7.4). In particular, the connectivity parameter may be identified with an arc length measure $s$ along the orbit $\psi_{p} \left( s \right) = \psi \left( s , p \right)$ from the reference point $p \in \mathbb{M}$, where $\phi \left( \lambda , p \right) \mapsto \psi \left( s , p \right)$ by a change of variable $\lambda \mapsto \lambda \left( s \right)$ with, 
\begin{equation} \label{E:ArcLengthCondition}
\begin{array}{c}
\displaystyle
\frac{d \lambda}{d s} = \Big \vert \ B \bigl( \ \psi \left( s , p \right) \ \bigr) \ \Big \vert^{-1} \\
\\
\displaystyle 
\lambda \left( 0 \right) = 0 \\
\end{array}
\end{equation}

For a smooth generating vector field $B$, provided $B \left( p \right) \ne 0$ for all $p \in \mathbb{M}$, it can be shown the transformation (\ref{E:ArcLengthCondition}) exists and is unique. Furthermore, under transformation (\ref{E:ArcLengthCondition}), the generating vector field for the re-parametrized flow $\psi$, is the unit direction of the generating vector field $b \equiv \frac{B}{\vert B \vert}$, such that, %\in S \times S^{3}$,
\begin{equation} \label{E:ArcLengthMCFlow}
\begin{array}{c}
\displaystyle \frac{\partial}{\partial s} \ \psi \left( s , p \right) = b \bigl( \ \psi \left( s , p \right) \ \bigr) \\ 
\\ 
\displaystyle 
\psi \left( 0 , p \right) = p \\
\end{array}
\end{equation}

\noindent Additionally, if $p \in \mathbb{U}_{I}$, then equation (\ref{E:ArcLengthMCFlow}) finds a coordinate expression similar to that of (\ref{E:MCFlow}),
\begin{equation} \label{E:ArcLengthMCFlowCoords}
\begin{array}{c c c}
\displaystyle \frac{\partial}{\partial s} \ x^{\mu}_{I} \bigl( \ \psi \left( s , p \right) \ \bigr) = b^{\mu}_{I} \Bigl( \ x_{I} \bigl( \ \psi \left( s , p \right) \ \bigr) \ \Bigr) \\
\\ 
\displaystyle 
x^{\mu}_{I} \bigl( \ \psi \left( 0 , p \right) \ \bigr) = x^{\mu}_{I} \left( p \right) \\
\end{array}
\end{equation}

We call the re-parametrized flow $\psi \left( s , p \right)$ of equations (\ref{E:ArcLengthMCFlow}), the \textit{arc length representation}, reflecting the fact that the connectivity parameter itself is identified with an arc length measure along the orbit issuing from the reference point $p \in \mathbb{M}$.

\begin{ex}
In the arc length representation of the flow for a generating vector field identified with the four-velocity field (\ref{Ex:FourVelocity}), the connectivity parameter is a measure of the total time along the streamline.
\end{ex}

\begin{ex}
In the arc length representation of the flow for a generating vector field identified with the four-magnetic field (\ref{Ex:FourMagneticField}) in the non-relativistic limit, the connectivity parameter is a measure of the total spatial distance along the magnetic line of force.
\end{ex}

\begin{rem}
Unless explicitly noted we will work with the flow representation $\phi \left( \lambda , p \right)$ associated with the full generating vector field $B$. For completeness though, the flow representations $\phi \left( \lambda , p \right)$ and $\psi \left( s , p \right)$ are implicitly related everywhere via the unit generating field $b \bigl( \ \phi \left( \lambda , p \right) \ \bigr) = b \bigl( \ \psi \left( s , p \right) \ \bigr)$,
\begin{equation} \label{E:FlowArcLengthFlowRelation}
\begin{array}{c}
\displaystyle 
\biggl( \ \frac{\partial}{\partial \lambda} \ \phi \left( \lambda , p \right) \cdot \frac{\partial}{\partial \lambda} \ \phi \left( \lambda , p \right) \biggr)^{-1/2} \ \frac{\partial}{\partial \lambda} \ \phi \left( \lambda , p \right) = \frac{\partial}{\partial s} \ \psi \left( s , p \right) \\
\\
\displaystyle
\phi \left( 0 , p \right) = \psi \left( 0 , p \right) = p \\
\end{array}
\end{equation}

In a coordinate chart $\mathbb{U}_{I}$, equation (\ref{E:FlowArcLengthFlowRelation}) is,
\begin{equation} \label{E:FlowArcLengthFlowRelationCoords}
\begin{array}{c}
\begin{split}
\displaystyle 
\biggl( \ g_{\alpha \beta} \ \frac{\partial}{\partial \lambda} \ x^{\alpha}_{I} \bigl( \ \phi \left( \lambda , p \right) \ \bigr) \ \frac{\partial}{\partial \lambda} \ x^{\beta}_{I} \bigl( \ \phi \left( \lambda , p \right) \ \bigr) &\ \biggr)^{-1/2} \ \frac{\partial}{\partial \lambda} \ x^{\mu}_{I} \bigl( \ \phi \left( \lambda , p \right) \ \bigr) \\
\displaystyle
&= \frac{\partial}{\partial s} \ x^{\mu}_{I} \bigl( \ \psi \left( s , p \right) \ \bigr) \\
\\
\end{split}
\\
\displaystyle
x^{\mu}_{I} \bigl( \ \phi \left( 0 , p \right) \ \bigr) = x^{\mu}_{I} \bigl( \ \psi \left( 0 , p \right) \ \bigr) = x^{\mu}_{I}\left( 0 , p \right) \\
\end{array}
\end{equation}

\noindent Where $g_{\alpha \beta} = g_{\alpha \beta} \Bigl( \ x_{I} \bigl( \ \phi \left( \lambda , p \right) \ \bigr) \ \Bigr)$ is the metric given in the coordinate chart $\mathbb{U}_{I} \subseteq \mathbb{M}$.
\end{rem}

\section{The Referential Gradient of the Flow}
\label{S:RefGradDefined}

Given a generating vector field $B$, the determination of the flow $\phi \left( \lambda , p \right)$ with a particular reference condition $\phi \left( 0 , p \right) = p$, represents an initial value problem, via equations (\ref{E:MCFlow}). The geometric flow structure may be discerned by examining the dependence on the reference condition (see e.g., Ref. \cite{Arnold1992} $\S$ 32).

Let $\mathcal{B}_{r} \left( p \right) \subset \mathbb{M}$ be an open ball of radius $r > 0$ about the reference point $p \in \mathbb{M}$, and $I_{r}$ be a bounded open interval containing $0$ and $\lambda$ with $\overline{I}_{r} \subset I$, such that $\phi \left( \lambda , q \right) \in \overline{\mathcal{B}}_{r} \left( p \right)$ for all points $q \in \overline{\mathcal{B}}_{r} \left( p \right)$ and any $\lambda \in \overline{I}_{r}$. Denote the non-zero \textit{reference shift vector} $h = \vert h \vert \ \hat{h}$ such that $q = p + \vert h \vert \ \hat{h}$, where the unit direction vector $\hat{h} \in \mathcal{B}_{r} \left( 0 \right) \subset T_{p} \mathbb{M}$.

\begin{defn} \label{D:RefGrad}
\emph{(The Referential Gradient of the Flow)}

The \textit{referential gradient of the flow} is the matrix-valued function $F : \overline{I}_{r} \times \overline{\mathcal{B}_{r}} \rightarrow GL \left( 4 , \mathbb{R} \right)$ defined via the vector,
\begin{equation} \label{E:FormalRefGrad}
\displaystyle
F \left( \ \lambda , p \ \right) \cdot \hat{h} = \lim_{\vert h \vert \rightarrow 0} \ \frac{1}{\vert h \vert} \ \biggl( \ \phi \bigl( \lambda , p + \vert h \vert \ \hat{h} \bigr) - \phi \bigl( \lambda , p \bigr) \ \biggr) \\
\\
\end{equation}

\noindent provided the limit exists for arbitrary reference shift vector $\hat{h} \in \mathcal{B}_{r} \left( 0 \right) \subset T_{p} \mathbb{M}$.
\end{defn}

The vector $F \left( \ \lambda , p \ \right) \cdot \hat{h}$ is the referential gradient of the flow corresponding to the direction of the reference shift vector $h$. It follows immediately for $\lambda = 0$, definition (\ref{E:FormalRefGrad}) reduces to $F \left( \ 0 , p \ \right) \cdot \hat{h} = \hat{h}$, thus $F \left( \ 0 , p \ \right) = I$.
%\begin{equation} \label{E:FormalRefGradLambdaZero}
%%\begin{array}{c}
%%\displaystyle
%%F \left( \ 0 , p \ \right) \cdot \hat{h} = \lim_{\vert h \vert \rightarrow 0} \ \frac{1}{\vert h \vert} \ \biggl( \ \phi \bigl( 0 , p + \vert h \vert \ \hat{h} \bigr) - \phi \bigl( 0 , p \bigr) \ \biggr) \\
%%\\
%%\displaystyle
%%F \left( \ 0 , p \ \right) \cdot \hat{h} = \lim_{\vert h \vert \rightarrow 0} \ \frac{1}{\vert h \vert} \ \biggl( \ p + \vert h \vert \ \hat{h} - p \ \biggr) \\
%%\\
%\displaystyle
%F \left( \ 0 , p \ \right) \cdot \hat{h} = \hat{h} \\
%%\\
%%\end{array}
%\end{equation}

\begin{rem}
A somewhat weaker form of the definition of the referential gradient for any $\lambda \in I_{r}$ is such that for any  reference shift vector $h \in T_{p} \mathbb{M}$ and $\tau > 0$,
\begin{equation} \label{E:WeakRefGrad}
\displaystyle
F \left( \ \lambda , p \ \right) \cdot \hat{h} = \frac{d}{d\tau} \Big \vert_{\tau = 0} \ \phi \bigl( \lambda , p + \tau h \bigr)
\end{equation}
%
%\noindent Provided the derivative exists.
\end{rem}

Geometrically, for every $\lambda \in I_{r}$, the referential gradient $F \left( \ \lambda , p \ \right)$ is a measure of the relative change of the flow $\phi \left( \lambda , p \right)$ with respect to a shift in the reference point $p \mapsto p + h$. %(see Figure \ref{FC:RefGrad}). 
In other words, for a given orbit $\phi_{p} \left( \lambda \right) = \phi \left( \lambda , p \right)$, the referential gradient $F \left( \ \lambda , p \ \right)$ contains the deformation information of all ``neighboring'' orbits $\phi_{p + h} \left( \lambda \right) = \phi \left( \lambda , p + h \right)$ of similar length $\lambda \in I_{r} \subset \mathbb{R}$.

Let $\mathcal{B}_{r} \left( p \right) \subseteq \mathbb{U}_{I}$ be in a coordinate chart. The  reference shift direction vector $\hat{h} \in \mathcal{B}_{r} \left( 0 \right) \subset T_{p} \mathbb{U}_{I}$ is given by $\hat{h} = \frac{h^{\mu}_{I}}{\vert h \vert} \ \hat{e}^{I}_{\mu}$, and the coordinate expression for $F \left( \ \lambda , p \ \right) \cdot \hat{h}$ is,
\begin{equation} \label{E:FormalRefGradCoords}
\begin{split}
\displaystyle
F^{\mu}_{\nu} \bigl( \ \lambda , x_{I} &\left( p \right) \ \bigr) \ \hat{h}^{\nu}_{I} \ \hat{e}^{I}_{\mu} \\
\displaystyle
&= \lim_{\vert h \vert \rightarrow 0} \ \frac{1}{\vert h \vert} \ \biggl( \ x^{\mu}_{I} \Bigl( \ \phi \bigl( \lambda , p + \vert h \vert \ \hat{h} \bigr) \ \Bigr) - x^{\mu}_{I} \Bigl( \ \phi \bigl( \lambda , p \bigr) \ \Bigr) \ \biggr) \ \hat{e}^{I}_{\mu} \\
\end{split}
\end{equation}

\noindent provided the limit exists. When $\lambda = 0$, the referential gradient corresponding to the reference shift direction vector $h$ is $F^{\mu}_{\nu} \bigl( \ 0 , x_{I} \left( p \right) \ \bigr) \ \hat{h}^{\nu}_{I} \ \hat{e}^{I}_{\mu} = \hat{h}^{\mu}_{I} \ \hat{e}^{I}_{\mu}$, hence $F^{\mu}_{\nu} \bigl( \ 0 , x_{I} \left( p \right) \ \bigr) = \delta^{\mu}_{\nu}$. %(The coordinate expression for the weak form follows similarly.)
%\noindent In a coordinate chart $\mathbb{U}_{I}$,
%\begin{equation} \label{E:FormalRefGradCoordsLambdaZero}
%%\begin{array}{c}
%%\displaystyle
%%F^{\mu}_{\nu} \left( \ 0 , p \ \right) \ \hat{h}^{\nu}_{I} = \lim_{\vert h \vert \rightarrow 0} \ \frac{1}{\vert h \vert} \ \biggl( \ x^{\mu}_{I} \Bigl( \ \phi \bigl( 0 , p + \vert h \vert \ \hat{h} \bigr) \ \Bigr) - x^{\mu}_{I} \Bigl( \ \phi \bigl( 0 , p \bigr) \ \Bigr) \ \biggr) \\
%%\\
%%\displaystyle
%%F^{\mu}_{\nu} \left( \ 0 , p \ \right) \ \hat{h}^{\nu}_{I} = \lim_{\vert h \vert \rightarrow 0} \ \frac{1}{\vert h \vert} \ \biggl( \ x^{\mu}_{I} \bigl( p + \vert h \vert \ \hat{h} \bigr) - x^{\mu}_{I} \bigl( p \bigr) \ \biggr) \\
%%\\
%%\displaystyle
%%F^{\mu}_{\nu} \left( \ 0 , p \ \right) \ \hat{h}^{\nu}_{I} = \lim_{\vert h \vert \rightarrow 0} \ \frac{1}{\vert h \vert} \ \biggl( \ x^{\mu}_{I} \bigl( p \bigr) + \vert h \vert \ \hat{h}^{\mu}_{I} - x^{\mu}_{I} \bigl( p \bigr) \ \biggr) \\
%%\\
%\displaystyle
%F^{\mu}_{\nu} \left( \ 0 , p \ \right) \ \hat{h}^{\nu}_{I} \ \hat{e}^{I}_{\mu} = \hat{h}^{\mu}_{I} \ \hat{e}^{I}_{\mu}% \ \ \Longrightarrow \ \ F^{\mu}_{\nu} \left( \ 0 , p \ \right) = \delta^{\mu}_{\nu} \\
%%\\
%%\end{array}
%\end{equation}

\subsection{Generalized Lagrangian Specification of the Referential Gradient}
\label{SS:LagrangeSpecRefGrad}

One may always construct a coordinate representation of the referential gradient from the coordinate representation of the flow via definition (\ref{E:FormalRefGradCoords}). However, in typical physical systems of interest, the governing equations describe the evolution of the generating vector field, and the associated flow is derived therefrom. Hence, the following theorem gives the explicit functional dependence of the referential gradient of the flow on the generating vector field. %Thus, it is the properties of the generating field that determine the explicit nature of the referential gradient

\begin{thm}
\emph{(Generalized Lagrangian Specification of the Referential Gradient)} \label{Thm:RefGradLagrangian}
Let $B : \mathbb{M} \rightarrow T \mathbb{M}$ be a smooth, complete generating vector field with associated flow $\phi : \mathbb{R} \times \mathbb{M} \rightarrow \mathbb{M}$. 
%with a smooth, non-singular tensor gradient $\nabla B$, and  
%Given an orbit $\phi_{p} \left( \lambda \right) = \phi \left( \lambda , p \right)$, the referential gradient $F \left( \ \lambda , p \ \right)$ is the holonomy of the connection $\nabla B \bigl( \ \phi_{p} \left( \lambda \right) \ \bigr)$ along the orbit $\phi_{p} \left( \lambda \right) = \phi \left( \lambda , p \right)$; that is 
The referential gradient $F : \overline{I}_{r} \times \overline{\mathcal{B}_{r}} \rightarrow GL \left( 4 , \mathbb{R} \right)$, for $\mathcal{B}_{r} \subseteq \mathbb{M}$ and $I_{r} \subseteq \mathbb{R}$, is the unique solution to the differential equation,
\begin{equation} \label{E:RefGradHolonomy}
\begin{array}{c}
\displaystyle 
\frac{\partial F \left( \ \lambda , p \ \right)}{\partial \lambda} = \nabla_{F \left( \lambda , p \right)} \ B \Bigl( \ \phi \bigl( \lambda , p \bigr) \ \Bigr) \\ 
\\
\displaystyle
F \left( \ 0 , p \ \right) = I \\
\\
%\displaystyle 
%\frac{\partial F \left( \ \lambda , p \ \right)}{\partial \lambda} = F \left( \ \lambda , p \ \right) \cdot \nabla B \Bigl( \ \phi \bigl( \lambda , p \bigr) \ \Bigr) \ & \ \text{with} \ & \ F \left( \ 0 , p \ \right) = I \\
%\displaystyle \frac{\partial}{\partial \lambda} \ F \left( \ \lambda , p \ \right) = \nabla B \bigl( \ \phi_{p} \left( \lambda \right) \ \bigr) F \left( \ \lambda , p \ \right) & \text{with} & \displaystyle F \left( \ 0 , p \ \right) = I \\
\end{array}
\end{equation}

\noindent Where $I$ is the identity. 
Furthermore, in $\mathcal{B}_{r} \subseteq \mathbb{M}$, %if $\mathcal{B}_{r} \subseteq \mathbb{U}_{I}$ is within the domain of a coordinate chart, then 
the solution is given explicitly by the absolutely and uniformly convergent, path-ordered exponential,
\begin{equation} \label{A:PexpReferentialGradient}
\displaystyle
F \left( \ \lambda , p \ \right) = \mathcal{P} \text{exp} \ \int_{0}^{\lambda} d\sigma \ \nabla B \bigl( \ \phi \left( \sigma , p \right) \ \bigr) \\
%F \left( \ \lambda , p \ \right) = \mathcal{P} \text{exp} \biggl[ \ \int_{0}^{\lambda} dt \ \nabla B \bigl( \ \phi \left( t , p \right) \ \bigr) \ \biggr] F \left( \ 0 , p \ \right) \\
%F \left( \ \lambda , p \ \right) = \mathcal{P} \text{exp} \biggl[ \ \int_{0}^{\lambda} dt \ \nabla B \bigl( \ \phi_{p} \left( t \right) \ \bigr) \ \biggr] F \left( \ 0 , p \ \right) \\
\\
\end{equation}
\end{thm}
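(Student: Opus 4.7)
The plan is to establish the ODE (\ref{E:RefGradHolonomy}) by directly differentiating the defining expression of $F$ along the orbit, and then to obtain the explicit path-ordered exponential (\ref{A:PexpReferentialGradient}) via Picard iteration for a linear ODE with smooth coefficients.

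I would begin from the weak form of the definition given in (\ref{E:WeakRefGrad}),
\[
F(\lambda,p)\cdot\hat{h} \;=\; \frac{d}{d\tau}\bigg|_{\tau=0}\phi\bigl(\lambda,\, p + \tau h\bigr),
\]
regarding $F(\lambda,p)\cdot\hat{h}\in T_{\phi(\lambda,p)}\mathbb{M}$ and interpreting the $\partial/\partial\lambda$ appearing in (\ref{E:RefGradHolonomy}) as the covariant derivative along the orbit $\phi_p$. Smoothness of $\phi$ in both arguments, combined with the symmetry-of-mixed-derivatives identity applied to the two-parameter map $(\lambda,\tau)\mapsto\phi(\lambda,p+\tau h)$ (valid for a torsion-free $\nabla$ such as the Levi-Civita connection), permits the exchange of the $\lambda$- and $\tau$-covariant derivatives. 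Invoking the flow equation (\ref{E:MCFlow}) for the inner $\lambda$-derivative then yields
\[
\frac{\partial F(\lambda,p)}{\partial\lambda}\cdot\hat{h} \;=\; \frac{D}{d\tau}\bigg|_{\tau=0} B\bigl(\phi(\lambda, p+\tau h)\bigr).
\]
By the chain rule for covariant derivatives, the right-hand side equals $\nabla B$ at $\phi(\lambda,p)$ contracted against the tangent $\partial_\tau\phi|_{\tau=0} = F(\lambda,p)\cdot\hat{h}$, i.e.\ $\nabla_{F(\lambda,p)\cdot\hat{h}}B$. Since $\hat{h}$ is arbitrary, this is exactly (\ref{E:RefGradHolonomy}), and the initial condition $F(0,p)=I$ is immediate from $\phi(0,p)=p$.

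For the explicit solution, I would view (\ref{E:RefGradHolonomy}) as a linear ODE for the matrix-valued unknown $F$ along the fixed orbit $\phi_p$, with coefficient operator $\mathcal{A}(\sigma)\equiv\nabla B(\phi(\sigma,p))$. Smoothness of $B$ on the compact set $\overline{\phi_p(I_r)}\subset\overline{\mathcal{B}}_r$ gives $M\equiv\sup_{\sigma\in\overline{I}_r}\|\mathcal{A}(\sigma)\|<\infty$. Picard iteration applied to the equivalent Volterra integral equation $F(\lambda,p) = I + \int_0^\lambda d\sigma\,\mathcal{A}(\sigma)\,F(\sigma,p)$ then produces the Dyson series
\[
F(\lambda,p) \;=\; I + \sum_{n=1}^{\infty} \int_0^\lambda d\sigma_1 \int_0^{\sigma_1} d\sigma_2 \,\cdots\, \int_0^{\sigma_{n-1}} d\sigma_n \; \mathcal{A}(\sigma_1)\,\mathcal{A}(\sigma_2)\cdots \mathcal{A}(\sigma_n),
\]
whose $n$-th term is bounded in operator norm by $(M|\lambda|)^n/n!$. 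The Weierstrass M-test then supplies absolute and uniform convergence on $\overline{I}_r\times\overline{\mathcal{B}}_r$, and this series is by construction the path-ordered exponential of (\ref{A:PexpReferentialGradient}). Uniqueness of the solution follows from a standard Gr\"onwall estimate applied to the operator norm of the difference of two putative solutions.

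The main technical obstacle I anticipate is the justification of the exchange of the $\lambda$- and $\tau$-covariant derivatives, which rests on the torsion-free property of $\nabla$; the excerpt only requires ``standard axioms'' for the connection but explicitly permits the Levi-Civita choice (\ref{E:ChristoffelSymbols}), which is torsion-free and therefore legitimizes the interchange. A secondary point requiring care is the passage from the chart-dependent reference-point shift $p\mapsto p + h$ in Definition (\ref{D:RefGrad}) to the intrinsic operator equation on the tangent bundle: this is resolved by confining the argument to the single chart $\mathbb{U}_I\supseteq\overline{\mathcal{B}}_r$ in which the difference quotient in $|h|$ is well-defined, and then checking that the resulting ODE is tensorial and therefore globally meaningful.
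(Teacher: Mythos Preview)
Your construction of the path-ordered exponential is essentially the paper's own: both reformulate (\ref{E:RefGradHolonomy}) as a Volterra integral equation and build the Dyson series by Picard iteration, bounding the $n$-th term by $(M|\lambda|)^n/n!$ to get absolute and uniform convergence. One slip: given the paper's convention $\nabla_{F}B = F\cdot\nabla B$, the integral equation should read $F(\lambda,p)=I+\int_0^\lambda d\sigma\,F(\sigma,p)\,\mathcal{A}(\sigma)$, not $\mathcal{A}(\sigma)\,F(\sigma,p)$; this reverses the ordering in each term of the Dyson series (compare the paper's (\ref{A:OInfIteration}), where the smallest parameter sits on the left).

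Where you genuinely diverge from the paper is in deriving (\ref{E:RefGradHolonomy}) itself. The paper works directly from the difference quotient: it forms $\partial_\lambda\bigl[\phi(\lambda,p+|h|\hat h)-\phi(\lambda,p)\bigr]=B(\phi(\lambda,p+|h|\hat h))-B(\phi(\lambda,p))$, Taylor-expands the right-hand side with integral remainder, and passes to the limit $|h|\to0$. No exchange of covariant derivatives is invoked, and the argument goes through for an arbitrary affine connection. Your route---treating $F\cdot\hat h$ as a Jacobi-type field along $\phi_p$ and swapping $D/d\lambda$ with $D/d\tau$---is more geometric and shorter, but the swap is precisely the torsion-free identity. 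Since the paper deliberately leaves the connection general (``we make no assumptions regarding the metric or the connection'') and later in Theorem~\ref{Thm:RefGradDynamics} keeps the torsion term explicit, your derivation proves a strictly narrower statement than the one claimed. Your own caveat about Levi--Civita is therefore not a technicality but an actual restriction of scope; to match the paper's generality you would need either to carry the torsion through the commutation (picking up a $T(B,F\cdot\hat h)$ term and then arguing it is absorbed by the paper's interpretation of $\partial/\partial\lambda$) or to fall back on the direct limit argument.
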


\begin{rem}
The system of differential equations (\ref{E:MCFlow}) and (\ref{E:RefGradHolonomy}) together with their reference conditions are often referred to as the \textit{system of equations of variations} for the flow (see e.g., Ref. \cite{Arnold1992} $\S$ 32). Moreover, equations (\ref{E:RefGradHolonomy}) and (\ref{A:PexpReferentialGradient}) constitute a generalized \textit{Lagrangian specification of the referential gradient} (with respect to the connectivity parameter). %as a function of the generating vector field, and is invariant for any smooth generating vector field. 
In $\S$ \ref{SS:RefGradDynamics}, we construct an \textit{Eulerian specification of the referential gradient} equivalent to equations (\ref{E:RefGradHolonomy}).
%We call the exponential map $\mathcal{P} \text{exp} : \mathfrak{g} \rightarrow GL$ solution (\ref{A:PexpReferentialGradient}), the \textit{Lagrangian form of the referential gradient} since the form of equations (\ref{E:RefGradHolonomy}) is invariant for any smooth generating vector field. 
\end{rem}

%If the generating vector field is smooth and complete, the referential gradient defines an element of the Lie group $GL \left( 4 , \mathbb{R} \right)$ connected to the identity via a path whose tangent vector at $\lambda$ is given by the corresponding (free) Lie-algebra element $\nabla B \in \mathfrak{g}$ evaluated at the point $\phi \left( \lambda , p \right)$. Hence, the preceding theorem is a particular construction of a well-known theorem regarding the exponentiation of some element of a Lie algebra to obtain an element in the corresponding Lie group (see Ref. \cite{Gilmore1974}, Ch. 4. $\S$ VI. 2): \textit{Every element of a compact Lie group $G$ lies on a one-dimensional abelian subgroup of G and can be obtained by exponentiating some element of the Lie algebra}.

\begin{proof}
We begin by deriving equation (\ref{E:RefGradHolonomy}).
%$\\$

%As above, let $\mathcal{B}_{r} \left( p \right) \subset \mathbb{M}$ be an open ball of radius $r > 0$ about the reference point $p \in \mathbb{M}$, and let $I_{r}$ be a bounded open interval containing $0$ and $\lambda$ with $\overline{I}_{r} \subset I$, such that $\phi \left( \lambda , q \right) \in \overline{\mathcal{B}}_{r} \left( p \right)$ for all $q \in \overline{\mathcal{B}}_{r} \left( p \right)$ and any $\lambda \in \overline{I}_{r}$. For any $q \in \overline{\mathcal{B}}_{r} \left( p \right)$ with $q \ne p$, denote the non-zero vector $h = \vert h \vert \ \hat{h}$ such that $q = p + \vert h \vert \ \hat{h}$, where the direction vector $\hat{h} \in \mathcal{B}_{r} \left( 0 \right) \subset T_{p} \mathbb{M}$. 
%
%$q = \phi \left( 0 , q \right) \in \overline{\mathcal{B}}_{r} \left( p \right)$ and $p = \phi \left( 0 , p \right) \in \overline{\mathcal{B}}_{r} \left( p \right)$, and 

%We will show, for any $\lambda \in \overline{I}_{r}$, equation (\ref{E:RefGradHolonomy}) follows in the limit $\vert h \vert \rightarrow 0$ from the flow difference for separate reference points. 
Using the fundamental relation between the connection flow and generating field, equations (\ref{E:MCFlow}), for any $\lambda \in \overline{I}_{r}$, we construct the flow difference,
\begin{equation} \label{E:MCFlowDifference}
%\begin{array}{c}
\displaystyle 
\frac{\partial}{\partial \lambda} \biggl( \ \phi \bigl( \lambda , p + \vert h \vert \ \hat{h} \bigr) - \phi \bigl( \lambda , p \bigr) \ \biggr) = B \Bigl( \ \phi \bigl( \lambda , p + \vert h \vert \ \hat{h} \bigr) \ \Bigr) - B \Bigl( \ \phi \bigl( \lambda , p \bigr) \ \Bigr) \\
%\\
%\begin{split}
%\displaystyle
%\lim_{\vert h \vert \rightarrow 0} \ \frac{1}{\vert h \vert} & \ \frac{\partial}{\partial \lambda} \biggl( \ \phi \bigl( \lambda , p + \vert h \vert \ \hat{h} \bigr) - \phi \bigl( \lambda , p \bigr) \ \biggr) \\
%\\
%\displaystyle
%&= \lim_{\vert h \vert \rightarrow 0} \ \frac{1}{\vert h \vert} \ \biggl[ \ B \Bigl( \ \phi \bigl( \lambda , p + \vert h \vert \ \hat{h} \bigr) \ \Bigr) - B \Bigl( \ \phi \bigl( \lambda , p \bigr) \ \Bigr) \ \biggr] \\
%\\
%\end{split}
%\displaystyle
%\phi \left( 0 , p + h \right) - \phi \left( 0 , p \right) = h \\
%\end{array}
\end{equation}

\noindent Equation (\ref{E:RefGradHolonomy}) follows upon dividing equation (\ref{E:MCFlowDifference}) by $\vert h \vert$ and in the limit $\vert h \vert \rightarrow 0$.

Consider the LHS of equation (\ref{E:MCFlowDifference}). The vector $h$ does not depend on the connectivity parameter $\lambda$. Using definition (\ref{E:FormalRefGrad}), it follows immediately,% that the LHS of equation (\ref{E:MCFlowDifference}) is simply,
\begin{equation} \label{E:MCFlowDifferenceLHS}
%begin{array}{c}
%\displaystyle 
%\lim_{\vert h \vert \rightarrow 0} \ \frac{1}{\vert h \vert} \ \frac{\partial}{\partial \lambda} \biggl( \ \phi \bigl( \lambda , p + \vert h \vert \ \hat{h} \bigr) - \phi \bigl( \lambda , p \bigr) \ \biggr) = \frac{\partial}{\partial \lambda} \ \biggl[ \ \lim_{\vert h \vert \rightarrow 0} \ \frac{1}{\vert h \vert} \ \biggl( \ \phi \bigl( \lambda , p + \vert h \vert \ \hat{h} \bigr) - \phi \bigl( \lambda , p \bigr) \ \biggr) \ \biggr] \\
%\\
%\displaystyle 
%\lim_{\vert h \vert \rightarrow 0} \ \frac{1}{\vert h \vert} \ \frac{\partial}{\partial \lambda} \biggl( \ \phi \bigl( \lambda , p + \vert h \vert \ \hat{h} \bigr) - \phi \bigl( \lambda , p \bigr) \ \biggr) = \frac{\partial}{\partial \lambda} \ \biggl[ \ F \left( \ \lambda , p \ \right) \cdot \hat{h} \ \biggr] \\
%\\
\displaystyle 
\lim_{\vert h \vert \rightarrow 0} \ \frac{1}{\vert h \vert} \ \frac{\partial}{\partial \lambda} \biggl( \ \phi \bigl( \lambda , p + \vert h \vert \ \hat{h} \bigr) - \phi \bigl( \lambda , p \bigr) \ \biggr) = \frac{\partial F \left( \ \lambda , p \ \right)}{\partial \lambda} \cdot \hat{h} \\
\\
%\end{array}
\end{equation}

Consider the RHS of equation (\ref{E:MCFlowDifference}). Let $y = \phi \bigl( \lambda , p \bigr) \in \overline{\mathcal{B}}_{r} \left( p \right)$ and denote the non-zero shift vector $w = \vert w \vert \ \hat{w}$ such that $y + \vert w \vert \ \hat{w} = \phi \bigl( \lambda , p + \vert h \vert \ \hat{h} \bigr)$, where the shift direction vector $\hat{w} \in T_{y} \mathbb{M}$.

By assumption, the generating vector field $B \left( p \right)$ is smooth for all $p \in \mathbb{M}$, thus we may Taylor expand the RHS of equation (\ref{E:MCFlowDifference}) (see Appendix \ref{A:GeneralizedFirstOrderTaylorExpansion}). Noting the identity $\nabla_{w} = i_{w} \circ \nabla$, the RHS of equation (\ref{E:MCFlowDifference}) may be written,
\begin{equation} \label{E:TaylorGenField}
\begin{split}
\displaystyle
B \bigl( \ y + \vert w \vert \ \hat{w} \ \bigr) & - B \bigl( y \bigr) = i_{w} \circ \nabla B \bigl( y \bigr) \\
\displaystyle
&+ i_{w} \circ \ \int^{1}_{0} ds \ \biggl( \ \nabla B \bigl( \ y + s \ \vert w \vert \ \hat{w} \ \bigr) - \nabla B \bigl( y \bigr) \ \biggr) \\
\end{split}
\end{equation}

\noindent Hence,
\begin{equation} \label{E:MCFlowDifferenceRHS1}
\begin{array}{l}
\begin{split}
%\displaystyle
%\lim_{\vert h \vert \rightarrow 0} & \ \frac{1}{\vert h \vert} \ \biggl[ \ B \Bigl( \ \phi \bigl( \lambda , p + \vert h \vert \ \hat{h} \bigr) \ \Bigr) - B \Bigl( \ \phi \bigl( \lambda , p \bigr) \ \Bigr) \ \biggr] \\
%\displaystyle
%&= \lim_{\vert h \vert \rightarrow 0} \ \frac{1}{\vert h \vert} \ \biggl[ \ B \bigl( \ y + \vert w \vert \ \hat{w} \ \bigr) - B \bigl( y \bigr) \ \biggr] \\
%\\
%\displaystyle
%&= \lim_{\vert h \vert \rightarrow 0} \ \frac{1}{\vert h \vert} \ \biggl[ \ w \cdot \nabla B \bigl( y \bigr) + w \cdot \int^{1}_{0} ds \ \biggl( \ \nabla B \bigl( \ y + s \ \vert w \vert \ \hat{w} \ \bigr) - \nabla B \bigl( y \bigr) \ \biggr) \ \biggr] \\
%\\
\displaystyle
\lim_{\vert h \vert \rightarrow 0} \ \frac{1}{\vert h \vert} &\ \biggl[ \ B \Bigl( \ \phi \bigl( \lambda , p + \vert h \vert \ \hat{h} \bigr) \ \Bigr) - B \Bigl( \ \phi \bigl( \lambda , p \bigr) \ \Bigr) \ \biggr] \\
\displaystyle
&= \lim_{\vert h \vert \rightarrow 0} \ \frac{w}{\vert h \vert} \cdot \nabla B \bigl( y \bigr) \\
\displaystyle
&+ \lim_{\vert h \vert \rightarrow 0} \ \frac{w}{\vert h \vert} \cdot \int^{1}_{0} ds \ \biggl( \ \nabla B \bigl( \ y + s \ \vert w \vert \ \hat{w} \ \bigr) - \nabla B \bigl( y \bigr) \ \biggr) \\
\end{split}
\end{array}
\end{equation}

\noindent Recall, $w = \phi \bigl( \lambda , p + \vert h \vert \ \hat{h} \bigr) - \phi \bigl( \lambda , p \bigr)$; by definition (\ref{E:FormalRefGrad}),
\begin{equation} \label{E:1stTerm}
%\begin{array}{c}
%\displaystyle
%\lim_{\vert h \vert \rightarrow 0} \ \frac{w}{\vert h \vert} = \lim_{\vert h \vert \rightarrow 0} \ \frac{1}{\vert h \vert} \  \biggl( \ \phi \bigl( \lambda , p + \vert h \vert \ \hat{h} \bigr) - \phi \bigl( \lambda , p \bigr) \ \biggr) \\
%\\
\displaystyle
\lim_{\vert h \vert \rightarrow 0} \ \frac{w}{\vert h \vert} = F \left( \ \lambda , p \ \right) \cdot \hat{h} \\
\\
%\end{array}
\end{equation}

It remains to show the integral remainder goes to zero in the limit $\vert h \vert \rightarrow 0$. It suffices to show $\vert w \vert \rightarrow 0$ in the limit $\vert h \vert \rightarrow 0$. By assumption, the generating vector field is smooth, then by standard theorems of the smoothness of ODE solutions, so also is the associated flow smooth (e.g., Ref. \cite{Lee2006}, Theorem 17.19); thus,
\begin{equation} \label{E:LimitW}
\displaystyle
\lim_{\vert h \vert \rightarrow 0} \vert w \vert = \lim_{\vert h \vert \rightarrow 0} \ \Big \vert \ \phi \bigl( \lambda , p + \vert h \vert \ \hat{h} \bigr) - \phi \bigl( \lambda , p \bigr) \ \Big \vert = \Big \vert \ \phi \bigl( \lambda , p \bigr) - \phi \bigl( \lambda , p \bigr) \ \Big \vert = 0 \\
\end{equation}

\noindent The integral remainder term in equation (\ref{E:MCFlowDifferenceRHS1}) is zero since $\displaystyle \lim_{\vert h \vert \rightarrow 0} \left( \ y + \vert w \vert \ \hat{w} \ \right) = y$, and,
\begin{equation} \label{E:IntergralRemainder}
%\begin{array}{c}
\begin{split}
\displaystyle
\lim_{\vert h \vert \rightarrow 0} &\ \frac{w}{\vert h \vert} \cdot \int^{1}_{0} ds \ \biggl( \ \nabla B \bigl( \ y + s \ \vert w \vert \ \hat{w} \ \bigr) - \nabla B \bigl( y \bigr) \ \biggr) \ \biggr] \\
\displaystyle 
&= \Bigl( \ F \left( \ \lambda , p \ \right) \cdot \hat{h} \ \Bigr) \cdot \int^{1}_{0} ds \ \biggl( \ \nabla B \bigl( y \bigr) - \nabla B \bigl( y \bigr) \ \biggr) = 0 \\
\end{split}
%\\
%\displaystyle
%\lim_{\vert h \vert \rightarrow 0} \ \frac{w}{\vert h \vert} \cdot \int^{1}_{0} ds \ \biggl( \ \nabla B \bigl( \ y + s \ \vert w \vert \ \hat{w} \ \bigr) - \nabla B \bigl( y \bigr) \ \biggr) \ \biggr] = 0 \\
%\end{array}
\end{equation}

\noindent Hence, the RHS of equation (\ref{E:MCFlowDifference}) is,
\begin{equation} \label{E:MCFlowDifferenceRHS}
\begin{split}
\displaystyle
\lim_{\vert h \vert \rightarrow 0} \ \frac{1}{\vert h \vert} \ \biggl[ \ B \Bigl( \ \phi &\bigl( \lambda , p + \vert h \vert \ \hat{h} \bigr) \ \Bigr) - B \Bigl( \ \phi \bigl( \lambda , p \bigr) \ \Bigr) \ \biggr] \\
\displaystyle
&= \Bigl( \ F \left( \ \lambda , p \ \right) \cdot \hat{h} \ \Bigr) \cdot \nabla B \Bigl( \ \phi \bigl( \lambda , p \bigr) \ \Bigr) \\
\end{split}
\end{equation}

Equating the LHS with the RHS, respectively, equations (\ref{E:MCFlowDifferenceLHS}) and (\ref{E:MCFlowDifferenceRHS}),
\begin{equation} \label{E:DirectionalRefGradHolonomyProof}
%\begin{array}{c}
%\displaystyle
%\lim_{\vert h \vert \rightarrow 0} \ \frac{1}{\vert h \vert} \ \frac{\partial}{\partial \lambda} \biggl( \ \phi \bigl( \lambda , p + \vert h \vert \ \hat{h} \bigr) - \phi \bigl( \lambda , p \bigr) \ \biggr) = \lim_{\vert h \vert \rightarrow 0} \ \frac{1}{\vert h \vert} \ \biggl[ \ B \Bigl( \ \phi \bigl( \lambda , p + \vert h \vert \ \hat{h} \bigr) \ \Bigr) - B \Bigl( \ \phi \bigl( \lambda , p \bigr) \ \Bigr) \ \biggr] \\
%\\
%\displaystyle
%\frac{\partial F \left( \ \lambda , p \ \right)}{\partial \lambda} \cdot \hat{h} = \Bigl( \ F \left( \ \lambda , p \ \right) \cdot \hat{h} \ \Bigr) \cdot \nabla B \Bigl( \ \phi \bigl( \lambda , p \bigr) \ \Bigr) \\
\displaystyle
\frac{\partial}{\partial \lambda} \ \Bigl( \ F \left( \ \lambda , p \ \right) \cdot \hat{h} \ \Bigr) = \Bigl( \ F \left( \ \lambda , p \ \right) \cdot \hat{h} \ \Bigr) \cdot \nabla B \Bigl( \ \phi \bigl( \lambda , p \bigr) \ \Bigr) \\
%\end{array}
\end{equation}

\noindent Finally, since the reference shift vector $h$ is arbitrary, and noting the contraction map $\nabla_{w} = i_{w} \circ \nabla$ remains unambiguous for $w \in T^{1}_{1} \left( \mathbb{M} \right)$ fields, the differential equation (\ref{E:RefGradHolonomy}) follows immediately,
\begin{equation} \nonumber %\label{E:RefGradHolonomyProof}
%\displaystyle
%\frac{\partial F \left( \ \lambda , p \ \right)}{\partial \lambda} = F \left( \ \lambda , p \ \right) \cdot \nabla B \Bigl( \ \phi \bigl( \lambda , p \bigr) \ \Bigr) \\
%\\
\displaystyle
\frac{\partial F \left( \ \lambda , p \ \right)}{\partial \lambda} = \nabla_{F \left( \lambda , p \right)} \ B \Bigl( \ \phi \bigl( \lambda , p \bigr) \ \Bigr) \\
\\
\end{equation}

\noindent The reference condition $F \left( \ 0 , p \ \right) = I$ is given by definition (\ref{E:FormalRefGrad}) at $\lambda = 0$.

%$\\$
We proceed to construct the solution (\ref{A:PexpReferentialGradient}).
%$\\$

By assumption, the generating vector field $B \left( p \right)$ is smooth and complete for all $p \in \mathbb{M}$, and so therefore the gradient $\nabla B \left( p \right)$ is smooth and everywhere non-singular for all $p \in \mathbb{M}$. Thus, the system of equations (\ref{E:RefGradHolonomy}) is a well-posed initial value problem and standard theorems for existence, uniqueness, and extension for ordinary differential equations apply (see e.g., Refs. \cite{HirschSmale1974,Arnold1992,Taylor1996}). %Hence, the solution (\ref{A:PexpReferentialGradient}) exists, is unique, and maximal for $p \in \mathcal{B}_{r} \subset \mathbb{M}$ and $\lambda \in I_{r} \subset \mathbb{R}$.

To construct solution (\ref{A:PexpReferentialGradient}), we consider the integral formulation of the initial value problem (\ref{E:RefGradHolonomy}); by the fundamental theorem of calculus (see e.g., Ref. \cite{Arnold1992} $\S$ 1.4), equation (\ref{E:RefGradHolonomy}) is equivalent to the integral equation,
\begin{equation} \label{E:IntegralNearbyFLLinearization}
\displaystyle
F \left( \ \lambda , p \ \right) = F \left( \ 0 , p \ \right) + \int_{0}^{\lambda} d\sigma \ F \left( \ \sigma , p \ \right) \cdot \nabla B \Bigl( \ \phi \bigl( \sigma , p \bigr) \ \Bigr)
\end{equation}

\noindent Where $F \left( \ 0 , p \ \right) = I$ is the identity.

\begin{rem}
For fixed $\lambda \in I_{r} \subset \mathbb{R}$, equation (\ref{E:IntegralNearbyFLLinearization}) is a linear Fredholm equation of the second kind. Relaxing the fixed $\lambda$ condition, allowing the upper-limit of integration to vary over all $\lambda \in \mathbb{R}$, equation (\ref{E:IntegralNearbyFLLinearization}) is a linear Volterra equation of the second kind. These integral equations are well known, and the convergence, existence, and uniqueness properties of the solutions are well defined (see e.g., Ref. \cite{Pogorzelski1966}). 
\end{rem}

We proceed to show the existence of a solution to equations (\ref{E:IntegralNearbyFLLinearization}) via the resolvent kernel method. For a non-singular kernel $\nabla B \bigl( \ \phi \left( \lambda , p \right) \ \bigr)$, equation (\ref{E:IntegralNearbyFLLinearization}) may be solved directly by a method of iteration, such that if there exists a set of functions $F \left( \ \lambda , p \ \right)$ that satisfy equations (\ref{E:IntegralNearbyFLLinearization}), then these functions also satisfy the iterated set of equations. Substituting the RHS of equation (\ref{E:IntegralNearbyFLLinearization}) into the integrand and expanding, we get the two-fold iteration equation,
\begin{equation} \label{A:ImplicitDefGrad1}
\begin{array}{l}
%\begin{split}
%\displaystyle
%F & \left( \ \lambda , p \ \right) = F \left( \ 0 , p \ \right) \\
%\displaystyle
%&+ \int_{0}^{\lambda} d\sigma_{0} \ \biggl[ \ F \left( \ 0 , p \ \right) + \int_{0}^{\sigma_{0}} d\sigma_{1} \ F \left( \ \sigma_{1} , p \ \right) \cdot \nabla B \Bigl( \ \phi \bigl( \sigma_{1} , p \bigr) \ \Bigr) \ \biggr] \cdot \nabla B \Bigl( \ \phi \bigl( \sigma_{0} , p \bigr) \ \Bigr)\\
%\\
%\end{split}
%\\
\begin{split}
\displaystyle
&F \left( \ \lambda , p \ \right) = F \left( \ 0 , p \ \right) + \int_{0}^{\lambda} d\sigma_{0} \ F \left( \ 0 , p \ \right) \cdot \nabla B \Bigl( \ \phi \bigl( \sigma_{0} , p \bigr) \ \Bigr) \\
\displaystyle
&+ \int_{0}^{\lambda} d\sigma_{0} \ \int_{0}^{\sigma_{0}} d\sigma_{1} \ F \left( \ \sigma_{1} , p \ \right) \cdot \nabla B \Bigl( \ \phi \bigl( \sigma_{1} , p \bigr) \ \Bigr) \cdot \nabla B \Bigl( \ \phi \bigl( \sigma_{0} , p \bigr) \ \Bigr) \\
\end{split}
\end{array}
\end{equation}

\noindent For ease of notation, considering $p$ as a fixed parameter, we define,
\begin{equation} \label{A:Order2IterationKernels}
\displaystyle
N^{\left( 0 \right)} \left( \sigma \right) \equiv \nabla B \Bigl( \ \phi \bigl( \sigma , p \bigr) \ \Bigr) \\
\\
\end{equation}

\noindent Hence, we may write the 2-fold iteration equation (\ref{A:ImplicitDefGrad1}) as,
\begin{equation} \label{A:2FoldIteration}
\begin{split}
\displaystyle
F \left( \ \lambda , p \ \right) &= F \left( \ 0 , p \ \right) + F \left( \ 0 , p \ \right) \cdot \int_{0}^{\lambda} d\sigma_{0} \ N^{\left( 0 \right)} \left( \sigma_{0} \right) \\
\displaystyle
&+ \int_{0}^{\lambda} d\sigma_{0} \int_{0}^{\sigma_{0}} d\sigma_{1} \ F \left( \ \sigma_{1} , p \ \right) \cdot N^{\left( 0 \right)} \left( \sigma_{1} \right) \cdot N^{\left( 0 \right)} \left( \sigma_{0} \right) \\
\end{split}
\end{equation}

\noindent Repeating this procedure $n$-times, and collecting terms, we obtain the $n$-fold iterated equation,
\begin{equation} \label{A:NFoldIteration}
\begin{array}{l}
\displaystyle
F \left( \ \lambda , p \ \right) = F \left( \ 0 , p \ \right) \\
\displaystyle
+ \ F \left( \ 0 , p \ \right) \cdot \int_{0}^{\lambda} d\sigma_{0} \ \biggl[ \ N^{\left( 0 \right)} \left( \sigma_{0} \right) + N^{\left( 1 \right)} \left( \sigma_{0} \right) + \dotsi + N^{\left( n \right)} \left( \sigma_{0} \right) \ \biggr] \\
\begin{split}
\displaystyle
+ \int_{0}^{\lambda} d\sigma_{0} \int_{0}^{\sigma_{0}} d\sigma_{1} \ \dotsi \int_{0}^{\sigma_{n}} d\sigma_{n+1} \ F &\left( \ \sigma_{n+1} , p \ \right) \cdot N^{\left( 0 \right)} \left( \sigma_{n+1} \right) \\
\displaystyle
& \dotsi N^{\left( 0 \right)} \left( \sigma_{1} \right) \cdot N^{\left( 0 \right)} \left( \sigma_{0} \right) \\
\end{split}
\end{array}
\end{equation}

\noindent Where the $n^{th}$ term ($n \ge 1$) is defined by the recursive formula,
\begin{equation} \label{A:NthTermRecurrsionRelation}
%\begin{array}{c}
\displaystyle
N^{\left( n \right)} \left( \sigma \right) \equiv \int_{0}^{\sigma} d\chi \ N^{\left( n-1 \right)} \left( \chi \right) \cdot N^{\left( 0 \right)} \left( \sigma \right) \\
%\\
%\\
%\displaystyle
%N^{\left( 1 \right)} \left( \sigma_{0} \right) \equiv \int_{0}^{\sigma_{0}} d\chi \ N^{\left( 0 \right)} \left( \chi \right) \cdot N^{\left( 0 \right)} \left( \sigma_{0} \right) \\
%\\
%\\
%\displaystyle
%N^{\left( 2 \right)} \left( \sigma_{0} \right) \equiv \int_{0}^{\sigma_{0}} d\chi \ N^{\left( 1 \right)} \left( \chi \right) \cdot N^{\left( 0 \right)} \left( \sigma_{0} \right) \\
%\\
%\displaystyle
%N^{\left( 2 \right)} \left( \sigma_{0} \right) \equiv \int_{0}^{\sigma_{0}} d\sigma_{1} \ \biggl[ \ \int_{0}^{\sigma_{1}} d\chi \ N^{\left( 0 \right)} \left( \chi \right) \cdot N^{\left( 0 \right)} \left( \sigma_{1} \right) \ \biggr] \cdot N^{\left( 0 \right)} \left( \sigma_{0} \right) \\ 
%\\
%\\
%\end{array}
\end{equation}

\noindent Thus, as $n \rightarrow \infty$, there exist a set of functions, $F \left( \ \lambda , p \ \right)$, that satisfy equations (\ref{E:IntegralNearbyFLLinearization}), and are given by the infinite series,
\begin{equation} \label{A:DefGradGeneralSolution}
\displaystyle
F \left( \ \lambda , p \ \right) = F \left( \ 0 , p \ \right) \cdot \biggl[ \ I + \int_{0}^{\lambda} d\sigma \ \sum_{n = 0}^{\infty} \ N^{\left( n \right)} \left( \sigma \right) \ \biggr] \\
\end{equation}

\noindent Where each term $N^{\left( n \right)} \left( \sigma \right)$ is given by the recursion relations (\ref{A:NthTermRecurrsionRelation}) and (\ref{A:Order2IterationKernels}).

We proceed to show the $n$-fold iterated equation (\ref{A:NFoldIteration}) converges absolutely and uniformly for every $n$, and therefore so does formula (\ref{A:DefGradGeneralSolution}) as $n \rightarrow \infty$. 

To show convergence of the first $n$-terms in equation (\ref{A:NFoldIteration}), let $M = \text{sup} \bigl( \ \big \vert \ N^{\left( 0 \right)} \left( \sigma \right) \big \vert \ \bigr)$ for all $\sigma \in I_{r}$; $M$ exists since, by assumption, $B$ is smooth. Then, by the recursive formula (\ref{A:NthTermRecurrsionRelation}), the absolute value of each successive iterated kernel is also bounded,
\begin{equation} \label{A:BoundedKernels}
%\begin{array}{c}
%\displaystyle
%\big \vert \ N^{\left( 1 \right)} \left( \sigma \right) \big \vert \leq M^{2} \ \vert \lambda \vert \\
%\\
%\displaystyle
%\big \vert \ N^{\left( 2 \right)} \left( \sigma \right) \big \vert \leq \frac{1}{2!} \ M^{3} \ \vert \lambda \vert^{2} \\
%%
%\displaystyle \vdots
%\\
\displaystyle
\big \vert \ N^{\left( n \right)} \left( \sigma \right) \big \vert \leq \frac{1}{n!} \ M^{n+1} \ \vert \lambda \vert^{n} \\
%\end{array}
\end{equation}

\noindent Where $\vert \lambda \vert$ is the (non-negative) total length of the integration path. 

Formula (\ref{A:BoundedKernels}) is the $n^{th}$ term of the absolutely and uniformly convergent exponential series for any (finite) $\vert \lambda \vert$,
\begin{equation} \label{A:NthKernelConvergence}
\displaystyle
M \ \text{exp} \Bigl[ \ M \ \vert \lambda \vert \ \Bigr] = M \ \sum_{n = 0}^{\infty} \ \frac{1}{n!} \ \Bigl[ \ M \ \vert \lambda \vert \ \Bigr]^{n} 
\end{equation}
\noindent Therefore, the first $n$-terms in equation (\ref{A:NFoldIteration}) are bounded by the absolutely and uniformly convergent series (\ref{A:NthKernelConvergence}) for every $\lambda \in I_{r}$.

To show the final iteration term is bounded, let $m = \text{sup} \Bigl( \ \big \vert \ F \left( \ \sigma , p \ \right) \big \vert \ \Bigr)$ for all $\sigma \in I_{r}$ be the absolute upper bound of the referential gradient function $F \left( \ \sigma , p \ \right)$ everywhere along the integration path. $m$ exists as long as the gradient of the generating vector field remains smooth and non-singular for any $\sigma \in I_{r}$. Then, by the relations (\ref{A:BoundedKernels}), the final iteration term of the $n$-fold iterated equations (\ref{A:NFoldIteration}) satisfies the inequality,
\begin{equation} \label{A:NthTermIterationConvergence}
\begin{array}{r}
\begin{split}
\displaystyle
\bigg \vert \ \int_{0}^{\lambda} d\sigma_{0} \int_{0}^{\sigma_{0}} d\sigma_{1} \dotsi \int_{0}^{\sigma_{n}} d\sigma_{n+1} \ F &\left( \ \sigma_{n+1} , p \ \right) \cdot \ N^{\left( 0 \right)} \left( \sigma_{n+1} \right) \\
\displaystyle
&\dotsi N^{\left( 0 \right)} \left( \sigma_{1} \right) \cdot N^{\left( 0 \right)} \left( \sigma_{0} \right) \ \bigg \vert \\
\\
\end{split}
\\
\displaystyle
\leq \frac{m}{\left( n+1 \right)!} \ \Bigl[ \ M \ \vert \lambda \vert \ \Bigr]^{n+1} \\
\end{array}
\end{equation}

Formula (\ref{A:NthTermIterationConvergence}) is the $\left( n+1 \right)^{th}$ term of the same absolutely and uniformly convergent exponential series (\ref{A:NthKernelConvergence}) with the coefficient $m$,
\begin{equation} \label{A:IterationSolutionConvergence}
\displaystyle
m \ \text{exp} \Bigl[ \ M \ \vert \lambda \vert \ \Bigr] = m \ \sum_{n = 0}^{\infty} \ \frac{1}{n!} \ \Bigl[ \ M \vert \lambda \vert \ \Bigr]^{n} 
\end{equation}

\noindent Thus, the final term in the $n$-fold iterated equation (\ref{A:NFoldIteration}) tends to zero as $n \rightarrow \infty$, for any finite $\lambda \in I_{r}$. Consequently, the $n$-fold iterated equation (\ref{A:NFoldIteration}) converges absolutely and uniformly to the functions $F \left( \ \lambda , p \ \right)$, given by equations (\ref{A:DefGradGeneralSolution}).

%$\\$
We proceed to show that the functions $F \left( \ \lambda , p \ \right)$, given by the absolutely and uniformly convergent series of equations (\ref{A:DefGradGeneralSolution}), indeed satisfy equations (\ref{E:IntegralNearbyFLLinearization}).
%$\\$

%Using definition (\ref{A:Order2IterationKernels}) and equations (\ref{A:DefGradGeneralSolution}) in equations (\ref{E:IntegralNearbyFLLinearization}),
Substituting equations (\ref{A:DefGradGeneralSolution}) and definition (\ref{A:Order2IterationKernels}) into equations (\ref{E:IntegralNearbyFLLinearization}),
\begin{equation} \label{A:SatisfySolution}
%\begin{array}{c}
%\displaystyle
%F \left( \ \lambda , p \ \right) = F \left( \ 0 , p \ \right) + \int_{0}^{\lambda} d\sigma \ F \left( \ \sigma , p \ \right) \cdot N^{\left( 0 \right)} \left( \sigma \right) \\
%\\
%\displaystyle
%F \left( \ \lambda , p \ \right) = F \left( \ 0 , p \ \right) + \int_{0}^{\lambda} d\sigma \ F \left( \ 0 , p \ \right) \cdot \biggl[ \ I + \int_{0}^{\sigma} d\chi \ \sum_{n = 0}^{\infty} \ N^{\left( n \right)} \left( \chi \right) \ \biggr] \cdot N^{\left( 0 \right)} \left( \sigma \right) \\
%\\
\begin{split}
\displaystyle
F \left( \ \lambda , p \ \right) &= F \left( \ 0 , p \ \right) + F \left( \ 0 , p \ \right) \cdot \int_{0}^{\lambda} d\sigma \ N^{\left( 0 \right)} \left( \sigma \right) \\
\displaystyle
&+ F \left( \ 0 , p \ \right) \cdot \int_{0}^{\lambda} d\sigma \int_{0}^{\sigma} d\chi \ \sum_{n = 0}^{\infty} \ N^{\left( n \right)} \left( \chi \right) \cdot N^{\left( 0 \right)} \left( \sigma \right) \\
\end{split}
%\end{array}
\end{equation}

\noindent Expanding the third term on the RHS of equation (\ref{A:SatisfySolution}),
\begin{equation} \label{A:ExpandedThirdTerm1}
\begin{array}{l}
\displaystyle
F \left( \ 0 , p \ \right) \cdot \int_{0}^{\lambda} d\sigma \int_{0}^{\sigma} d\chi \ \sum_{n = 0}^{\infty} \ N^{\left( n \right)} \left( \chi \right) \cdot N^{\left( 0 \right)} \left( \sigma \right) \\
\begin{split}
\displaystyle
= F \left( \ 0 , p \ \right) \cdot \int_{0}^{\lambda} d\sigma \int_{0}^{\sigma} d\chi \ \biggl[ \ N^{\left( 0 \right)} &\left( \chi \right) + N^{\left( 1 \right)} \left( \chi \right) + \\
\displaystyle
&\dotsi + N^{\left( n \right)} \left( \chi \right) + \dotsi \ \biggr] \cdot N^{\left( 0 \right)} \left( \sigma \right) \\
\end{split}
\end{array}
\end{equation}

\noindent Noting the recursion relation (\ref{A:NthTermRecurrsionRelation}) for successive terms, (\ref{A:ExpandedThirdTerm1}) becomes,
\begin{equation} \label{A:ExpandedThirdTerm2}
\begin{split}
\displaystyle
&F \left( \ 0 , p \ \right) \cdot \ \int_{0}^{\lambda} d\sigma \int_{0}^{\sigma} d\chi \ \sum_{n = 0}^{\infty} \ N^{\left( n \right)} \left( \chi \right) \cdot N^{\left( 0 \right)} \left( \sigma \right) \\
\displaystyle
& = F \left( \ 0 , p \ \right) \cdot \int_{0}^{\lambda} d\sigma \ \biggl[ \ N^{\left( 1 \right)} \left( \sigma \right) + N^{\left( 2 \right)} \left( \sigma \right) + \dotsi + N^{\left( n+1 \right)} \left( \sigma \right) + \dotsi \ \biggr] \\
\end{split}
\end{equation}

\noindent Substituting (\ref{A:ExpandedThirdTerm2}) into (\ref{A:SatisfySolution}),
\begin{equation}
\begin{split}
\displaystyle
&F \left( \ \lambda , p \ \right) = F \left( \ 0 , p \ \right) + F \left( \ 0 , p \ \right) \cdot \int_{0}^{\lambda} d\sigma \ N^{\left( 0 \right)} \left( \sigma \right) \\
\displaystyle
&+ F \left( \ 0 , p \ \right) \cdot \int_{0}^{\lambda} d\sigma \ \biggl[ \ N^{\left( 1 \right)} \left( \sigma \right) + N^{\left( 2 \right)} \left( \sigma \right) + \dotsi + N^{\left( n+1 \right)} \left( \sigma \right) + \dotsi \ \biggr] \\
\end{split}
\end{equation}

\noindent Collecting terms, we recover equations (\ref{A:DefGradGeneralSolution}),
\begin{equation} \nonumber
\displaystyle
F \left( \ \lambda , p \ \right) = F \left( \ 0 , p \ \right) \cdot \biggl[ \ I + \int_{0}^{\lambda} d\sigma \ \sum_{n = 0}^{\infty} \ N^{\left( n \right)} \left( \sigma \right) \ \biggr] \\
\end{equation}

\noindent Thus, equations (\ref{A:DefGradGeneralSolution}), with recursion relations (\ref{A:NthTermRecurrsionRelation}) and (\ref{A:Order2IterationKernels}), satisfy equations (\ref{E:IntegralNearbyFLLinearization}).

Finally, it remains to be shown the iterated solution (\ref{A:DefGradGeneralSolution}) may be cast in closed form, given by the path-ordered exponential $\mathcal{P} \text{exp}$ solution (\ref{A:PexpReferentialGradient}).

Noting recursion relations (\ref{A:NthTermRecurrsionRelation}) and (\ref{A:Order2IterationKernels}), equation (\ref{A:DefGradGeneralSolution}) may be written,
\begin{equation} \label{A:OInfIteration}
%\begin{array}{c}
%\displaystyle
%F \left( \ \lambda , p \ \right) = F \left( \ 0 , p \ \right) \cdot \biggl[ \ I + \int_{0}^{\lambda} d\sigma \ \sum_{n = 0}^{\infty} \ N^{\left( n \right)} \left( \sigma \right) \ \biggr] \\
%\\
%\displaystyle
%F \left( \ \lambda , p \ \right) = F \left( \ 0 , p \ \right) \ \cdot \ \biggl[ \ I \ + \ \sum_{n = 1}^{\infty} \ \idotsint \limits_{\lambda \geq \sigma_{0} \geq \dotsi \ge \sigma_{n-1} \geq 0} d\sigma_{1} \dotsi d\sigma_{n} \ \nabla B \Bigl( \ \phi \bigl( \sigma_{n} , p \bigr) \ \Bigr) \ \dotsi \nabla B \Bigl( \ \phi \bigl( \sigma_{0} , p \bigr) \ \Bigr) \ \biggr]
\begin{split} 
\displaystyle
& F \left( \ \lambda , p \ \right) = F \left( \ 0 , p \ \right) \cdot \biggl[ \ I + \int_{0}^{\lambda} d\sigma_{0} \ \nabla B \Bigl( \ \phi \bigl( \sigma_{0} , p \bigr) \ \Bigr) \\
\displaystyle
&+ \sum_{n = 1}^{\infty} \ \int_{0}^{\lambda} d\sigma_{0} \ \dotsi \int_{0}^{\sigma_{n-1}} d\sigma_{n} \ \nabla B \Bigl( \ \phi \bigl( \sigma_{n} , p \bigr) \ \Bigr) \ \dotsi \nabla B \Bigl( \ \phi \bigl( \sigma_{0} , p \bigr) \ \Bigr) \ \biggr] \\
\end{split}
%\\
%\end{array}
\end{equation}

%\noindent Where the notation,
%\begin{equation} \label{E:IntegralNotation}
%\displaystyle
%\idotsint \limits_{\lambda \geq \sigma_{0} \geq \dotsi \ge \sigma_{n-1} \geq 0} d\sigma_{1} \dotsi d\sigma_{n} = \int_{0}^{\lambda} d\sigma_{0} \int_{0}^{\sigma_{0}} d\sigma_{1} \ \dotsi \int_{0}^{\sigma_{n-1}} d\sigma_{n} \\
%\end{equation}

We introduce the \textit{product ordering operator} $\mathcal{P}$ (also known as the \textit{path-ordered product operator}; see e.g., Ref. \cite{PeskinSchroeder1995} $\S$ 4.2), the action of which is to permute the factors $\nabla B \Bigl( \ \phi \bigl( \sigma_{i} , p \bigr) \ \Bigr)$ composing the kernel of each term in equation (\ref{A:OInfIteration}) such that the integration connectivity parameter values $\sigma_{i}$ appear in order from smallest to largest. Explicitly, for any $\sigma_{i} , \sigma_{j} \in I_{r}$, the path-ordered product operator is,
\begin{equation} \label{E:OrderingOperator}
\begin{array}{l}
\displaystyle
\mathcal{P} \biggl( \ \nabla B \Bigl( \ \phi \bigl( \sigma_{i} , p \bigr) \ \Bigr) \cdot \nabla B \Bigl( \ \phi \bigl( \sigma_{j} , p \bigr) \ \Bigr) \ \biggr) \\
\\
\begin{split}
\displaystyle
= \nabla B \Bigl( &\ \phi \bigl( \sigma_{i} , p \bigr) \ \Bigr) \cdot \nabla B \Bigl( \ \phi \bigl( \sigma_{j} , p \bigr) \ \Bigr) \\
\displaystyle
& + \Theta \left( \ \sigma_{i} - \sigma_{j} \ \right) \ \biggl[ \ \nabla B \Bigl( \ \phi \bigl( \sigma_{j} , p \bigr) \ \Bigr) \ , \ \nabla B \Bigl( \ \phi \bigl( \sigma_{i} , p \bigr) \ \Bigr) \ \biggr] \\
%
%\displaystyle
%&= \Bigg \lbrace \begin{array}{c c} \nabla B \Bigl( \ \phi \bigl( \sigma_{i} , p \bigr) \ \Bigr) \cdot \nabla B \Bigl( \ \phi \bigl( \sigma_{j} , p \bigr) \ \Bigr) \ & \ \ \sigma_{i} < \sigma_{j} \\ \nabla B \Bigl( \ \phi \bigl( \sigma_{j} , p \bigr) \ \Bigr) \cdot \nabla B \Bigl( \ \phi \bigl( \sigma_{i} , p \bigr) \ \Bigr) \ & \ \ \sigma_{j} <\sigma_{i} \end{array} \\
%
\end{split}
\end{array}
\end{equation}

\noindent Where $\Theta \left( x \right) = \int_{-\infty}^{x} dy \ \delta \left( y \right)$ is the Heavyside step function, with $\Theta \left( 0 \right) \equiv \frac{1}{2}$.

\begin{rem}
In general, the (free) Lie algebra $\mathfrak{g}$ consisting of $\nabla B \bigl( \ \phi \left( \sigma , p \right) \ \bigr) \in \mathfrak{g}$ evaluated at different connectivity parameter values is non-abelian; e.g., for any fixed $\sigma_{i} , \sigma_{j} \in I_{r}$ and $\sigma_{i} \ne \sigma_{j}$,
\begin{equation} \label{E:POCommutator}
\displaystyle
\biggl[ \ \nabla B \Bigl( \ \phi \bigl( \sigma_{j} , p \bigr) \ \Bigr) \ , \ \nabla B \Bigl( \ \phi \bigl( \sigma_{i} , p \bigr) \ \Bigr) \ \biggr] \ne 0 \\
\end{equation}

\noindent However, in the special cases that the (\ref{E:POCommutator}) equals zero, the ordered product operator $\mathcal{P}$ reduces to the standard product.
\end{rem}

Transforming integration limits from $\lambda \geq \sigma_{0} \geq \dotsi \ge \sigma_{n-1} \geq 0$, to a uniform $n$-cube $\lambda \geq \sigma_{i} \geq 0$ for every $i = \lbrace 1 , \dotsi , n-1 \rbrace$, and using the product ordering operator, the $n^{th}$ term ($n \ge 1$) integration in equation (\ref{A:OInfIteration}) may be rewritten,
\begin{equation} \label{A:POInfIteration}
\begin{split}
%\displaystyle
%\idotsint \limits_{\lambda \geq \sigma_{0} \geq \dotsi \ge \sigma_{n-1} \geq 0} & d\sigma_{1} \dotsi d\sigma_{n} \ \nabla B \Bigl( \ \phi \bigl( \sigma_{n} , p \bigr) \ \Bigr) \ \dotsi \nabla B \Bigl( \ \phi \bigl( \sigma_{0} , p \bigr) \ \Bigr) \\
%\displaystyle
%& = \int_{0}^{\lambda} \dotsi \int_{0}^{\lambda} dp_{1} \dotsi dp_{n} \ \frac{1}{n!} \ \mathcal{P} \biggl[ \ \nabla B \Bigl( \ \phi \bigl( \sigma_{n} , p \bigr) \ \Bigr) \ \dotsi \nabla B \Bigl( \ \phi \bigl( \sigma_{0} , p \bigr) \ \Bigr) \ \biggr] %\\
\displaystyle
& \int_{0}^{\lambda} d\sigma_{0} \ \dotsi \int_{0}^{\sigma_{n-1}} d\sigma_{n} \ \nabla B \Bigl( \ \phi \bigl( \sigma_{n} , p \bigr) \ \Bigr) \ \dotsi \nabla B \Bigl( \ \phi \bigl( \sigma_{0} , p \bigr) \ \Bigr) \\
\displaystyle
& = \frac{1}{n!} \ \int_{0}^{\lambda} d\sigma_{0} \ \dotsi \int_{0}^{\lambda} d\sigma_{n} \ \mathcal{P} \biggl( \ \nabla B \Bigl( \ \phi \bigl( \sigma_{n} , p \bigr) \ \Bigr) \ \dotsi \nabla B \Bigl( \ \phi \bigl( \sigma_{0} , p \bigr) \ \Bigr) \ \biggr) \\
\end{split}
\end{equation}
%
%(\noindent Where the factor $\frac{1}{n!}$ is introduced to account for the excess integration value.)

\noindent Furthermore, the nested integral on RHS of equation (\ref{A:POInfIteration}) is simply $n$ independent factor integrations. Hence,
\begin{equation} \label{A:POInfIteration2}
\begin{split}
\displaystyle
\int_{0}^{\lambda} d\sigma_{0} \ \dotsi \int_{0}^{\lambda} d\sigma_{n} \ \mathcal{P} \biggl( \ \nabla B \Bigl( &\ \phi \bigl( \sigma_{n} , p \bigr) \ \Bigr) \ \dotsi \nabla B \Bigl( \ \phi \bigl( \sigma_{0} , p \bigr) \ \Bigr) \ \biggr) \\
\displaystyle
&= \mathcal{P} \biggl( \ \int_{0}^{\lambda} d\sigma \ \nabla B \Bigl( \ \phi \bigl( \sigma , p \bigr) \ \Bigr) \ \biggr)^{n} \\
\end{split}
\end{equation}

From equations (\ref{A:OInfIteration}) and (\ref{A:POInfIteration2}), we may formally define the \textit{path ordered exponential function} $\mathcal{P} \text{exp}$, via its series representation,
\begin{equation} \label{A:POExponential}
\displaystyle
\mathcal{P} \text{exp} \biggl( \ \int_{0}^{\lambda} d\sigma \ \nabla B \Bigl( \ \phi \bigl( \sigma , p \bigr) \ \Bigr) \ \biggr) = \sum_{n = 0}^{\infty} \ \frac{1}{n!} \ \mathcal{P} \biggl( \ \int_{0}^{\lambda} d\sigma \ \nabla B \Bigl( \ \phi \bigl( \sigma , p \bigr) \ \Bigr) \ \biggr)^{n} \\
\\
\end{equation}

\begin{rem}
Cases in which the (free) Lie algebra is abelian, i.e., (\ref{E:POCommutator}) is equal to zero for every $\sigma_{i} , \sigma_{j} \in I_{r}$, the path ordered exponential (\ref{A:POExponential}) reduces to the standard exponential.
\end{rem}

Equation (\ref{A:POExponential}) is simply a formally compact form of the bracketed terms in equation (\ref{A:OInfIteration}). Noting $F \left( \ 0 , p \ \right) = I$ is the identity (given by definition (\ref{E:FormalRefGrad}) at $\lambda = 0$), we get the final form of the referential gradient solution (\ref{A:PexpReferentialGradient}),
\begin{equation} \nonumber %\label{E:RefGradProof}
\displaystyle
F \left( \ \lambda , p \ \right) = \mathcal{P} \text{exp} \biggl( \ \int_{0}^{\lambda} d\sigma \ \nabla B \Bigl( \ \phi \bigl( \sigma , p \bigr) \ \Bigr) \ \biggr) \\
\end{equation}

\begin{rem}
For consistency, we note the formal path ordered exponential solution (\ref{A:PexpReferentialGradient}) trivially reduces to the identity at $\lambda = 0$.
\end{rem}

\end{proof}

It is a nearly trivial matter to write the coordinate representation of equations (\ref{E:RefGradHolonomy}) and solution (\ref{A:PexpReferentialGradient}). 
Let $\mathcal{B}_{r} \left( p \right) \subseteq \mathbb{U}_{I}$ be in a coordinate chart. The reference shift vector may be written $h = h^{\mu}_{I} \ \hat{e}^{I}_{\mu}$. Thus, the vector $F \left( \ \lambda , p \ \right) \cdot \hat{h} = F^{\mu}_{\nu} \left( \ \lambda , p \ \right) \ \hat{h}^{\nu}_{I} \ \hat{e}^{I}_{\mu}$ representing the referential gradient of the flow vector corresponding to the reference shift vector $h$, evolves with the connectivity parameter $\lambda \in I_{r}$ according to,
\begin{equation} \label{E:RefGradHolonomyCoords1}
\begin{split}
\displaystyle 
\frac{\partial F^{\mu}_{\nu} \bigl( \ \lambda , x_{I} \left( p \right) \ \bigr)}{\partial \lambda} &\ \hat{h}^{\nu}_{I} \ \hat{e}^{I}_{\mu} \\
\displaystyle
&= F^{\eta}_{\nu} \bigl( \ \lambda , x_{I} \left( p \right) \ \bigr) \ \nabla_{\eta} \ B^{\mu}_{I} \biggl( \ x_{I} \Bigl( \ \phi \bigl( \lambda , p \bigr) \ \Bigr) \ \biggr) \ \hat{h}^{\nu}_{I} \ \hat{e}^{I}_{\mu} \\
\end{split}
\end{equation}

\noindent Since the reference shift vector $h = h^{\mu}_{I} \ \hat{e}^{I}_{\mu}$ is arbitrary, we may write the first-order differential equations (\ref{E:RefGradHolonomy}) for the coordinate-dependent component representation $F^{\mu}_{\nu} \bigl( \ \lambda , x_{I} \left( p \right) \ \bigr)$,
\begin{equation} \label{E:RefGradHolonomyCoords}
\displaystyle 
\frac{\partial F^{\mu}_{\nu} \bigl( \ \lambda , x_{I} \left( p \right) \ \bigr)}{\partial \lambda} = F^{\eta}_{\nu} \bigl( \ \lambda , x_{I} \left( p \right) \ \bigr) \ \nabla_{\eta} \ B^{\mu}_{I} \Bigl( \ x_{I} \bigl( \ \phi \left( \lambda , p \right) \ \bigr) \ \Bigr) \\
\end{equation}

\noindent Where the reference condition $F^{\mu}_{\nu} \bigl( \ \lambda , x_{I} \left( p \right) \ \bigr) = \delta^{\mu}_{\nu}$ is given by definition (\ref{E:FormalRefGrad}) at $\lambda = 0$.
%
%\begin{rem}
 %The referential gradient $F \left( \ \lambda , p \ \right)$ may be thought of as an analogous Jacobian matrix of the functions $\phi \left( \lambda , p \right)$ with respect to the reference functions $ \phi \left( 0 , p \right)$. 
%\end{rem}

Furthermore, in the coordinate chart $\mathbb{U}_{I}$, solution (\ref{A:PexpReferentialGradient}) is written,
\begin{equation} \label{E:RefGradProofCoords}
\displaystyle
F^{\mu}_{\nu} \bigl( \ \lambda , x_{I} \left( p \right) \ \bigr) = \mathcal{P} \text{exp} \biggl( \ \int_{0}^{\lambda} d\sigma \ \nabla_{\nu} \ B^{\mu}_{I} \Bigl( \ x_{I} \bigl( \ \phi \bigl( \sigma , p \bigr) \ \bigr) \ \Bigr) \ \biggr) \\
\\
\end{equation}

\noindent The kernel is the covariant derivative of the generating vector field is given by,
\begin{equation} \label{E:TensorGrad}
\displaystyle
\nabla_{\nu} \ B^{\mu}_{I} \left( \ x_{I} \ \right) = \frac{\partial}{\partial x^{\nu}_{I}} \ B^{\mu}_{I} \left( \ x_{I} \ \right) + \Gamma^{\mu}_{\nu \rho} \left( \ x_{I} \ \right) \ B^{\rho}_{I} \left( \ x_{I} \ \right) \\
\\
\end{equation}

\noindent Where the position $x_{I} = x_{I} \bigl( \ \phi \left( \sigma , p \right) \ \bigr)$ is evaluated along the flow.

\subsection{Referential Gradient Transformations}
\label{SS:RefGradTransforms}

The integration operation of the referential gradient solution (\ref{A:PexpReferentialGradient}) places powerful non-trivial restrictions on this object's transformation properties. In this section, we explore the various conditions demanded in order that: 1) the components of the referential gradient transform as a tensor, 2) the referential gradient forms a group under translations of the connectivity parameter, and 3) the relationship between representations of the referential gradient under a change in integration variable.

It is important to note, due to the integration the coordinate representation of the Lagrangian specification of the referential gradient solution (\ref{E:RefGradProofCoords}) is valid only to the extent that it domain of definition $\mathcal{B}_{r} \left( p \right) \subseteq \mathbb{U}_{I}$. The first lemma makes explicit the condition under which the referential gradient components in any coordinate representation transform as a tensor under smooth changes of coordinates; namely, so long as the set $\mathcal{B}_{r} \left( p \right) \subseteq \mathbb{U}_{I} \bigcap \mathbb{U}_{J}$.

\begin{lem}
\emph{(Coordinate Transformation of the Referential Gradient)}
\label{lem:RefGrad}

Let $\left( \ \mathbb{U}_{I} , x_{I} \ \right)_{I \in \mathcal{I}}$ and $\left( \ \mathbb{U}_{J} , x_{J} \ \right)_{J \in \mathcal{I}}$ be coordinate charts on $\mathbb{M}$. Under smooth coordinate transformations $\Lambda : \mathbb{U}_{J} \rightarrow \mathbb{U}_{I}$, the coordinate representation of the referential gradient transforms as a tensor,
\begin{equation} \label{E:RefGradCoordTransform}
\displaystyle
F^{\mu}_{\nu} \bigl( \ \lambda , x_{I} \left( p \right) \ \bigr) = \Lambda^{\mu}_{\alpha} \ \Lambda^{\beta}_{\nu} \ F^{\alpha}_{\beta} \bigl( \ \lambda , x_{J} \left( p \right) \ \bigr)
\end{equation}

\noindent if and only if $\overline{\mathcal{B}}_{r} \left( p \right) \subseteq \mathbb{U}_{I} \bigcap \mathbb{U}_{J}$.
\end{lem}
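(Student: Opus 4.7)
The plan is to prove the two implications separately, exploiting the defining limit (\ref{E:FormalRefGradCoords}) for the direct derivation and the Lagrangian specification (\ref{A:PexpReferentialGradient}) to see where the global overlap hypothesis is genuinely needed.

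For sufficiency I assume $\overline{\mathcal{B}}_r(p)\subseteq\mathbb{U}_I\cap\mathbb{U}_J$. Since orbits starting in $\overline{\mathcal{B}}_r(p)$ remain there for all $\sigma\in\overline{I}_r$ by the construction preceding Definition \ref{D:RefGrad}, the entire orbit $\{\phi(\sigma,p)\}$ lies in the chart overlap, so the transition Jacobian $\Lambda$ and the kernel $\nabla B$ are simultaneously smooth in both charts at every point of the integration path. The transformation law (\ref{E:RefGradCoordTransform}) then follows by applying the chain rule to $x_I^{\mu}\circ x_J^{-1}$ in the defining limit (\ref{E:FormalRefGradCoords}) at the moving endpoint $\phi(\lambda,p)$, giving
\[
  x_I^{\mu}\bigl(\phi(\lambda,p+|h|\hat h)\bigr)-x_I^{\mu}\bigl(\phi(\lambda,p)\bigr)=\Lambda^{\mu}_{\alpha}\bigl(\phi(\lambda,p)\bigr)\bigl[x_J^{\alpha}\bigl(\phi(\lambda,p+|h|\hat h)\bigr)-x_J^{\alpha}\bigl(\phi(\lambda,p)\bigr)\bigr]+o(|h|),
\]
combined with the contravariant transformation $\hat h^{\beta}_J=\Lambda^{\beta}_{\nu}(p)\,\hat h^{\nu}_I$ at the reference point; passing to $|h|\to 0$ and using the arbitrariness of $\hat h$ produces (\ref{E:RefGradCoordTransform}).

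To confirm compatibility with the Lagrangian specification, I would then verify by uniqueness of solutions to the chart-$I$ ODE (\ref{E:RefGradHolonomyCoords}) that the candidate $G^{\mu}_{\nu}(\lambda,p):=\Lambda^{\mu}_{\alpha}(\phi(\lambda,p))\,\Lambda^{\beta}_{\nu}(p)\,F^{\alpha}_{\beta}(\lambda,x_J(p))$ coincides with the chart-$I$ path-ordered exponential. The main algebraic step is to differentiate $\Lambda^{\mu}_{\alpha}(\phi(\lambda,p))$ along the flow (producing the inhomogeneous term $B^{\gamma}|_J\,\partial_{\gamma}\Lambda^{\mu}_{\alpha}$) and to show that this term combines, via the symmetry of mixed second partials of the transition map together with the inhomogeneous transformation of the Christoffel symbols, with the chart-$J$ ODE for $F^{\alpha}_{\beta}|_J$ to reproduce exactly the chart-$I$ ODE for $G$. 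This pointwise cancellation must hold at every $\sigma\in\overline{I}_r$, which is only possible when $\Lambda$ and $\nabla B$ are simultaneously defined and intertwinable along the whole orbit, and hence forces the full overlap hypothesis.

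For necessity, suppose $\overline{\mathcal{B}}_r(p)\not\subseteq\mathbb{U}_I\cap\mathbb{U}_J$. Then there exists $\sigma^{\star}\in\overline{I}_r$ with $\phi(\sigma^{\star},p)$ outside one of the charts, at which the integrand $\nabla B$ in (\ref{E:RefGradProofCoords}) is undefined in that chart and the pointwise intertwining required to relate the two $\mathcal{P}\text{exp}$ kernels along the path cannot be written. The two Lagrangian representations therefore cannot both be produced from (\ref{E:RefGradProofCoords}) and related by the endpoint Jacobians alone, so (\ref{E:RefGradCoordTransform}) fails for the coordinate representations provided by Theorem \ref{Thm:RefGradLagrangian}. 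The main obstacle is reconciling the apparently purely local chain-rule derivation---which on its face requires only smoothness of $\Lambda$ at $p$ and $\phi(\lambda,p)$---with the orbit-wide hypothesis; the stronger condition on the entire ball is forced by the non-local character of the Lagrangian representation, and it is precisely at the consistency step, where $C^2$-regularity of $\Lambda$ at every orbit point is used to absorb $B^{\gamma}\partial_{\gamma}\Lambda$ into the Christoffel transformation, that the full overlap hypothesis becomes indispensable.
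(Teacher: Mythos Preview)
Your argument is correct and takes a genuinely different route from the paper. For the forward implication the paper expands the path-ordered exponential into its Dyson series (\ref{A:OInfIteration}) and transforms each kernel factor $\nabla_{\nu}B^{\mu}_{I}$ as a $(1,1)$ tensor, telescoping the interior Jacobians via $\Lambda^{\rho}_{\eta}\Lambda^{\delta}_{\rho}=\delta^{\delta}_{\eta}$ term by term through the product; you instead apply the chain rule directly to the defining limit (\ref{E:FormalRefGradCoords}) and then confirm consistency with the Lagrangian specification by showing that the candidate $G^{\mu}_{\nu}=\Lambda^{\mu}_{\alpha}\bigl(\phi(\lambda,p)\bigr)\,\Lambda^{\beta}_{\nu}(p)\,F^{\alpha}_{\beta}\vert_{J}$ solves the chart-$I$ ODE (\ref{E:RefGradHolonomyCoords}), invoking uniqueness. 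Your approach is more elementary and has the advantage of making the two-point nature of the transformation explicit---the contravariant Jacobian sits at $\phi(\lambda,p)$ while the covariant one sits at $p$---and of handling a non-constant transition Jacobian transparently, since the extra drift term $B^{\gamma}\partial_{\gamma}\Lambda^{\mu}_{\alpha}$ you isolate is exactly what the inhomogeneous Christoffel transformation absorbs. The paper's series computation, by contrast, makes covariance visible directly at the level of the closed $\mathcal{P}\text{exp}$ formula, at the cost of leaving the evaluation points of the telescoping $\Lambda$'s implicit. For the converse direction the two arguments coincide in substance: once the orbit leaves $\mathbb{U}_{I}\cap\mathbb{U}_{J}$ the coordinate kernel in (\ref{E:RefGradProofCoords}) is undefined in one chart, so the Lagrangian representation on which (\ref{E:RefGradCoordTransform}) is predicated cannot be formed there.
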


\begin{proof}
To prove necessity, we assume $\overline{\mathcal{B}}_{r} \left( p \right) \subseteq \mathbb{U}_{I} \bigcap \mathbb{U}_{J}$, and show the transformation (\ref{E:RefGradCoordTransform}) follows from the series solution (\ref{A:OInfIteration}) for any $\lambda \in I_{r}$.

In the chart $\mathbb{U}_{I}$, the coordinate representation of the referential gradient series solution (\ref{A:OInfIteration}) is given by,
\begin{equation} \label{E:SeriesSolnCoords}
\begin{array}{l}
\displaystyle
F^{\mu}_{\nu} \bigl( \ \lambda , x_{I} \left( p \right) \ \bigr) = \delta^{\mu}_{\nu} + \int_{0}^{\lambda} d\sigma_{0} \ \nabla_{\nu} \ B^{\mu}_{I} \Bigl( \ x_{I} \bigl( \ \phi \bigl( \sigma_{0} , p \bigr) \ \bigr) \ \Bigr) \\
\begin{split}
\displaystyle
+ \sum_{n = 1}^{\infty} \ \int_{0}^{\lambda} d\sigma_{0} \ \dotsi \int_{0}^{\sigma_{n-1}} d\sigma_{n} \ \nabla_{\nu} &\ B^{\rho_{n}}_{I} \Bigl( \ x_{I} \bigl( \ \phi \bigl( \sigma_{n} , p \bigr) \ \bigr) \ \Bigr) \\
\displaystyle
& \dotsi \nabla_{\rho_{1}} \ B^{\mu}_{I} \Bigl( \ x_{I} \bigl( \ \phi \bigl( \sigma_{0} , p \bigr) \ \bigr) \ \Bigr) \\
\end{split}
\end{array}
\end{equation}

\noindent where we have made use of the reference condition $F^{\mu}_{\nu} \bigl( \ 0 , x_{I} \left( p \right) \ \bigr) = \delta^{\mu}_{\nu}$.

By assumption $\overline{\mathcal{B}}_{r} \left( p \right) \subseteq \mathbb{U}_{I} \bigcap \mathbb{U}_{J}$, and each kernel factor covariant derivative of the generating vector field transforms under smooth coordinate transformations as,
\begin{equation} \label{E:KernelTransform}
\displaystyle
\nabla_{\nu} \ B^{\mu}_{I} \Bigl( \ x_{I} \bigl( \ \phi \bigl( \sigma_{i} , p \bigr) \ \bigr) \ \Bigr) = \Lambda^{\beta}_{\nu} \ \Lambda^{\mu}_{\alpha} \ \nabla_{\beta} \ B^{\alpha}_{J} \Bigl( \ x_{J} \bigl( \ \phi \bigl( \sigma_{i} , p \bigr) \ \bigr) \ \Bigr) \\
\end{equation}

\noindent for every $\sigma_{i} \in I_{r}$.

Making use of the identity $\Lambda^{\rho}_{\eta} \ \Lambda^{\delta}_{\rho} = \delta^{\delta}_{\eta}$, the $n = 1$ product kernel transforms as,
\begin{equation} \label{E:2ndOrderProductKernelTransform}
\begin{split}
\displaystyle
\nabla_{\nu} \ & B^{\rho_{1}}_{I} \Bigl( \ x_{I} \bigl( \ \phi \bigl( \sigma_{1} , p \bigr) \ \bigr) \ \Bigr) \ \nabla_{\rho_{1}} \ B^{\mu}_{I} \Bigl( \ x_{I} \bigl( \ \phi \bigl( \sigma_{0} , p \bigr) \ \bigr) \ \Bigr) \\
%\displaystyle
%& = \biggl( \ \Lambda^{\beta}_{\nu} \ \Lambda^{\rho_{1}}_{\eta_{1}} \ \nabla_{\beta} \ B^{\eta_{1}}_{J} \Bigl( \ x_{J} \bigl( \ \phi \bigl( \sigma_{1} , p \bigr) \ \bigr) \ \Bigr) \ \biggr) \ \biggl( \ \Lambda^{\delta_{1}}_{\rho_{1}} \ \Lambda^{\mu}_{\alpha} \ \nabla_{\delta_{1}} \ B^{\alpha}_{J} \Bigl( \ x_{J} \bigl( \ \phi \bigl( \sigma_{0} , p \bigr) \ \bigr) \ \Bigr) \ \biggr) \\
%\displaystyle
%& =\Lambda^{\beta}_{\nu} \ \Lambda^{\rho_{1}}_{\eta_{1}} \ \Lambda^{\delta_{1}}_{\rho_{1}} \ \Lambda^{\mu}_{\alpha} \ \nabla_{\beta} \ B^{\eta_{1}}_{J} \Bigl( \ x_{J} \bigl( \ \phi \bigl( \sigma_{1} , p \bigr) \ \bigr) \ \Bigr) \ \nabla_{\delta_{1}} \ B^{\alpha}_{J} \Bigl( \ x_{J} \bigl( \ \phi \bigl( \sigma_{0} , p \bigr) \ \bigr) \ \Bigr) \\
%\displaystyle
%& =\Lambda^{\beta}_{\nu} \ \delta^{\delta_{1}}_{\eta_{1}} \ \Lambda^{\mu}_{\alpha} \ \nabla_{\beta} \ B^{\eta_{1}}_{J} \Bigl( \ x_{J} \bigl( \ \phi \bigl( \sigma_{1} , p \bigr) \ \bigr) \ \Bigr) \ \nabla_{\delta_{1}} \ B^{\alpha}_{J} \Bigl( \ x_{J} \bigl( \ \phi \bigl( \sigma_{0} , p \bigr) \ \bigr) \ \Bigr) \\
\displaystyle
& =\Lambda^{\beta}_{\nu} \ \Lambda^{\mu}_{\alpha} \ \nabla_{\beta} \ B^{\eta_{1}}_{J} \Bigl( \ x_{J} \bigl( \ \phi \bigl( \sigma_{1} , p \bigr) \ \bigr) \ \Bigr) \ \nabla_{\eta_{1}} \ B^{\alpha}_{J} \Bigl( \ x_{J} \bigl( \ \phi \bigl( \sigma_{0} , p \bigr) \ \bigr) \ \Bigr) \\
\end{split}
\end{equation}

\noindent By induction, the $n^{th}$ product kernel transforms as,
\begin{equation} \label{E:nthOrderProductKernelTransform}
\begin{split}
\displaystyle
&\nabla_{\nu} \ B^{\rho_{n}}_{I} \Bigl( \ x_{I} \bigl( \ \phi \bigl( \sigma_{n} , p \bigr) \ \bigr) \ \Bigr) \ \dotsi \nabla_{\rho_{1}} \ B^{\mu}_{I} \Bigl( \ x_{I} \bigl( \ \phi \bigl( \sigma_{0} , p \bigr) \ \bigr) \ \Bigr) \\
%\displaystyle
%& = \biggl( \ \Lambda^{\beta}_{\nu} \ \Lambda^{\rho_{n}}_{\eta_{n}} \ \nabla_{\beta} \ B^{\eta_{n}}_{J} \Bigl( \ x_{J} \bigl( \ \phi \bigl( \sigma_{n} , p \bigr) \ \bigr) \ \Bigr) \ \biggr) \ \dotsi \biggl( \ \Lambda^{\delta_{1}}_{\rho_{1}} \ \Lambda^{\mu}_{\alpha} \ \nabla_{\delta_{1}} \ B^{\alpha}_{J} \Bigl( \ x_{J} \bigl( \ \phi \bigl( \sigma_{0} , p \bigr) \ \bigr) \ \Bigr) \ \biggr) \\
%\displaystyle
%& = \Lambda^{\beta}_{\nu} \ \Lambda^{\rho_{n}}_{\eta_{n}} \ \dotsi \Lambda^{\delta_{1}}_{\rho_{1}} \ \Lambda^{\mu}_{\alpha} \ \nabla_{\beta} \ B^{\eta_{n}}_{J} \Bigl( \ x_{J} \bigl( \ \phi \bigl( \sigma_{n} , p \bigr) \ \bigr) \ \Bigr) \ \dotsi \nabla_{\delta_{1}} \ B^{\alpha}_{J} \Bigl( \ x_{J} \bigl( \ \phi \bigl( \sigma_{0} , p \bigr) \ \bigr) \ \Bigr) \\
\displaystyle
& = \Lambda^{\beta}_{\nu} \ \Lambda^{\mu}_{\alpha} \ \nabla_{\beta} \ B^{\eta_{n}}_{J} \Bigl( \ x_{J} \bigl( \ \phi \bigl( \sigma_{n} , p \bigr) \ \bigr) \ \Bigr) \ \dotsi \nabla_{\eta_{1}} \ B^{\alpha}_{J} \Bigl( \ x_{J} \bigl( \ \phi \bigl( \sigma_{0} , p \bigr) \ \bigr) \ \Bigr) \\
\end{split}
\end{equation}

Upon substitution of (\ref{E:KernelTransform}) and (\ref{E:nthOrderProductKernelTransform}) into (\ref{E:SeriesSolnCoords}), and using the fact that the connectivity parameter is independent of coordinates, equation (\ref{E:RefGradCoordTransform}) follows immediately,
%for $\mathcal{B}_{r} \subseteq \mathbb{U}_{I} \bigcap \mathbb{U}_{J}$,
\begin{equation} \nonumber
%\begin{array}{c}
%\begin{split}
%\displaystyle
%& F^{\mu}_{\nu} \bigl( \ \lambda , x_{I} \left( p \right) \ \bigr) =  \Lambda^{\beta}_{\nu} \ \Lambda^{\mu}_{\alpha} \ \delta^{\alpha}_{\beta} + \int_{0}^{\lambda} d\sigma_{0} \ \Lambda^{\beta}_{\nu} \ \Lambda^{\mu}_{\alpha} \ \nabla_{\beta} \ B^{\alpha}_{J} \Bigl( \ x_{J} \bigl( \ \phi \bigl( \sigma_{0} , p \bigr) \ \bigr) \ \Bigr) \\
%\displaystyle
%& + \sum_{n = 1}^{\infty} \ \int_{0}^{\lambda} d\sigma_{0} \ \dotsi \int_{0}^{\sigma_{n-1}} d\sigma_{n} \ \Lambda^{\beta}_{\nu} \ \Lambda^{\mu}_{\alpha} \ \nabla_{\beta} \ B^{\eta_{n}}_{J} \Bigl( \ x_{J} \bigl( \ \phi \bigl( \sigma_{n} , p \bigr) \ \bigr) \ \Bigr) \ \dotsi \nabla_{\eta_{1}} \ B^{\alpha}_{J} \Bigl( \ x_{J} \bigl( \ \phi \bigl( \sigma_{0} , p \bigr) \ \bigr) \ \Bigr) \\
%\end{split}
%\begin{split}
%\displaystyle
%& F^{\mu}_{\nu} \bigl( \ \lambda , x_{I} \left( p \right) \ \bigr) =  \Lambda^{\beta}_{\nu} \ \Lambda^{\mu}_{\alpha} \ \biggl[ \ \delta^{\alpha}_{\beta} + \int_{0}^{\lambda} d\sigma_{0} \ \nabla_{\beta} \ B^{\alpha}_{J} \Bigl( \ x_{J} \bigl( \ \phi \bigl( \sigma_{0} , p \bigr) \ \bigr) \ \Bigr) \\
%\displaystyle
%& + \sum_{n = 1}^{\infty} \ \int_{0}^{\lambda} d\sigma_{0} \ \dotsi \int_{0}^{\sigma_{n-1}} d\sigma_{n} \ B^{\eta_{n}}_{J} \Bigl( \ x_{J} \bigl( \ \phi \bigl( \sigma_{n} , p \bigr) \ \bigr) \ \Bigr) \ \dotsi \nabla_{\eta_{1}} \ B^{\alpha}_{J} \Bigl( \ x_{J} \bigl( \ \phi \bigl( \sigma_{0} , p \bigr) \ \bigr) \ \Bigr) \ \biggr] \\
%\end{split}
%\\
F^{\mu}_{\nu} \bigl( \ \lambda , x_{I} \left( p \right) \ \bigr) =  \Lambda^{\beta}_{\nu} \ \Lambda^{\mu}_{\alpha} \ F^{\alpha}_{\beta} \bigl( \ \lambda , x_{J} \left( p \right) \ \bigr)
%\end{array}
\end{equation}

To prove sufficiency, let the chart $\mathbb{U}_{I}$ cover the set $\mathcal{B}_{r} \left( p \right)$, and let $\overline{\mathbb{U}}_{J} \subset \mathbb{U}_{I}$. For any reference point $p \in \mathbb{U}_{I} \bigcap \mathbb{U}_{J}$, the coordinate representation of the referential gradient $F^{\mu}_{\nu} \bigl( \ \lambda , x_{I} \left( p \right) \ \bigr)$ in the chart $\mathbb{U}_{I}$ is well-defined for all $\lambda \in I_{r}$. Furthermore, there exists a $\overline{\lambda} \in I_{r}$ such that $x_{I} \bigl( \ \phi \left( \lambda , p \right) \ \bigr) \in U_{I} \ / \ U_{J}$ for any $\vert \ \lambda \ \vert > \vert \ \overline{\lambda} \ \vert$, thus equation (\ref{E:RefGradCoordTransform}) is not defined. Hence, upon shrinking $I_{r}$ to the extent that $\mathcal{B}_{r} \left( p \right) \subseteq \mathbb{U}_{I} \bigcap \mathbb{U}_{J}$ guarantees equation (\ref{E:RefGradCoordTransform}) for smooth coordinate transformations $\Lambda : \mathbb{U}_{J} \rightarrow \mathbb{U}_{I}$.

\end{proof}
%
%In a follow-up paper, we show the Lagrangian form of the referential gradient solution (\ref{A:PexpReferentialGradient}) is in fact the \textit{holonomy} of the connection $\nabla B \bigl( \ \phi_{p} \left( \lambda \right) \ \bigr)$ along the given orbit $\phi_{p} \left( \lambda \right) = \phi \left( \lambda , p \right)$.

Whereas the previous Lemma concerned the tensorial nature of the referential gradient components in any coordinate representation, the second lemma concerns the group properties of the referential gradient with respect to the connectivity parameter $\lambda \in I_{r}$. Namely, due to the integration, the group property of the referential gradient with respect to translations of connectivity parameter is a direct consequence of the group property of the flow along which it is evaluated.
%depends directly on an everywhere constant representation of $\nabla B$ with respect to both the connectivity parameter and reference point.

\begin{lem}
\emph{(Group Structure of the Referential Gradient)}
\label{lem:RefGradGroup}
Let $p = \phi \left( 0 , p \right)$ and $q = \phi \left( \lambda_{1} , p \right)$. The referential gradient forms a (Lie) group in the connectivity parameter.
\begin{equation} \label{E:FGroup}
\displaystyle
F \bigl( \ \lambda_{2} + \lambda_{1} , p \ \bigr) = F \bigl( \ \lambda_{2} , q \ \bigr) \cdot F \bigl( \ \lambda_{1} , p \ \bigr) \\
%F^{\mu}_{\nu} \bigl( \ \lambda_{2} + \lambda_{1} , x_{I} \left( p \right) \ \bigr) = F^{\mu}_{\eta} \bigl( \ \lambda_{2} , x_{I} \left( q \right) \ \bigr) \ F^{\eta}_{\nu} \bigl( \ \lambda_{1} , x_{I} \left( p \right) \ \bigr) \\
\\
\end{equation}

\noindent if and only if the flow satisfies $\phi \left( \lambda_{2} , q \right) = \phi \left( \lambda_{2} + \lambda_{1} , p \right)$ for every  $\lambda_{1}, \lambda_{2} \in I_{r}$ and $p , q \in \mathbb{M}$, or the representation of $\nabla B$ in any coordinate chart is a constant. %everywhere independent of both the connectivity parameter $\lambda \in I_{r}$ and the reference point $p \in \mathbb{M}$.
\end{lem}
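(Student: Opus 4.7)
The plan is to exploit the path-ordered exponential representation (\ref{A:PexpReferentialGradient}) established in Theorem \ref{Thm:RefGradLagrangian}, together with the concatenation (semigroup) property of $\mathcal{P}\text{exp}$ along successive sub-intervals of the parameter. I would first prove the splitting identity
\begin{equation} \nonumber
\mathcal{P}\text{exp} \int_{0}^{\lambda_1 + \lambda_2} d\sigma \ \nabla B\bigl(\phi(\sigma,p)\bigr) = \mathcal{P}\text{exp} \int_{\lambda_1}^{\lambda_1 + \lambda_2} d\sigma \ \nabla B\bigl(\phi(\sigma,p)\bigr) \cdot \mathcal{P}\text{exp} \int_{0}^{\lambda_1} d\sigma \ \nabla B\bigl(\phi(\sigma,p)\bigr),
\end{equation}
which follows directly from the $n$-fold nested integral structure of equation (\ref{A:NFoldIteration}): partition each $\sigma_i$ integration region at $\sigma_i = \lambda_1$ and re-collect the resulting sub-products, respecting the ordering convention (later-time factors on the left) inherited from equation (\ref{A:ImplicitDefGrad1}).

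Next, in the leftmost factor I would change the integration variable $\sigma \mapsto \sigma' = \sigma - \lambda_1$, converting it to $\mathcal{P}\text{exp} \int_0^{\lambda_2} d\sigma' \ \nabla B(\phi(\sigma' + \lambda_1, p))$. For the \emph{first} sufficient clause (flow semigroup property $\phi(\sigma'+\lambda_1,p) = \phi(\sigma',q)$ with $q = \phi(\lambda_1,p)$), the kernel becomes $\nabla B(\phi(\sigma',q))$ and the factor equals $F(\lambda_2,q)$ verbatim by (\ref{A:PexpReferentialGradient}), yielding the claimed identity. For the \emph{second} sufficient clause ($\nabla B$ constant in any chart), the remark following (\ref{A:POExponential}) reduces $\mathcal{P}\text{exp}$ to the ordinary matrix exponential, and the abelian identity $\exp\bigl((\lambda_1+\lambda_2)\nabla B\bigr) = \exp(\lambda_2 \nabla B) \cdot \exp(\lambda_1 \nabla B)$ delivers the group law independently of the flow's behavior.

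For necessity, I would assume the group identity and differentiate both sides with respect to $\lambda_2$, appealing to the Lagrangian ODE (\ref{E:RefGradHolonomy}). Matching derivatives and using the assumed equality of the $F$-factors (which are invertible as elements of $GL(4,\mathbb{R})$) yields the pointwise condition
\begin{equation} \nonumber
\nabla B\bigl(\phi(\lambda_1+\lambda_2, p)\bigr) = \nabla B\bigl(\phi(\lambda_2, q)\bigr)
\end{equation}
for all admissible $\lambda_1,\lambda_2,p$. This bifurcates into either the flow group property $\phi(\lambda_1+\lambda_2,p) = \phi(\lambda_2,q)$ (so that the two arguments of $\nabla B$ coincide), or the equality holds in spite of the arguments being distinct; in the latter case, letting $\lambda_1,\lambda_2,p$ range freely propagates the equality of $\nabla B$-values across a dense family of point pairs in any chart, forcing $\nabla B$ to be coordinate-constant.

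The main obstacle is the final step of the necessity direction: rigorously converting the pointwise identity $\nabla B(\phi(\lambda_1+\lambda_2,p)) = \nabla B(\phi(\lambda_2,q))$ into the stated dichotomy. One must argue carefully that wherever the flow semigroup property fails the distinct-point pairs $(\phi(\lambda_1+\lambda_2,p),\phi(\lambda_2,q))$ sweep out enough of the chart to force constancy of $\nabla B$. Secondary care is required in step one to track the non-commutative ordering when partitioning the nested integrals, and in step three to ensure the matrix-order convention of (\ref{E:RefGradHolonomy}) is respected so that the later-time factor genuinely ends up on the left of $F(\lambda_1,p)$.
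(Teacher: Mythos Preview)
Your necessity argument coincides with the paper's: both differentiate the assumed identity, feed in the ODE (\ref{E:RefGradHolonomy}), and arrive at the pointwise condition $\nabla B\bigl(\phi(\lambda_2,q)\bigr)=\nabla B\bigl(\phi(\lambda_1+\lambda_2,p)\bigr)$ (the paper's (\ref{E:NecessaryCondition})), then assert the dichotomy without further justification---so the obstacle you flag is equally present in the paper's version. Your sufficiency argument, however, takes a genuinely different and more elementary route. The paper writes $F(\lambda_1+\lambda_2,p)$ as $\mathcal{P}\text{exp}$ of a \emph{sum} of two integrals and then invokes the Baker--Campbell--Hausdorff theorem (Appendix~\ref{A:BCHTheorem}) to factor it as a product, which requires the commutator (\ref{E:GroupCommutator}) to vanish; the two hypotheses of the lemma then appear as the conditions killing that commutator. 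You instead use the unconditional concatenation property of the path-ordered exponential---equivalently, the composition law for the fundamental matrix of the linear ODE (\ref{E:RefGradHolonomy})---so the hypotheses enter only at the last step, when identifying the shifted factor with $F(\lambda_2,q)$. This avoids BCH entirely and isolates more cleanly where the flow-semigroup assumption is actually consumed. One caveat on conventions: in the paper's written series (\ref{A:OInfIteration}) the later parameter $\sigma_0$ sits on the \emph{right} of the dot product; since that dot denotes the contraction $F^{\eta}_{\nu}\,\nabla_\eta B^{\mu}$ (cf.\ (\ref{E:RefGradHolonomyCoords})), your ``later-time factors on the left'' is correct in standard matrix multiplication but appears reversed relative to the paper's notation, so be explicit about this when you write it up.
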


\begin{proof} %\textit{Sketch of proof:}
To prove the lemma, it suffices to work in coordinates; let $\mathcal{B}_{r} \left( p \right) \subseteq \mathbb{U}_{I}$ be in a coordinate chart. To show necessity, we begin with the defining equation (\ref{E:RefGradHolonomyCoords}).

Without loss of generality, fix $\lambda_{1}$ such that $\lambda = \sigma + \lambda_{1}$. Noting, $\frac{\partial}{\partial \lambda} = \frac{\partial}{\partial \sigma}$, equation (\ref{E:RefGradHolonomyCoords}) becomes,
\begin{equation} \label{E:RefGradHolonomy1}
\begin{split}
\displaystyle 
\frac{\partial}{\partial \sigma} &\ F^{\mu}_{\nu} \bigl( \ \sigma + \lambda_{1} , x_{I} \left( p \right) \ \bigr) \\
\displaystyle
&= F^{\eta}_{\nu} \bigl( \ \sigma + \lambda_{1} , x_{I} \left( p \right) \ \bigr) \ \nabla_{\eta} \ B^{\mu}_{I} \Bigl( \ x_{I} \bigl( \ \phi \left( \sigma + \lambda_{1} , p \right) \ \bigr) \ \Bigr) \\
\end{split}
\end{equation}

\noindent Suppose the referential gradient satisfies the group condition, equation (\ref{E:FGroup}), then we may write,
\begin{equation} \label{E:RefGradHolonomy2}
\begin{array}{l}
\begin{split}
%\displaystyle 
%&\frac{\partial}{\partial \sigma} \ \biggl( \ F^{\mu}_{\alpha} \bigl( \ \sigma , x_{I} \left( q \right) \ \bigr) \ F^{\alpha}_{\nu} \bigl( \ \lambda_{1} , x_{I} \left( p \right) \ \bigr) \ \biggr) \\
%\displaystyle
%&= \biggl( \ F^{\eta}_{\alpha} \bigl( \ \sigma , x_{I} \left( q \right) \ \bigr) \ F^{\alpha}_{\nu} \bigl( \ \lambda_{1} , x_{I} \left( p \right) \ \bigr) \ \biggr) \ \nabla_{\eta} \ B^{\mu}_{I} \Bigl( \ x_{I} \bigl( \ \phi \left( \sigma + \lambda_{1} , p \right) \ \bigr) \ \Bigr) \\
%\\
%\end{split}
%\\
%\begin{split}
\displaystyle 
%&= 
&\biggl( \ \frac{\partial}{\partial \sigma} \ F^{\mu}_{\alpha} \bigl( \ \sigma , x_{I} \left( q \right) \ \bigr) \ \biggr) \ F^{\alpha}_{\nu} \bigl( \ \lambda_{1} , x_{I} \left( p \right) \ \bigr) \\
\displaystyle
&= F^{\eta}_{\alpha} \bigl( \ \sigma , x_{I} \left( q \right) \ \bigr) \ F^{\alpha}_{\nu} \bigl( \ \lambda_{1} , x_{I} \left( p \right) \ \bigr) \ \nabla_{\eta} \ B^{\mu}_{I} \Bigl( \ x_{I} \bigl( \ \phi \left( \sigma + \lambda_{1} , p \right) \ \bigr) \ \Bigr) \\
\end{split}
\end{array}
\end{equation}

\noindent Substituting equation (\ref{E:RefGradHolonomyCoords}),
\begin{equation} \label{E:RefGradHolonomy3}
\begin{split}
\displaystyle 
&F^{\eta}_{\alpha} \bigl( \ \sigma , x_{I} \left( q \right) \ \bigr) \ \nabla_{\eta} \ B^{\mu}_{I} \Bigl( \ x_{I} \bigl( \ \phi \left( \sigma , q \right) \ \bigr) \ \Bigr) \ F^{\alpha}_{\nu} \bigl( \ \lambda_{1} , x_{I} \left( p \right) \ \bigr) \\
\displaystyle
&= F^{\eta}_{\alpha} \bigl( \ \sigma , x_{I} \left( q \right) \ \bigr) \ F^{\alpha}_{\nu} \bigl( \ \lambda_{1} , x_{I} \left( p \right) \ \bigr) \ \nabla_{\eta} \ B^{\mu}_{I} \Bigl( \ x_{I} \bigl( \ \phi \left( \sigma + \lambda_{1} , p \right) \ \bigr) \ \Bigr) \\
%\\
\end{split}
\end{equation}

\noindent Fixing $\sigma = \lambda_{2}$, we find the necessary condition that the referential gradient is a group in the connectivity parameter follows,
\begin{equation} \label{E:NecessaryCondition}
\displaystyle 
\nabla_{\eta} \ B^{\mu}_{I} \Bigl( \ x_{I} \bigl( \ \phi \left( \lambda_{2} , q \right) \ \bigr) \ \Bigr) = \nabla_{\eta} \ B^{\mu}_{I} \Bigl( \ x_{I} \bigl( \ \phi \left( \lambda_{2} + \lambda_{1} , p \right) \ \bigr) \ \Bigr) \\
\end{equation}

\noindent Equation (\ref{E:NecessaryCondition}) is true for the representation of $\nabla B$ in any coordinate chart if $\phi \left( \lambda_{2} , q \right) = \phi \left( \lambda_{2} + \lambda_{1} , p \right)$ for every  $\lambda_{1}, \lambda_{2} \in I_{r}$ and $p , q \in \mathbb{M}$, or is constant (independent of both the connectivity parameter $\lambda \in I_{r}$ and the reference point $p \in \mathbb{M}$).

To show sufficiency, we make use of the path-ordered exponential functional solution, equation (\ref{E:RefGradProofCoords}). 

For fixed $\lambda_{1} , \lambda_{2}$ such that $\lambda = \lambda_{1} + \lambda_{2} \in I_{r}$,
\begin{equation} \label{E:PexpDecomp}
\displaystyle
F^{\mu}_{\nu} \bigl( \ \lambda_{2} + \lambda_{1} , x_{I} \left( p \right) \ \bigr) = \mathcal{P} \text{exp} \biggl( \ \int_{0}^{\lambda_{1} + \lambda_{2}} d\sigma \ \nabla_{\nu} \ B^{\mu}_{I} \Bigl( \ x_{I} \bigl( \ \phi \left( \sigma , p \right) \ \bigr) \ \Bigr) \ \biggr) \\
\end{equation}

\noindent The integral argument in the path-ordered exponential may be written as a sum,
\begin{equation} \label{E:PexpDecomp}
\begin{array}{l}
\begin{split}
\displaystyle
&\int_{0}^{\lambda_{1} + \lambda_{2}} d\sigma \ \nabla_{\nu} \ B^{\mu}_{I} \Bigl( \ x_{I} \bigl( \ \phi \bigl( \sigma , p \bigr) \ \bigr) \ \Bigr) \\
%\displaystyle
%&= \int^{\lambda_{1} + \lambda_{2}}_{\lambda_{1}} d\sigma \ \nabla_{\nu} \ B^{\mu}_{I} \Bigl( \ x_{I} \bigl( \ \phi \bigl( \sigma , p \bigr) \ \bigr) \ \Bigr) + \int_{0}^{\lambda_{1}} d\sigma \ \nabla_{\nu} \ B^{\mu}_{I} \Bigl( \ x_{I} \bigl( \ \phi \bigl( \sigma , p \bigr) \ \bigr) \ \Bigr) \\
%\\
%\end{split}
%\\
%\begin{split}
%\displaystyle
%\int_{0}^{\lambda_{1} + \lambda_{2}} &d\sigma \ \nabla_{\nu} \ B^{\mu}_{I} \Bigl( \ x_{I} \bigl( \ \phi \bigl( \sigma , p \bigr) \ \bigr) \ \Bigr) \\
\displaystyle
&= \int^{\lambda_{2}}_{0} d\tau \ \nabla_{\nu} \ B^{\mu}_{I} \Bigl( \ x_{I} \bigl( \ \phi \left( \tau + \lambda_{1} , p \right) \ \bigr) \ \Bigr) + \int_{0}^{\lambda_{1}} d\sigma \ \nabla_{\nu} \ B^{\mu}_{I} \Bigl( \ x_{I} \bigl( \ \phi \left( \sigma , p \right) \ \bigr) \ \Bigr) \\
\end{split}
\end{array}
\end{equation}

\noindent Where, in the first integral on the RHS we have made the substitution $\tau = \sigma - \lambda_{1}$.

Since the flow $\phi \left( \lambda , p \right)$ forms a group in the connectivity parameter, the second integral on the RHS is equivalent to pivoting off of a different reference point; namely, $x_{I} \left( q \right) = x_{I} \bigl( \ \phi \left( 0 , q \right) \ \bigr) = x_{I} \bigl( \ \phi \left( \lambda_{1} , p \right) \ \bigr)$. Hence, we may write,
\begin{equation} \label{E:PexpDecomp}
\begin{array}{c}
\begin{split}
\displaystyle
F^{\mu}_{\nu} \bigl( \ \lambda_{2} + \lambda_{1} , p \ \bigr) = \mathcal{P} \text{exp} &\biggl( \ \int^{\lambda_{2}}_{0} d\tau \ \nabla_{\nu} \ B^{\mu}_{I} \Bigl( \ x_{I} \bigl( \ \phi \bigl( \tau , q \bigr) \ \bigr) \ \Bigr) \\
\displaystyle
&+ \int_{0}^{\lambda_{1}} d\sigma \ \nabla_{\nu} \ B^{\mu}_{I} \Bigl( \ x_{I} \bigl( \ \phi \bigl( \sigma , p \bigr) \ \bigr) \ \Bigr) \ \biggr) \\
\end{split}
\end{array}
\end{equation}

By the Baker-Campbell-Hausdorff Theorem (see Appendix \ref{A:BCHTheorem}), sufficiency follows from the condition that the (free) Lie algebra is abelian; e.g., the following commutator is zero,
\begin{equation} \label{E:GroupCommutator}
\begin{array}{l}
\displaystyle
\biggl[ \ \int^{\lambda_{2}}_{0} d\tau \ \nabla_{\nu} \ B^{\mu}_{I} \Bigl( \ x_{I} \bigl( \ \phi \bigl( \tau , q \bigr) \ \bigr) \ \Bigr) \ , \ \int_{0}^{\lambda_{1}} d\sigma \ \nabla_{\nu} \ B^{\mu}_{I} \Bigl( \ x_{I} \bigl( \ \phi \bigl( \sigma , p \bigr) \ \bigr) \ \Bigr) \ \biggr] \\
\\
%\begin{split}
%\displaystyle
%= \int^{\lambda_{2}}_{0} &d\tau \ \nabla_{\eta} \ B^{\mu}_{I} \Bigl( \ x_{I} \bigl( \ \phi \bigl( \tau , q \bigr) \ \bigr) \ \Bigr) \ \int_{0}^{\lambda_{1}} d\sigma \ \nabla_{\nu} \ B^{\eta}_{I} \Bigl( \ x_{I} \bigl( \ \phi \bigl( \sigma , p \bigr) \ \bigr) \ \Bigr) \\
%\displaystyle
%&- \int_{0}^{\lambda_{1}} d\sigma \ \nabla_{\eta} \ B^{\mu}_{I} \Bigl( \ x_{I} \bigl( \ \phi \bigl( \sigma , p \bigr) \ \bigr) \ \Bigr) \ \int^{\lambda_{2}}_{0} d\tau \ \nabla_{\nu} \ B^{\eta}_{I} \Bigl( \ x_{I} \bigl( \ \phi \bigl( \tau , q \bigr) \ \bigr) \ \Bigr) \\
%\\
%\end{split}
%\\
%\begin{split}
%\displaystyle
%= \int_{0}^{\lambda_{2}} &d\tau \ \int^{\lambda_{1}}_{0} d\sigma \ \biggl( \ \nabla_{\eta} \ B^{\mu}_{I} \Bigl( \ x_{I} \bigl( \ \phi \bigl( \tau , q \bigr) \ \bigr) \ \Bigr) \ \nabla_{\nu} \ B^{\eta}_{I} \Bigl( \ x_{I} \bigl( \ \phi \bigl( \sigma , p \bigr) \ \bigr) \ \Bigr) \\
%\displaystyle
%&- \nabla_{\eta} \ B^{\mu}_{I} \Bigl( \ x_{I} \bigl( \ \phi \bigl( \sigma , p \bigr) \ \bigr) \ \Bigr) \ \nabla_{\nu} \ B^{\eta}_{I} \Bigl( \ x_{I} \bigl( \ \phi \bigl( \tau , q \bigr) \ \bigr) \ \Bigr) \ \biggr) \\
%\\
%\end{split}
%\\
\displaystyle
= \int_{0}^{\lambda_{2}} d\tau \ \int^{\lambda_{1}}_{0} d\sigma \ \biggl[ \ \nabla_{\nu} \ B^{\mu}_{I} \Bigl( \ x_{I} \bigl( \ \phi \bigl( \tau , q \bigr) \ \bigr) \ \Bigr) , \nabla_{\nu} \ B^{\mu}_{I} \Bigl( \ x_{I} \bigl( \ \phi \bigl( \sigma , p \bigr) \ \bigr) \ \Bigr) \ \biggr] \\
\end{array}
\end{equation}

\noindent The RHS of equation (\ref{E:GroupCommutator}) is zero if the kernel is zero, to wit,
\begin{equation} \label{E:GroupCommutatorAbelian}
\displaystyle
\biggl[ \ \nabla_{\nu} \ B^{\mu}_{I} \Bigl( \ x_{I} \bigl( \ \phi \bigl( \tau , q \bigr) \ \bigr) \ \Bigr) \ , \ \nabla_{\nu} \ B^{\mu}_{I} \Bigl( \ x_{I} \bigl( \ \phi \bigl( \sigma , p \bigr) \ \bigr) \ \Bigr) \ \biggr] = 0 \\
\end{equation}

In general, equation (\ref{E:GroupCommutatorAbelian}) is non-zero unless the representation of $\nabla B$ in any coordinate chart is: 1) evaluated at the same point, or 2) independent of the connectivity parameter and reference point.

The second case is trivial. 

The first case requires the condition $\phi \left( \tau , q \right) = \phi \left( \sigma , p \right)$. Substituting for $\sigma$ the change of integration variable used in equation (\ref{E:PexpDecomp}), we recover the condition, $\phi \left( \tau , q \right) = \phi \left( \tau + \lambda_{1} , p \right)$.
%
%as long as the coordinate components $\nabla_{\nu} B^{\mu}_{I}$ are independent of both connectivity parameter $\lambda \in I_{r}$ and reference point $p , q \in \mathbb{M}$. Hence, for any coordinate representation in which the components $\nabla_{\nu} B^{\mu}_{I}$ are constant, equation (\ref{E:PexpDecomp}) reduces to,
%\begin{equation} \label{E:PexpDecompAbelian}
%\begin{array}{c}
%\begin{split}
%\displaystyle
%\mathcal{P} \text{exp} \biggl( &\ \int^{\lambda_{2}}_{0} d\tau \ \nabla_{\nu} \ B^{\mu}_{I} + \int_{0}^{\lambda_{1}} d\sigma \ \nabla_{\nu} \ B^{\mu}_{I} \ \biggr) \\
%\displaystyle
%&= \mathcal{P} \text{exp} \biggl( \ \int^{\lambda_{2}}_{0} d\tau \ \nabla_{\eta} \ B^{\mu}_{I} \ \biggr) \ \mathcal{P} \text{exp} \biggl( \ \int_{0}^{\lambda_{1}} d\sigma \ \nabla_{\nu} \ B^{\eta}_{I} \ \biggr) \\
%\\
%\end{split}
%\\
%\begin{split}
%\displaystyle
%\mathcal{P} \text{exp} \Bigl( \ \nabla_{\nu} \ B^{\mu}_{I} &\ \left( \ \lambda_{1} + \lambda_{2} \ \right) \ \Bigr) \\
%\displaystyle
%&= \mathcal{P} \text{exp} \Bigl( \ \nabla_{\nu} \ B^{\eta}_{I} \ \lambda_{1} \ \Bigr) \ \mathcal{P} \text{exp} \Bigl( \ \nabla_{\eta} \ B^{\mu}_{I} \ \lambda_{2} \ \Bigr) \\
%\\
%\end{split}
%\\
%\displaystyle
%F^{\mu}_{\nu} \bigl( \ \lambda_{2} + \lambda_{1} , x_{I} \left( p \right) \ \bigr) = F^{\mu}_{\eta} \bigl( \ \lambda_{2} , x_{I} \left( q \right) \ \bigr) \ F^{\eta}_{\nu} \bigl( \ \lambda_{1} , x_{I} \left( p \right) \ \bigr) \\
%\end{array}
%\end{equation}

\end{proof}

Lemma (\ref{lem:RefGradGroup}) simply states that the group property of the referential gradient with respect to translations of the connectivity parameter is a direct consequence of the similar group property of the flow; the constant $\nabla B$ representation is the trivial case.
%places powerful restrictions on the form of the generating vector field if the referential gradient is to form a group in the connectivity parameter. Namely, %the generating vector field components are at most linear in the coordinates of the representation such that 
%the covariant derivative is everywhere constant. However, there is no \textit{a-priori} reason to require such a restricted class of generating vector fields to model physical systems; see e.g., the simple non-trivial example of a Beltrami vector field. 
In a follow-up paper, we explore the geometric and topological implications of this lemma.

The third lemma relates the referential gradient of the respective representations of the flow. Recall from section (\ref{SS:GeneralConnectivityMap}), the flow $\phi \left( \lambda , p \right)$ of a vector field is an equivalence class under affine transformations of the connectivity parameter. As such, the flow may always be represented in a form $\psi \left( l , p \right)$ in which the connectivity parameter measures an arc-length from the reference point  via relations (\ref{E:ArcLengthCondition}), (\ref{E:ArcLengthMCFlow}), and (\ref{E:FlowArcLengthFlowRelation}).

\begin{rem}
For ease of notation, by $F \left[ \ B \ \right]$ and $F \left[ \ b \ \right]$ we mean, respectively, the Lagrangian specification of the referential gradient solution (\ref{A:PexpReferentialGradient}) constructed from of the full generating vector field $B$ with flow representation $\phi \left( \lambda , p \right)$, and that constructed from the unit generating vector field $b$ with arc-length flow representation $\psi \left( l , p \right)$.
\end{rem}

\begin{lem}
\emph{(Relations Between Referential Gradient Representations)}
\label{lem:RepresentationRelations}
Let $B : \mathbb{M} \rightarrow T \mathbb{M}$ be a non-null generating vector field. The representations $F \left[ \ B \ \right]$ and $F \left[ \ b \ \right]$ are related by,
\begin{equation} \label{E:RepresentationsRelatations}
\begin{array}{c}
\displaystyle
F \left[ \ B \ \right] = \mathcal{P} \text{exp} \int^{l}_{0} ds \ \biggl( \ \nabla b \bigl( \ \psi \left( s , p \right) \ \bigr) + \frac{b}{2} \ \frac{\nabla B^{2}}{B^{2}} \bigl( \ \psi \left( s , p \right) \ \bigr) \ \biggr) \\
\\
\displaystyle
F \left[ \ b \ \right] = \mathcal{P} \text{exp} \int^{\lambda}_{0} d\sigma \ \biggl( \ \nabla B \bigl( \ \phi \left( \sigma , p \right) \ \bigr) - \frac{B}{2} \ \frac{\nabla B^{2}}{B^{2}} \bigl( \ \phi \left( \sigma , p \right) \ \bigr) \ \biggr) \\
\end{array}
\end{equation}

\noindent Furthermore, if the (free) Lie algebras are abelian,
\begin{equation} \label{E:Commutators}
\begin{array}{c}
\displaystyle
\biggl[ \ \int^{l}_{0} ds \ \nabla b \bigl( \ \psi \left( s , p \right) \ \bigr) \ , \ \int^{l}_{0} ds \ \frac{b}{2} \ \frac{\nabla B^{2}}{B^{2}} \bigl( \ \psi \left( s , p \right) \ \bigr) \ \biggr] = 0 \\
\\
\displaystyle
\biggl[ \ \int^{\lambda}_{0} d\sigma \ \nabla B \bigl( \ \phi \left( \sigma , p \right) \ \bigr) \ , \ \int^{l}_{0} ds \ \frac{B}{2} \ \frac{\nabla B^{2}}{B^{2}} \bigl( \ \phi \left( \sigma , p \right) \ \bigr) \ \biggr] = 0 \\
\end{array}
\end{equation}

\noindent Then relations (\ref{E:RepresentationsRelatations}) may be written,
\begin{equation} \label{E:RepresentationsRelatationDecomposition}
\begin{array}{c}
\displaystyle
F \left[ \ B \ \right] = F \left[ \ b \ \right] \ \mathcal{P} \text{exp} \biggl( \ \int^{l}_{0} ds \ \frac{b}{2} \ \frac{\nabla B^{2}}{B^{2}} \bigl( \ \psi \left( s , p \right) \ \bigr) \ \biggr) \\
\\
\displaystyle
F \left[ \ b \ \right] = F \left[ \ B \ \right] \ \mathcal{P} \text{exp} \biggl( \ - \int^{\lambda}_{0} d\sigma \ \frac{B}{2} \ \frac{\nabla B^{2}}{B^{2}} \bigl( \ \phi \left( \sigma , p \right) \ \bigr) \ \biggr) \\
\end{array}
\end{equation}
\end{lem}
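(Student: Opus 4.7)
The plan is to start from the closed-form Lagrangian solution (\ref{A:PexpReferentialGradient}) for each representation and then translate between $\phi(\lambda,p)$ and $\psi(s,p)$ using two ingredients: (i) the change-of-variable relation (\ref{E:ArcLengthCondition}), $d\lambda = |B|^{-1}\,ds$ along the orbit, and (ii) the Leibniz rule applied to the decomposition $B = |B|\,b$. Concretely, to derive the first equation of (\ref{E:RepresentationsRelatations}), I would begin with
\[
F[B] = \mathcal{P}\exp \int_{0}^{\lambda} d\sigma \, \nabla B\bigl(\phi(\sigma,p)\bigr)
\]
and perform the substitution $\sigma \mapsto s(\sigma)$ defined by (\ref{E:ArcLengthCondition}). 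Because $B$ is assumed non-null, $|B|$ is strictly positive along the orbit, so $\sigma \mapsto s$ is smooth and strictly monotonic. Applied term-by-term to the series representation (\ref{A:POExponential}), this monotonicity guarantees the ordered simplex $\lambda \geq \sigma_{0} \geq \dots \geq \sigma_{n} \geq 0$ maps diffeomorphically onto $l \geq s_{0} \geq \dots \geq s_{n} \geq 0$ with the product-ordering prescription preserved, and each Jacobian factor $|B|^{-1}$ attaches naturally to its adjacent kernel factor $\nabla B$.

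The next step is an algebraic identity at each evaluation point along the orbit. From $B^{\mu} = |B|\,b^{\mu}$ and $\nabla |B| = \tfrac{1}{2} |B|^{-1} \nabla B^{2}$, the product rule gives
\[
|B|^{-1} \, \nabla B = \nabla b + \frac{b}{2} \, \frac{\nabla B^{2}}{B^{2}},
\]
and substituting this back into the reparametrized path-ordered exponential yields the first line of (\ref{E:RepresentationsRelatations}). The second line is obtained by the mirror-image computation: start from $F[b] = \mathcal{P}\exp \int_{0}^{l} ds \, \nabla b(\psi(s,p))$, change variables to $\sigma$ via $ds = |B| \, d\sigma$, and apply the rearranged identity $|B| \, \nabla b = \nabla B - \tfrac{B}{2} \, \nabla B^{2}/B^{2}$.

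For the factored form (\ref{E:RepresentationsRelatationDecomposition}), I would invoke the Baker-Campbell-Hausdorff theorem of Appendix \ref{A:BCHTheorem}. Writing the integrand of each path-ordered exponential in (\ref{E:RepresentationsRelatations}) as a sum of two pieces, the commutator-vanishing hypothesis (\ref{E:Commutators}) is precisely what allows the $\mathcal{P}\exp$ of the sum to factor into the product of the $\mathcal{P}\exp$ of each piece, producing (\ref{E:RepresentationsRelatationDecomposition}) with $F[b]$ (resp.\ $F[B]$) appearing as one factor by virtue of its own Lagrangian representation (\ref{A:PexpReferentialGradient}).

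The main obstacle is making rigorous the change of variable inside the path-ordered exponential. For a single ordinary exponential this would be a one-line substitution, but each term in the defining series (\ref{A:POExponential}) is an iterated integral over a simplex of connectivity-parameter values, and one must verify both that monotonicity of $\sigma \mapsto s$ preserves the ordering prescription of $\mathcal{P}$ and that the Jacobians distribute correctly across the nested kernel factors. Given the absolute and uniform convergence established in Theorem \ref{Thm:RefGradLagrangian}, the term-by-term manipulation is legal; the work is primarily careful bookkeeping rather than a conceptual hurdle.
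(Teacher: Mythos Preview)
Your proposal is correct and uses the same ingredients as the paper---the Leibniz identity $|B|^{-1}\nabla B = \nabla b + \tfrac{b}{2}\,\nabla B^{2}/B^{2}$, the reparametrization (\ref{E:ArcLengthCondition}), and BCH for the factored form. The one methodological difference worth noting: where you change variables term-by-term inside the series (\ref{A:POExponential}) and flag the ordering/Jacobian bookkeeping as the main obstacle, the paper instead substitutes the algebraic identity into the \emph{differential equations} (\ref{E:RefGradHolonomy}) for $F[B]$ and $F[b]$ and then invokes the uniqueness part of Theorem~\ref{Thm:RefGradLagrangian} to read off the path-ordered exponential form directly. This sidesteps the simplex-reparametrization argument entirely, so the obstacle you identify simply does not arise in the paper's route; your approach is more hands-on with the series but reaches the same conclusion.
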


\begin{rem}
We note, in the case of a null generating vector field, $B^{2} = 0$. Hence, the unit generating direction field $b$ does not exist, and therefore neither does the representation $F \left[ \ b \ \right]$.
\end{rem}

\begin{proof}
We begin by deriving relations (\ref{E:RepresentationsRelatations}). 

By Theorem (\ref{Thm:RefGradLagrangian}), each representation independently satisfies the corresponding system of equations, 
\begin{equation} \label{E:SystemRepresentation}
\begin{array}{c}
\displaystyle 
\frac{\partial}{\partial \lambda} \ F \left[ \ B \ \right] = F \left[ \ B \ \right] \cdot \nabla B \bigl( \ \phi \left( \lambda , p \right) \ \bigr) \\
\\
\displaystyle 
\frac{\partial}{\partial l} \ F \left[ \ b \ \right] = F \left[ \ b \ \right] \cdot \nabla b \bigl( \ \psi \left( l , p \right) \ \bigr) \\
\end{array}
\end{equation}

\noindent With respective reference conditions, $F \left[ \ B \ \right] = I$ at $\lambda = 0$, and $F \left[ \ b \ \right] = I$ at $l = 0$.

The full generating vector field and its unit generating vector field are related by $B = \vert B \vert \ b$. Therefore, the covariant derivative may be written, respectively,
\begin{equation} \label{E:GradbDecomp}
\begin{array}{c}
\begin{split}
\displaystyle 
\nabla B &\bigl( \ \phi \left( \sigma , p \right) \ \bigr) \\
\displaystyle
&= \vert B \vert \bigl( \ \phi \left( \sigma , p \right) \ \bigr) \ \biggl( \ \nabla b \bigl( \ \phi \left( \sigma , p \right) \ \bigr) + \frac{b}{2} \ \frac{\nabla B^{2}}{B^{2}} \bigl( \ \phi \left( \sigma , p \right) \ \bigr) \ \biggr) \\
\\
\end{split}
\\
\begin{split}
\displaystyle 
\nabla b &\bigl( \ \psi \left( s , p \right) \ \bigr) \\
\displaystyle
&= \frac{1}{\vert B \vert \bigl( \ \psi \left( s , p \right) \ \bigr)} \ \biggl( \ \nabla B \bigl( \ \psi \left( s , p \right) \ \bigr) - \frac{B}{2} \ \frac{\nabla B^{2}}{B^{2}}  \bigl( \ \psi \left( s , p \right) \ \bigr) \ \biggr) \\
\end{split}
\end{array}
\end{equation}

\noindent Where act expansion is evaluated along their respective flow representations $\phi$ and $\psi$, and in both cases the magnitude of the vector field is given by $\vert B \vert = \left( \ B \cdot B \ \right)^{1/2}$.

Substituting relations (\ref{E:GradbDecomp}) into the differential equations (\ref{E:SystemRepresentation}),
\begin{equation} \label{E:FullGVFRepresentation}
\begin{array}{c}
\begin{split}
\displaystyle 
&\frac{\partial}{\partial \lambda} \ F \left[ \ B \ \right] \\
\displaystyle
&= \vert B \vert \bigl( \ \phi \left( \sigma , p \right) \ \bigr) \ F \left[ \ B \ \right] \cdot \biggl( \ \nabla b \bigl( \ \phi  \left( \sigma , p \right) \ \bigr) + \frac{b}{2} \ \frac{\nabla B^{2}}{B^{2}} \bigl( \ \phi \left( \sigma , p \right) \ \bigr) \ \biggr) \\
\\
\end{split}
\\
\begin{split}
\displaystyle 
&\frac{\partial}{\partial l} \ F \left[ \ b \ \right] \\
\displaystyle
&= \frac{1}{\vert B \vert \bigl( \ \psi \left( s , p \right) \ \bigr)} \ F \left[ \ b \ \right] \cdot \biggl( \ \nabla B \bigl( \ \psi \left( s , p \right) \ \bigr) - \frac{B}{2} \ \frac{\nabla B^{2}}{B^{2}}  \bigl( \ \psi \left( s , p \right) \ \bigr) \ \biggr) \\
\end{split}
\end{array}
\end{equation}

Hence, by Theorem (\ref{Thm:RefGradLagrangian}), the path-ordered exponential solutions to the respective equations (\ref{E:SystemRepresentation}) and (\ref{E:FullGVFRepresentation}) must be equivalent,
\begin{equation} \label{E:ComponentRepresentationsSolutions}
\begin{array}{c}
\begin{split}
\displaystyle
&F \left[ \ B \ \right] = \mathcal{P} \text{exp} \int^{\lambda}_{0} d\sigma \ \nabla B \bigl( \ \phi \left( \sigma , p \right) \ \bigr) \\
 \displaystyle
 &= \mathcal{P} \text{exp} \ \int^{\lambda}_{0} d\sigma \ \vert B \vert \bigl( \ \phi \left( \sigma , p \right) \ \bigr) \ \biggl( \ \nabla b \bigl( \ \phi  \left( \sigma , p \right) \ \bigr) + \frac{b}{2} \ \frac{\nabla B^{2}}{B^{2}} \bigl( \ \phi \left( \sigma , p \right) \ \bigr) \ \biggr) \\
\\
\end{split}
\\
\begin{split}
\displaystyle
&F \left[ \ b \ \right]= \mathcal{P} \text{exp} \int^{l}_{0} ds \ \nabla b \bigl( \ \psi \left( s , p \right) \ \bigr) \\
\displaystyle
&= \mathcal{P} \text{exp} \ \int^{l}_{0} \ \frac{ds}{\vert B \vert \bigl( \ \psi \left( s , p \right) \ \bigr)} \ \biggl( \ \nabla B \bigl( \ \psi \left( s , p \right) \ \bigr) - \frac{B}{2} \ \frac{\nabla B^{2}}{B^{2}} \bigl( \ \psi \left( s , p \right) \ \bigr) \ \biggr) \\
\\
\end{split}
\end{array}
\end{equation}

The affine transformation (\ref{E:ArcLengthCondition}) remunerating the connectivity parameter is a diffeomorphism between the full generating vector field representation and the the arc length representation, $\phi \left( \lambda , p \right) \rightarrow \psi \left( l , p \right)$. Hence, equations (\ref{E:ComponentRepresentationsSolutions}) become (\ref{E:RepresentationsRelatations}),
\begin{equation} \label{E:ComponentRepresentationsSolutions2}
\begin{array}{c}
\begin{split}
\displaystyle
F \left[ \ B \ \right] &= \mathcal{P} \text{exp} \int^{\lambda}_{0} d\sigma \ \nabla B \bigl( \ \phi \left( \sigma , p \right) \ \bigr) \\
 \displaystyle
 &= \mathcal{P} \text{exp} \ \int^{l}_{0} ds \ \biggl( \ \nabla b \bigl( \ \psi \left( s , p \right) \ \bigr) + \frac{b}{2} \ \frac{\nabla B^{2}}{B^{2}} \bigl( \ \psi \left( s , p \right) \ \bigr) \ \biggr) \\
\\
\end{split}
\\
\begin{split}
\displaystyle
F \left[ \ b \ \right] &= \mathcal{P} \text{exp} \int^{l}_{0} ds \ \nabla b \bigl( \ \psi \left( s , p \right) \ \bigr) \\
\displaystyle
&= \mathcal{P} \text{exp} \ \int^{\lambda}_{0} d\sigma \ \biggl( \ \nabla B \bigl( \ \phi \left( \sigma , p \right) \ \bigr) - \frac{B}{2} \ \frac{\nabla B^{2}}{B^{2}} \bigl( \ \phi \left( \sigma , p \right) \ \bigr) \ \biggr) \\
\\
\end{split}
\end{array}
\end{equation}

We proceed to prove the remaining portion of the Lemma by noting, in general, the (free) Lie algebras generated by the integral terms on the respective LHS of equations (\ref{E:RepresentationsRelatations}) are not abelian. However, in the special case that the commutators,
\begin{equation} \label{E:Commutators}
\begin{array}{c}
\displaystyle
\biggl[ \ \int^{l}_{0} ds \ \nabla b \bigl( \ \psi \left( s , p \right) \ \bigr) \ , \ \int^{l}_{0} ds \ \frac{b}{2} \ \frac{\nabla B^{2}}{B^{2}} \bigl( \ \psi \left( s , p \right) \ \bigr) \ \biggr]  = 0 \\
\\
\displaystyle
\biggl[ \ \int^{\lambda}_{0} d\sigma \ \nabla B \bigl( \ \phi \left( \sigma , p \right) \ \bigr) \ , \ -\int^{\lambda}_{0} d\sigma \ \frac{B}{2} \ \frac{\nabla B^{2}}{B^{2}} \bigl( \ \phi \left( \sigma , p \right) \ \bigr) \ \biggr] = 0 \\
\\
\end{array}
\end{equation}

\noindent The relations (\ref{E:RepresentationsRelatationDecomposition}) follow immediately from an application of the BCH Theorem (see Appendix \ref{A:BCHTheorem}) to equations (\ref{E:ComponentRepresentationsSolutions}),
\begin{equation} \label{E:RepresentationDecomposition}
\begin{array}{c}
\begin{split}
\displaystyle
&F \left[ \ B \ \right] \\
\displaystyle
&= \mathcal{P} \text{exp} \biggl( \ \int^{l}_{0} ds \ \nabla b \bigl( \ \psi \left( s , p \right) \ \bigr) \ \biggr) \ \mathcal{P} \text{exp} \biggl( \ \int^{l}_{0} ds \ \frac{b}{2} \ \frac{\nabla B^{2}}{B^{2}} \bigl( \ \psi \left( s , p \right) \ \bigr) \ \biggr) \\
\\
\end{split}
\\
\begin{split}
\displaystyle
&F \left[ \ b \ \right] \\
\displaystyle
&= \mathcal{P} \text{exp} \biggl( \ \int^{\lambda}_{0} d\sigma \ \nabla B \bigl( \ \phi \left( \sigma , p \right) \ \bigr) \ \biggr) \ \mathcal{P} \text{exp} \biggl( \ -\int^{\lambda}_{0} d\sigma \ \frac{B}{2} \ \frac{\nabla B^{2}}{B^{2}} \bigl( \ \phi \left( \sigma , p \right) \ \bigr) \ \biggr) \\
\\
\end{split}
\end{array}
\end{equation}

\end{proof}

Given a general affine transformation remunerating the connectivity parameter, lemma (\ref{lem:RepresentationRelations}) amounts to a change of integration variable prescription; see equations (\ref{E:ComponentRepresentationsSolutions}) and equations (\ref{E:ComponentRepresentationsSolutions2}). As such, this lemma applies to any field of scalar multiples of the generating vector field and associated flow representation.

\subsection{Eulerian Specification of the Referential Gradient}
\label{SS:RefGradDynamics}

In this section, we derive an equivalent Eulerian specification of the referential gradient; that is the sixteen, coupled, non-linear partial differential equations which govern the referential gradient field dynamics in a coordinate chart. Furthermore, there is no identified closed form general solution in the Eulerian specification akin to equation (\ref{A:PexpReferentialGradient}).

We establish the coordinate representation of the rate of change with respect to the connectivity parameter, via the covariant derivative along the direction of the generating vector field given in coordinates. The following proposition follows as a consequence of equation (\ref{E:MCFlowCoords}), 
\begin{prop}
The rate of change with respect to connectivity parameter is given by the intrinsic directional derivative operator in the direction of the generating vector field,
%In a coordinate chart $\mathbb{U}_{I}$, the representation of the rate of change with respect to connectivity parameter is given by the \textit{intrinsic} derivative operator,
\begin{equation} \label{E:ROCConnParam}
\displaystyle
\frac{\partial}{\partial \lambda} \mapsto \frac{D}{D\lambda} \equiv \nabla_{B} \\
%\displaystyle
%\frac{\partial}{\partial \lambda} \mapsto \frac{D}{D\lambda} \equiv B^{\nu}_{I} \left( \ x_{I} \ \right) \ \nabla_{\nu} \\
\end{equation}
\end{prop}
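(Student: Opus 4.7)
The plan is to combine the definition of the intrinsic (connection-compatible) derivative of a tensor field along a smooth curve with the flow equation (\ref{E:MCFlowCoords}), which identifies the tangent vector to the orbit $\phi_p(\lambda)$ with the generating vector field $B$ evaluated along that orbit. No additional machinery is required.

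In a chart $\mathbb{U}_I$, for any smooth mixed tensor field $t \in T^m_n(\mathbb{M})$ evaluated along the orbit $\phi_p(\lambda)$, the standard definition of the covariant derivative along the curve is
\begin{equation}
\frac{D}{D\lambda}\, t^{\mu_1\dots \mu_m}_{\nu_1\dots \nu_n}\bigl(x_I(\phi(\lambda,p))\bigr) = \frac{\partial x^\alpha_I(\phi(\lambda,p))}{\partial \lambda}\, \nabla_\alpha\, t^{\mu_1\dots \mu_m}_{\nu_1\dots \nu_n},
\end{equation}
which reduces to the elementary chain rule in the scalar case $m=n=0$ (where covariant and partial derivatives agree) and otherwise incorporates precisely the Christoffel terms required for tensorial invariance, compatible with equation (\ref{E:TensorCovDeriv}). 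Substituting the flow equation (\ref{E:MCFlowCoords}), i.e. $\partial x^\alpha_I/\partial \lambda = B^\alpha_I$, yields
\begin{equation}
\frac{D}{D\lambda}\, t^{\mu_1\dots \mu_m}_{\nu_1\dots \nu_n} = B^\alpha_I\, \nabla_\alpha\, t^{\mu_1\dots \mu_m}_{\nu_1\dots \nu_n},
\end{equation}
which is exactly the component expression of the coordinate-free operator $\nabla_B = i_B \circ \nabla$ introduced in Section \ref{SS:Preliminaries}. Since the identification is tensorial in character, the equality $D/D\lambda = \nabla_B$ holds independently of the chart chosen.

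The result is essentially immediate from (\ref{E:MCFlowCoords}) once the standard definition of $D/D\lambda$ is unpacked, and I do not anticipate any substantive obstacle. The only conceptual point worth emphasizing in the write-up is that the bare componentwise operator $\partial/\partial\lambda$ is not intrinsic on tensors of rank $\geq 1$, which is precisely why the identification uses the connection-compatible promotion $D/D\lambda$ on the left-hand side, and why the covariant directional derivative $\nabla_B$ (rather than the non-covariant $B^\mu\,\partial_\mu$) appears on the right.
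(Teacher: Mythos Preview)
Your argument is correct and rests on the same single ingredient as the paper's proof, namely the flow equation (\ref{E:MCFlowCoords}) identifying the tangent to the orbit with $B$. The only difference is one of scope: the paper verifies the operator identity by applying $\nabla_B$ to the flow map $\phi(\lambda,p)$ itself and recovering the defining relation $\frac{D}{D\lambda}\phi = B(\phi)$, whereas you establish $D/D\lambda = B^{\alpha}\nabla_{\alpha}$ on an arbitrary tensor field along the orbit via the chain rule. Your version is slightly more general and makes explicit why the covariant promotion $D/D\lambda$ (rather than bare $\partial/\partial\lambda$) is the correct object on tensors of nonzero rank; the paper's one-line check is the minimal consistency verification. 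Substantively the two are the same proof.
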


\begin{proof}
Applying operator (\ref{E:ROCConnParam}) to $\phi \left( \lambda , p \right)$ yields the identity,
\begin{equation} \label{E:ROCConnParamProof}
\displaystyle
\frac{D}{D\lambda} \ \phi \left( \lambda , p \right) = \nabla_{B} \ \phi \left( \lambda , p \right) = i_{B} \circ \nabla \ \phi \left( \lambda , p \right) = B \bigl( \ \phi \left( \lambda , p \right) \ \bigr) \\
\end{equation}

\end{proof}

Recall the remark of section (\ref{SS:GeneralConnectivityMap}) following equation (\ref{E:MCFlowCoords}), in a coordinate chart $\mathbb{U}_{I}$ the units of the flow $\phi \left( \lambda , p \right)$ are identified with positions $x_{I} \bigl( \ \phi \left( \lambda , p \right) \ \bigr) \in \mathbb{U}_{I}$. Hence, in the chart $\mathbb{U}_{I}$,
\begin{equation} \label{E:ROCConnParamProofCoords}
\begin{array}{c}
%\displaystyle
%\frac{D}{D\lambda} \ x^{\mu}_{I} \bigl( \ \phi \left( \lambda , p \right) \ \bigr) = B^{\nu}_{I} \Bigl( \ x_{I} \bigl( \ \phi \left( \lambda , p \right) \ \bigr) \ \Bigr) \ \nabla_{\nu} \ x^{\mu}_{I} \bigl( \ \phi \left( \lambda , p \right) \ \bigr) \\
%\\
%\displaystyle
%\frac{D}{D\lambda} \ x^{\mu}_{I} \bigl( \ \phi \left( \lambda , p \right) \ \bigr) = B^{\nu}_{I} \Bigl( \ x_{I} \bigl( \ \phi \left( \lambda , p \right) \ \bigr) \ \Bigr) \ \frac{\partial}{\partial x^{\nu}_{I}} \ x^{\mu}_{I} \bigl( \ \phi \left( \lambda , p \right) \ \bigr) + \Gamma^{\mu}_{\nu \rho} \left( \ x_{I} \ \right) \ B^{\nu}_{I} \Bigl( \ x_{I} \bigl( \ \phi \left( \lambda , p \right) \ \bigr) \ \Bigr) \ x^{\rho}_{I} \bigl( \ \phi \left( \lambda , p \right) \ \bigr) \\
%\\
\displaystyle
\frac{D}{D\lambda} \ x^{\mu}_{I} \bigl( \ \phi \left( \lambda , p \right) \ \bigr) = B^{\mu}_{I} \Bigl( \ x_{I} \bigl( \ \phi \left( \lambda , p \right) \ \bigr) \ \Bigr) \\
\end{array}
\end{equation}

\begin{ex}
In a four-dimensional spacetime coordinate chart $\mathbb{U}_{I}$, identify the generating field with the four-velocity field $U^{\mu}_{I} \left( \ x_{I} \ \right)$; the associated connectivity parameter is then identified with proper time $\tau$. Hence, equation (\ref{E:ROCConnParam}) is the intrinsic derivative operator with respect to proper time (see e.g., Ref. \cite{MihalasWeibel1984} $\S$ 4.2)
\begin{equation} \label{Ex:4VelIntrinsicDeriv}
\displaystyle
\frac{D}{D\tau} = U^{\nu}_{I} \left( \ x_{I} \ \right) \ \nabla_{\nu} \\
\end{equation}

\noindent For simplicity, assume Minkowski spacetime and Cartesian coordinates; $g = \text{diag} \left( -1 , 1 , 1 , 1 \right)$ and $\Gamma \left( \ x_{I} \ \right) = 0$. Thus, the four-velocity is given by $U^{\mu}_{I} \left( \ x_{I} \ \right) = \gamma \ \bigl( \ c \ , \ \mathbf{v} \ \bigr)$. Then, in the non-relativistic limit $\gamma \rightarrow 1$, equation (\ref{Ex:4VelIntrinsicDeriv}) reduces to the standard material derivative,
\begin{equation} \label{Ex:4VelIntrinsicDeriv1}
\displaystyle
\frac{D}{D\tau} = \frac{\partial}{\partial t} + \mathbf{v} \cdot \nabla \\
\end{equation}
\end{ex}

\begin{ex}
Identify the generating field with the four-magnetic field given by equation (\ref{Ex:FourMagneticField}). For simplicity, assume Minkowski spacetime and Cartesian coordinates; $g = \text{diag} \left( -1 , 1 , 1 , 1 \right)$ and $\Gamma \left( \ x_{I} \ \right) = 0$. Hence,
\begin{equation} \label{Ex:4MagFieldMink}
\displaystyle
B^{\mu}_{I} \left( \ x_{I} \ \right) = \gamma \ \biggl( \ \frac{\mathbf{B} \cdot \mathbf{v}}{c} \ , \ \mathbf{B}^{i} - \frac{\left( \ \mathbf{v} \times \mathbf{E} \ \right)^{i}}{c^{2}} \ \biggr) \\
%\displaystyle
%B^{\mu}_{I} \left( \ x_{I} \ \right) = \gamma \ \biggl( \ \frac{\mathbf{B} \left( \ t , \mathbf{x} \ \right) \cdot \mathbf{v} \left( \ t , \mathbf{x} \ \right)}{c} \ , \ \mathbf{B} \left( \ t , \mathbf{x} \ \right) - \frac{\mathbf{v} \left( \ t , \mathbf{x} \ \right) \times \mathbf{E} \left( \ t , \mathbf{x} \ \right)}{c^{2}} \ \biggr) \\
\end{equation}

\noindent Where $\mathbf{E} = \mathbf{E} \left( \ t , \mathbf{x} \ \right)$ and $\mathbf{B} = \mathbf{B}  \left( \ t , \mathbf{x} \ \right)$ respectively denote the three-space electric and magnetic vector fields and $\mathbf{v}$ is the three-velocity of the observer, and $\gamma$ is the Lorentz factor.

Hence, the intrinsic derivative operator with respect to the connectivity parameter is,
\begin{equation} \label{Ex:4MagFieldIntrinsicDeriv}
\displaystyle
\frac{D}{D\lambda} = \gamma \ \frac{\mathbf{B} \cdot \mathbf{v}}{c^{2}} \ \frac{\partial}{\partial t} + \gamma \ \biggl( \ \mathbf{B}^{j} - \frac{\left( \ \mathbf{v} \times \mathbf{E} \ \right)^{j}}{c^{2}} \ \biggr) \ \nabla_{j} \\
%\displaystyle
%\frac{D}{D\lambda} = \gamma \ \frac{\mathbf{B} \left( \ t , \mathbf{x} \ \right) \cdot \mathbf{v} \left( \ t , \mathbf{x} \ \right)}{c^{2}} \ \frac{\partial}{\partial t} + \gamma \ \biggl( \ \mathbf{B}^{j} \left( \ t , \mathbf{x} \ \right) - \frac{\Bigl( \ \mathbf{v} \left( \ t , \mathbf{x} \ \right) \times \mathbf{E} \left( \ t , \mathbf{x} \ \right) \ \Bigr)^{j}}{c^{2}} \ \biggr) \ \nabla_{j} \\
\end{equation}

\noindent Which, in the non-relativistic limit $\vert \ \mathbf{v} \ \vert << c$, reduces to the (stationary) directional derivative along the field,
\begin{equation} \label{Ex:4MagFieldNonRelLimitIntrinsicDeriv}
\displaystyle
\frac{D}{D\lambda} = \mathbf{B} \cdot \nabla \\
\end{equation}
\end{ex}

Having identified the coordinate representation of the rate of change with respect to connectivity parameter, the following theorem is immediately obvious.

\begin{thm}
\emph{(Eulerian Specification of the Referential Gradient)}
\label{Thm:RefGradDynamics}
Let $B : \mathbb{M} \rightarrow T \mathbb{M}$ be a smooth, complete generating vector field with associated flow $\phi : \mathbb{R} \times \mathbb{M} \rightarrow \mathbb{M}$. The Eulerian specification of the referential gradient $F : I_{r} \times \mathcal{B}_{r} \rightarrow GL \left( 4 , \mathbb{R} \right)$, for $\mathcal{B}_{r} \subseteq \mathbb{M}$ and $I_{r} \subseteq \mathbb{R}$, is the coordinate representation of the Lie derivative with respect to the generating vector field,
\begin{equation} \label{Thm:DirRefGradCommutator}
\displaystyle
\mathcal{L}_{B \left( \ \phi \left( \lambda , p \right) \ \right)} \ F \left( \ \lambda , p \ \right) \cdot \hat{h} = \biggl[ \ B \bigl( \ \phi \left( \lambda , p \right) \ \bigr) \ , \ F \left( \ \lambda , p \ \right) \cdot \hat{h} \ \biggr] \\
%\displaystyle
%\biggl[ \ B \bigl( \ \phi \left( \lambda , p \right) \ \bigr) \ , \ F \left( \ \lambda , p \ \right) \cdot \hat{h} \ \biggr] = 0 \\
\end{equation}

\noindent for arbitrary reference shift direction vector $\hat{h} \in T_{p} \mathbb{M}$, and all $\lambda \in I_{r}$.

Moreover, if the metric connection is torsion-free, the Eulerian specification of the referential gradient reduces to,
\begin{equation} \label{Thm:DirRefGradCommutator2}
\displaystyle
\mathcal{L}_{B \left( \ \phi \left( \lambda , p \right) \ \right)} \ F \left( \ \lambda , p \ \right) \cdot \hat{h} = 0 \\
\end{equation}
\end{thm}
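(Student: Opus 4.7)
The plan is to prove the two statements by leveraging the Lagrangian specification already established in Theorem \ref{Thm:RefGradLagrangian}, together with the proposition identifying $\partial/\partial \lambda$ with the intrinsic directional derivative $\nabla_{B}$ along the flow. The first identity \eqref{Thm:DirRefGradCommutator} is essentially the standard identity for the Lie derivative of a vector field along a vector field; the real content of the theorem is the vanishing statement \eqref{Thm:DirRefGradCommutator2} for torsion-free connections.

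For the first part, I would observe that, at each $\lambda \in I_{r}$ and fixed reference shift $\hat{h}$, the quantity $F \left( \lambda , p \right) \cdot \hat{h}$ is a vector in $T_{\phi \left( \lambda , p \right)} \mathbb{M}$, which by varying $p$ becomes a vector field along the flow. Then the standard identity $\mathcal{L}_{X} Y = \bigl[ \ X , Y \ \bigr]$ for the Lie derivative of a vector field with respect to a vector field applies directly to $X = B \bigl( \ \phi \left( \lambda , p \right) \ \bigr)$ and $Y = F \left( \lambda , p \right) \cdot \hat{h}$, yielding \eqref{Thm:DirRefGradCommutator} immediately. This step requires only a clean interpretation of $F \cdot \hat{h}$ as a vector field and no additional machinery.

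For the second part, I would use the torsion-free assumption on the connection, under which the Lie bracket decomposes as
\begin{equation} \nonumber
\bigl[ \ B , F \cdot \hat{h} \ \bigr] = \nabla_{B} \bigl( \ F \cdot \hat{h} \ \bigr) - \nabla_{F \cdot \hat{h}} \ B .
\end{equation}
The proposition gives $\nabla_{B} = \partial / \partial \lambda$ along the flow; since the reference shift $\hat{h} \in T_{p} \mathbb{M}$ is fixed and independent of $\lambda$, this becomes
\begin{equation} \nonumber
\nabla_{B} \bigl( \ F \cdot \hat{h} \ \bigr) = \frac{\partial F \left( \lambda , p \right)}{\partial \lambda} \cdot \hat{h} .
\end{equation}
Theorem \ref{Thm:RefGradLagrangian} then replaces the right-hand side with $\nabla_{F \cdot \hat{h}} \ B$, and the two terms of the bracket cancel exactly, yielding \eqref{Thm:DirRefGradCommutator2}.

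The main obstacle I expect is notational and interpretive rather than computational: one must be careful that $F \left( \lambda , p \right) \cdot \hat{h}$, while defined pointwise along an orbit, is being treated as a bona fide vector field on a neighborhood $\mathcal{B}_{r} \left( p \right)$ so that $\nabla_{B}$ and $\nabla_{F \cdot \hat{h}}$ make sense as differential operators on it, and that the independence of $\hat{h}$ from $\lambda$ is genuinely what allows the derivative of $F \cdot \hat{h}$ to be reduced to $\left( \partial F / \partial \lambda \right) \cdot \hat{h}$. Once these interpretive points are fixed, the arbitrariness of $\hat{h}$ promotes the identity to the full coordinate-free statement, producing in any chart the sixteen coupled first-order PDEs asserted in the theorem.
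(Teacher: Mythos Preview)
Your proposal is correct and follows essentially the same approach as the paper: both start from the Lagrangian specification of Theorem~\ref{Thm:RefGradLagrangian} (specifically equation~\eqref{E:DirectionalRefGradHolonomyProof}), invoke the proposition identifying $\partial/\partial\lambda$ with $\nabla_{B}$ along the flow, and then recognize that $\nabla_{B}(F\cdot\hat{h}) = \nabla_{F\cdot\hat{h}}B$ is equivalent to the vanishing of the Lie bracket when the connection is torsion-free. The only cosmetic difference is that the paper carries the computation out in a coordinate chart---expanding the covariant derivatives into partial derivatives plus connection coefficients, and isolating the torsion tensor $T^{\mu}_{\alpha\rho}$ explicitly on the right-hand side before specializing to $T=0$---whereas you invoke the coordinate-free identity $[X,Y]=\nabla_{X}Y-\nabla_{Y}X$ directly; the underlying argument is the same.
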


\begin{proof}
The Eulerian specification of the referential gradient follows upon application of proposition (\ref{E:ROCConnParam}) to equation (\ref{E:DirectionalRefGradHolonomyProof}).
\begin{equation} \label{E:DirRefGradCommutatorProof}
%\begin{array}{c}
%\displaystyle
%\frac{D}{D\lambda} \ \Bigl( \ F \left( \ \lambda , p \ \right) \cdot \hat{h} \ \Bigr) = \Bigl( \ F \left( \ \lambda , p \ \right) \cdot \hat{h} \ \Bigr) \cdot \nabla \ B \Bigl( \ \phi \bigl( \lambda , p \bigr) \ \Bigr) \\
%\\
\displaystyle
B \Bigl( \ \phi \bigl( \lambda , p \bigr) \ \Bigr) \cdot \nabla \ \Bigl( \ F \left( \ \lambda , p \ \right) \cdot \hat{h} \ \Bigr) = \Bigl( \ F \left( \ \lambda , p \ \right) \cdot \hat{h} \ \Bigr) \cdot \nabla \ B \Bigl( \ \phi \bigl( \lambda , p \bigr) \ \Bigr) \\
%\\
%\displaystyle
%\nabla_{B} \ F \cdot h - \nabla_{F \cdot h} \ B = 0 \\
%\end{array}
\end{equation}

To construct the coordinate representation of equation (\ref{E:DirRefGradCommutatorProof}), let $\mathcal{B}_{r} \left( p \right) \subset \mathbb{U}_{I}$. By equation (\ref{E:RefGradHolonomy}) of Theorem (\ref{Thm:RefGradLagrangian}), in the coordinate chart $\mathbb{U}_{I}$ the lineal element $\Bigl( \ \lambda , x_{I} \left( p \right) \ \Bigr) \in I_{r} \times \mathbb{U}_{I}$ characterizing the Lagrangian specification of the referential gradient at a distance $\lambda \in I_{r}$ from the reference point $x_{I} \left( p \right) \in \mathbb{U}_{I}$, is equivalent to the point $x_{I} \Bigl( \ \phi \left( \lambda , p \right) \ \Bigr) \in \mathbb{U}_{I}$; hence,
\begin{equation} \label{E:LinealElementEquiv}
\displaystyle
F^{\mu}_{\nu}  \Bigl( \ \lambda , x_{I} \left( p \right) \ \Bigr) = F^{\mu}_{\nu} \Bigl( \ x_{I} \bigl( \ \phi \left( \lambda , p \right) \ \bigr) \ \Bigr) \\
\end{equation}

\begin{rem}
For ease of notation, in what follows we represent the general position $x_{I} = x_{I} \bigl( \ \phi \left( \lambda , p \right) \ \bigr)$ at variable $\lambda \in I_{r}$ in the coordinate chart $\mathbb{U}_{I}$. Hence, the position at $\lambda = 0$ may be written as $x_{0 I} = x_{I} \bigl( \ \phi \left( 0 , p \right) \ \bigr)$. 
\end{rem}

In the coordinate chart $\mathbb{U}_{I}$, therefore, the $\hat{e}^{I}_{\mu}$ component of equation (\ref{E:DirRefGradCommutatorProof}) is
\begin{equation} \label{E:DirRefGradCoordDynamics}
\begin{array}{c}
\displaystyle 
B^{\alpha}_{I} \left( \ x_{I} \ \right) \ \nabla_{\alpha} \ \Bigl( \ F^{\mu}_{\nu} \left( \ x_{I} \ \right) \ \hat{h}^{\nu}_{I} \ \Bigr) = \Bigl( \ F^{\alpha}_{\nu} \left( \ x_{I} \ \right) \ \hat{h}^{\nu}_{I} \ \Bigr) \ \nabla_{\alpha} \ B^{\mu}_{I} \left( \ x_{I} \ \right) \\
%\begin{split}
%\displaystyle 
%B^{\alpha}_{I} \Bigl( \ x_{I} \bigl( \ \phi \left( \lambda , p \right) \ \bigr) \ \Bigr) \ \nabla_{\alpha} \ \biggl( \ & F^{\mu}_{\nu} \Bigl( \ x_{I} \bigl( \ \phi \left( \lambda , p \right) \ \bigr) \ \Bigr) \ \hat{h}^{\nu}_{I} \ \biggr) \\
%\displaystyle
%&= F^{\alpha}_{\nu} \Bigl( \ x_{I} \bigl( \ \phi \left( \lambda , p \right) \ \bigr) \ \Bigr) \ \hat{h}^{\nu}_{I} \ \nabla_{\alpha} \ B^{\mu}_{I} \Bigl( \ x_{I} \bigl( \ \phi \left( \lambda , p \right) \ \bigr) \ \Bigr) \\
%\end{split}
\end{array}
\end{equation}

\noindent Expanding the covariant derivatives in equation (\ref{E:DirRefGradCoordDynamics}),
\begin{equation} \label{E:CovDerDirRefGradCoord}
%\begin{array}{c}
%\begin{split}
%\displaystyle
%B^{\alpha}_{I} \left( \ x_{I} \ \right) \ & \biggl( \ \frac{\partial}{\partial x^{\alpha}_{I}} \ \Bigl( \ F^{\mu}_{\nu} \left( \ x_{I} \ \right) \ \hat{h}^{\nu}_{I} \ \Bigr) + \Gamma^{\mu}_{\alpha \rho} \left( \ x_{I} \ \right) \ F^{\rho}_{\nu} \left( \ x_{I} \ \right) \ \hat{h}^{\nu}_{I} \ \biggl) \\
%\displaystyle
%& = \Bigl( \ F^{\alpha}_{\nu} \left( \ x_{I} \ \right) \ \hat{h}^{\nu}_{I} \ \Bigr) \ \biggl( \ \frac{\partial}{\partial x^{\alpha}_{I}} \ B^{\mu}_{I} \left( \ x_{I} \ \right) + \Gamma^{\mu}_{\alpha \rho} \left( \ x_{I} \ \right) \ B^{\rho}_{I} \left( \ x_{I} \ \right) \ \biggl) \\
%\\
%\end{split}
%\\
%\begin{split}
%\displaystyle
%B^{\alpha}_{I}&  \left( \ x_{I} \ \right) \ \frac{\partial}{\partial x^{\alpha}_{I}} \ \Bigl( \ F^{\mu}_{\nu} \left( \ x_{I} \ \right) \ \hat{h}^{\nu}_{I} \ \Bigr) - \Bigl( \ F^{\alpha}_{\nu} \left( \ x_{I} \ \right) \ \hat{h}^{\nu}_{I} \ \Bigr) \ \frac{\partial}{\partial x^{\alpha}_{I}} \ B^{\mu}_{I} \left( \ x_{I} \ \right) \\
%\displaystyle
%& = \Gamma^{\mu}_{\alpha \rho} \left( \ x_{I} \ \right) \ \Bigl( \ F^{\alpha}_{\nu} \left( \ x_{I} \ \right) \ \hat{h}^{\nu}_{I} \ \Bigr) \ B^{\rho}_{I} \left( \ x_{I} \ \right) - \Gamma^{\mu}_{\alpha \rho} \left( \ x_{I} \ \right) \ B^{\alpha}_{I} \left( \ x_{I} \ \right) \ \Bigl( \ F^{\rho}_{\nu} \left( \ x_{I} \ \right) \ \hat{h}^{\nu}_{I} \ \Bigr) \\
%\\
%\end{split}
%\\
\begin{split}
\displaystyle
B^{\alpha}_{I} \left( \ x_{I} \ \right) \ \frac{\partial}{\partial x^{\alpha}_{I}} \ \Bigl( \ F^{\mu}_{\nu} &\left( \ x_{I} \ \right) \ \hat{h}^{\nu}_{I} \ \Bigr) - \Bigl( \ F^{\alpha}_{\nu} \left( \ x_{I} \ \right) \ \hat{h}^{\nu}_{I} \ \Bigr) \ \frac{\partial}{\partial x^{\alpha}_{I}} \ B^{\mu}_{I} \left( \ x_{I} \ \right) \\
\displaystyle
& = T^{\mu}_{\alpha \rho} \left( \ x_{I} \ \right) \ \Bigl( \ F^{\alpha}_{\nu} \left( \ x_{I} \ \right) \ \hat{h}^{\nu}_{I} \ \Bigr) \ B^{\rho}_{I} \left( \ x_{I} \ \right) \\
%\\
\end{split}
%\end{array}
\end{equation}

\noindent where the torsion $T^{\mu}_{\rho \alpha} \left( \ x_{I} \ \right) = 2 \ \Gamma^{\mu}_{[ \rho \alpha ]} \left( \ x_{I} \ \right)$. The LHS of equation (\ref{E:CovDerDirRefGradCoord}) is simply the Lie bracket of the vector fields $B \left( \ x_{I} \ \right)$ and $F \left( \ x_{I} \ \right) \cdot \hat{h}$. 

Hence, the $\mu^{th}$-component of the Lie derivative of the vector $F \left( \ x_{I} \ \right) \cdot \hat{h}$ with respect to the generating vector field $B \left( \ x_{I} \ \right)$ is given by,
\begin{equation} \label{E:LieDerivCoords}
%\begin{array}{c}
%\displaystyle
%\Bigl( \ \mathcal{L}_{B_{I} \left( x_{I} \right)} \ F \left( \ x_{I} \ \right) \cdot \hat{h}_{I} \ \Bigr)^{\mu} = B^{\alpha}_{I} \left( \ x_{I} \ \right) \ \frac{\partial}{\partial x^{\alpha}_{I}} \ \Bigl( \ F^{\mu}_{\nu} \left( \ x_{I} \ \right) \ \hat{h}^{\nu}_{I} \ \Bigr) - \Bigl( \ F^{\alpha}_{\nu} \left( \ x_{I} \ \right) \ \hat{h}^{\nu}_{I} \ \Bigr) \ \frac{\partial}{\partial x^{\alpha}_{I}} \ B^{\mu}_{I} \left( \ x_{I} \ \right) \\
%\\
\displaystyle
\Bigl( \ \mathcal{L}_{B_{I} \left( x_{I} \right)} \ F \left( \ x_{I} \ \right) \cdot \hat{h}_{I} \ \Bigr)^{\mu} = T^{\mu}_{\alpha \rho} \left( \ x_{I} \ \right) \ \Bigl( \ F^{\alpha}_{\nu} \left( \ x_{I} \ \right) \ \hat{h}^{\nu}_{I} \ \Bigr) \ B^{\rho}_{I} \left( \ x_{I} \ \right)
%\end{array}
\end{equation}

\noindent %with reference condition $F^{\mu}_{\nu} \left( \ x_{0 I} \ \right) = \delta^{\mu}_{\nu}$. 
Moreover, equation (\ref{Thm:DirRefGradCommutator2}) follows immediately for a torsion-free metric connection $T^{\mu}_{\rho \alpha} = 0$.

\end{proof}

In the practical application of Theorem (\ref{Thm:RefGradDynamics}), it is useful to use the Levi-Civita connection. Furthermore, recall the generating vector field $B \left( \ x_{I} \ \right)$ is assumed known \textit{a-priori} everywhere. Noting the reference shift direction vector $\hat{h}_{I} = \hat{h}^{\mu}_{I} \ \hat{e}^{I}_{\mu} \in T_{p} \mathbb{U}_{I}$ is an arbitrary albeit fixed (i.e., coordinate-independent) vector, the Eulerian specification of the referential gradient is given by the following set of (sixteen) coupled first-order, partial differential equations,
\begin{equation} \label{E:RefGradDynamicsLC}
%\begin{array}{c}
%\displaystyle
%B^{\alpha}_{I} \left( \ x_{I} \ \right) \ \frac{\partial}{\partial x^{\alpha}_{I}} \ \Bigl( \ F^{\mu}_{\nu} \left( \ x_{I} \ \right) \ \hat{h}^{\nu}_{I} \ \Bigr) - \Bigl( \ F^{\alpha}_{\nu} \left( \ x_{I} \ \right) \ \hat{h}^{\nu}_{I} \ \Bigr) \ \frac{\partial B^{\mu} \left( \ x_{I} \ \right)}{\partial x^{\alpha}_{I}} = 0 \\
%\\
%\displaystyle
%B^{\alpha}_{I} \left( \ x_{I} \ \right) \ \frac{\partial F^{\mu}_{\nu} \left( \ x_{I} \ \right)}{\partial x^{\alpha}_{I}} \ \hat{h}^{\nu}_{I} - F^{\alpha}_{\nu} \left( \ x_{I} \ \right) \ \hat{h}^{\nu}_{I} \ \frac{\partial B^{\mu} \left( \ x_{I} \ \right)}{\partial x^{\alpha}_{I}} = 0 \\
%\\
%\displaystyle
%\biggl( \ B^{\alpha}_{I} \left( \ x_{I} \ \right) \ \frac{\partial F^{\mu}_{\nu} \left( \ x_{I} \ \right)}{\partial x^{\alpha}_{I}} - F^{\alpha}_{\nu} \left( \ x_{I} \ \right) \ \frac{\partial B^{\mu} \left( \ x_{I} \ \right)}{\partial x^{\alpha}_{I}} \ \biggr) \ \hat{h}^{\nu}_{I} = 0 \\
%\\
\displaystyle
B^{\alpha}_{I} \left( \ x_{I} \ \right) \ \frac{\partial F^{\mu}_{\nu} \left( \ x_{I} \ \right)}{\partial x^{\alpha}_{I}} - F^{\alpha}_{\nu} \left( \ x_{I} \ \right) \ \frac{\partial B^{\mu} \left( \ x_{I} \ \right)}{\partial x^{\alpha}_{I}} = 0 \\
%\end{array}
\end{equation}

\noindent with reference condition $F^{\mu}_{\nu} \left( \ x_{0 I} \ \right) = \delta^{\mu}_{\nu}$. 

\begin{ex}
The Eulerian specification of the referential gradient in Minkowski space, $g_{\mu \nu} \left( \ x_{I} \ \right) = \text{diag} \left( -1 , 1 , 1 , 1 \right)$, satisfies the following set of coupled first-order partial differential equations,
\begin{equation} \label{E:RefGradDynamicsExpand}
\begin{array}{l}
\begin{split}
\displaystyle
\frac{B^{0}_{I} \left( \ x_{I} \ \right)}{c} &\ \frac{\partial}{\partial t} \ F^{0}_{\nu} \left( \ x_{I} \ \right) + B^{j}_{I} \left( \ x_{I} \ \right) \ \nabla_{j} \ F^{0}_{\nu} \left( \ x_{I} \ \right) \\
\displaystyle
&- \frac{1}{c} \ \frac{\partial B^{0}_{I} \left( \ x_{I} \ \right)}{\partial t} \ F^{0}_{\nu} \left( \ x_{I} \ \right) = F^{j}_{\nu} \left( \ x_{I} \ \right) \ \nabla_{j} \ B^{0}_{I} \left( \ x_{I} \ \right) \\
\\
\end{split}
\\
\begin{split}
\displaystyle
\frac{B^{0}_{I} \left( \ x_{I} \ \right)}{c} \ \frac{\partial}{\partial t} \ F^{i}_{\nu} \left( \ x_{I} \ \right) + \Bigl[ \ B_{I} \left( \ x_{I} \ \right) \ , & \ F_{\nu} \left( \ x_{I} \ \right) \ \Bigr]^{i} \\
\displaystyle
&= \frac{F^{0}_{\nu} \left( \ x_{I} \ \right)}{c} \ \frac{\partial B^{i}_{I} \left( \ x_{I} \ \right)}{\partial t} \\
\end{split}
\end{array}
\end{equation}

\noindent With reference conditions at the point $x_{0I} = x^{\mu}_{0I} \left( p \right)$ given by, 
\begin{equation} \label{E:RefGradDyanBndryCond}
\begin{array}{c c c c}
F^{0}_{0} \left( \ x_{0 I} \ \right) = 1 & F^{0}_{i} \left( \ x_{0 I} \ \right) = 0 & F^{i}_{0} \left( \ x_{0 I} \ \right) = 0 & F^{i}_{j} \left( \ x_{0 I} \ \right) = \delta^{i}_{j}
\end{array}
\end{equation}
%
%, and the spatial commutator is,
%\begin{equation} \label{E:RefGradDynamicsExpand2}
%\displaystyle
%\Bigl[ \ B_{I} \left( \ x_{I} \ \right) \ , \ F_{\nu} \left( \ x_{I} \ \right) \ \Bigr]^{i} = B^{j}_{I} \left( \ x_{I} \ \right) \nabla_{j} \ F^{i}_{\nu} \left( \ x_{I} \ \right) - F^{j}_{\nu} \left( \ x_{I} \ \right) \nabla_{j} \ B^{i}_{I} \left( \ x_{I} \ \right) \\
%\end{equation}
\end{ex}

\section{Summary}
\label{S:Summary}

This paper introduces the \textit{Referential Gradient of the Flow} of a vector field, a generally covariant measure of the geometric structure of the flow of a vector field in four-dimensional spacetime. We assume \textit{a-priori} the generating vector field exists, is everywhere smooth, and satisfies some set of governing evolution equations. The mathematical formalism of flows is provided as background from which the referential gradient object is defined.%; see definition (\ref{D:RefGrad}). 

We provided the explicit relation between the referential gradient of the flow and the generating vector field from two equivalent perspectives: a Lagrangian specification with respect to a generalized connectivity parameter, and an Eulerian specification making explicit the evolution dynamics at each point of the manifold. The Lagrangian specification Theorem (\ref{Thm:RefGradLagrangian}) yields a general closed-form functional solution with respect the generating vector field in terms of a generalized connectivity parameter. While the Eulerian specification Theorem (\ref{Thm:RefGradDynamics}) makes explicit the referential gradient dynamics at each point of the manifold. 

Due to the integration, we prove three transformation lemmas that identify the conditions under which the referential gradient transforms as a 1-1 tensor, forms a group with respect to the connectivity parameter, and the proper change of variable relations between the corresponding referential gradient representations. Lemma (\ref{lem:RefGrad}) proves manifest covariance of the referential gradient provided the closure of its domain of definition is contained within coordinate chart overlap. Lemma (\ref{lem:RefGradGroup}) identifies the necessary and sufficient conditions that the referential gradient forms a group with respect to translations of connectivity parameter; that is, in general, 1) the group property is a direct consequence of the similar group property of the flow, or 2) the representation of $\nabla B$ in any coordinate chart is independent of the connectivity parameter and reference point. Finally, since the flow of a vector field represents an equivalence class under affine transformations of the connectivity parameter, Lemma (\ref{lem:RepresentationRelations}) provides the proper relations associated with a change of integration variable; in particular, between the two most natural representations of the flow.

In a follow-up paper, we develop a geometric mechanics and thermodynamics using the Lagrangian specification of the referential gradient. In this context we explore the importance of Lemma (\ref{lem:RefGradGroup}) with respect to topological invariants such as the linking number, as well as the consequences of relaxing the smoothness assumption of the \textit{a-priori} generating vector field which lead to some powerful topological constraints on the storage, transport, and release of field energy in a system. Furthermore, in this context, Lemma (\ref{lem:RepresentationRelations}) will prove important in application to systems in which only partial information of the generating vector field may known.

% If you have acknowledgments, this puts in the proper section head.
%\clearpage

\section{Acknowledgements} 
The author would like to acknowledge L. A. Fisk at the University of Michigan, and B. J. Lynch at the University of California at Berkeley for insightful discussions in the development of this research. This research was supported, in part, by NSF Grant AGS-1043012, and NASA LWS Grant NNX10AQ61G.

\appendix
\begin{appendix}
%%%%%%%%%%%%%%%%%%%%%%%%%%%%%%%%%%%%%%%%%%%%%%%%%%%%%%%%%%%%%%%%%%%%
\section{Generalized First Order Taylor Theorem for Smooth Vector Fields}
\label{A:GeneralizedFirstOrderTaylorExpansion}

By standard Taylor expansion theorem (see e.g., Ref. \cite{Lee2006}, Theorem A.58), any smooth vector field $\xi \left( y \right)$ for any fixed $y \in \mathbb{M}$, may be written,
\begin{equation} \label{AE:GeneralTaylorExpansion}
\displaystyle
\xi \left( y + w \right) = \xi \left( y \right) + \nabla_{w} \ \xi \left( y \right) + \int^{1}_{0} ds \ \Bigl( \ \nabla_{w} \ \xi \left( y + sw \right) - \nabla_{w} \ \xi \left( y \right) \ \Bigr) \\
\end{equation}

\noindent Where $\nabla_{w} \ \xi \left( y \right)$ is the covariant derivative along the vector $w$. In a chart $\mathbb{U}_{I}$, the vector field $\xi \left( y \right) = \xi^{\mu}_{I} \bigl( \ x_{I} \left( y \right) \ \bigr) \ \hat{e}^{I}_{\mu} \left( y \right)$, and the covariant derivative along the vector $w = w^{\nu}_{I} \ \hat{e}^{I}_{\nu} \left( y \right)$ is given by,
\begin{equation} \label{AE:CovDeriv}
%\begin{array}{c}
%\displaystyle
%\nabla_{w} \ \xi \left( y \right) = \nabla_{w^{\nu}_{I} \hat{e}^{I}_{\nu}} \ \Bigl( \  \xi^{\mu}_{I} \bigl( \ x_{I} \left( y \right) \ \bigr) \ \hat{e}^{I}_{\mu} \ \Bigr) = w^{\nu}_{I} \ \nabla_{\hat{e}^{I}_{\nu}} \ \Bigl( \  \xi^{\mu}_{I} \bigl( \ x_{I} \left( y \right) \ \bigr) \ \hat{e}^{I}_{\mu} \ \Bigr) \\
%\\
\displaystyle
\nabla_{w} \ \xi \left( y \right) = w^{\nu}_{I} \ \biggl( \ \frac{\partial \ \xi^{\mu}_{I} \bigl( \ x_{I} \left( y \right) \ \bigr)}{\partial x^{\nu}_{I} \left( y \right)} + \Gamma^{\mu}_{\nu \rho} \left( y \right) \ \xi^{\rho}_{I} \bigl( \ x_{I} \left( y \right) \ \bigr) \ \biggr) \ \hat{e}^{I}_{\mu} \left( y \right) \\
%\\
%\end{array}
\end{equation}

\noindent Hence the $\hat{e}^{I}_{\mu} \left( y \right)$ component of equation (\ref{AE:GeneralTaylorExpansion}) may be written,
\begin{equation} \label{AE:TaylorExpansionWRemainderConst}
\begin{array}{l}
\begin{split}
\displaystyle
\xi^{\mu}_{I} \bigl( \ x_{I} \left( y + w \right) \ \bigr) &- \xi^{\mu}_{I} \bigl( \ x_{I} \left( y \right) \ \bigr) \\
\displaystyle
&= w^{\nu}_{I} \ \biggl( \ \frac{\partial \ \xi^{\mu}_{I} \bigl( \ x_{I} \left( y \right) \ \bigr)}{\partial x^{\nu}_{I} \left( y \right)} + \Gamma^{\mu}_{\nu \rho} \left( y \right) \ \xi^{\rho}_{I} \bigl( \ x_{I} \left( y \right) \ \bigr) \ \biggr) \\
\\
\end{split}
\\
\begin{split}
\displaystyle
+ \ w^{\nu}_{I} \ \int^{1}_{0} \biggl[ \ \biggl(  &\ \frac{\partial \ \xi^{\mu}_{I} \bigl( \ x_{I} \left( y + sw \right) \ \bigr)}{\partial x^{\nu}_{I} \left( y \right)} - \frac{\partial \ \xi^{\mu}_{I} \bigl( \ x_{I} \left( y \right) \ \bigr)}{\partial x^{\nu}_{I} \left( y \right)} \ \biggr) \\
\displaystyle
& + \Gamma^{\mu}_{\nu \rho} \left( y \right) \ \biggl( \ \xi^{\rho}_{I} \bigl( \ x_{I} \left( y + sw \right) \ \bigr) - \xi^{\rho}_{I} \bigl( \ x_{I} \left( y \right) \ \bigr) \ \biggr) \ \biggr] ds \\
\end{split}
\end{array}
\end{equation}
\noindent It follows immediately the integral remainder is equal to zero when the vector $w = 0$.

\section{Baker-Cambpell-Hausdorff Theorem}
\label{A:BCHTheorem}

The exponential of a matrix satisfies the identity (Dinkin's formula, see Ref. \cite{BonfuglioliFulci2012}),
\begin{equation} \label{E:DynkinIdentity}
\displaystyle
\text{exp} \left( X \right) \cdot \text{exp} \left( Y \right) = \text{exp} \biggl( \ X + Y + 1/2 \ \left[ \ X , Y \ \right] + \sum_{n} \ a_{n} \ C_{n} \bigl( \ X , Y \ \bigr) \ \biggr) \\
\end{equation}

\noindent Where the $a_{n}$ are constant coefficients, and the $C_{n} \bigl( \ X , Y \ \bigr)$ are homogeneous (Lie) polynomials in $X$ and $Y$ of degree $n$ (i.e., nested commutators). The first few $a_{n} \ C_{n} \bigl( \ X , Y \ \bigr)$ terms are well-known, and given by (see Ref. \cite{BonfuglioliFulci2012}),
\begin{equation} \label{E:CnXY}
\begin{array}{l l}
\displaystyle
a_{1} = 1/12 \ \ \ & \ \ \ C_{1} = \bigl[ \ X , \left[ \ X , Y \ \right] \ \bigr] + \bigl[ \ \left[ \ X , Y \ \right] , Y \ \bigr] \\
\\
\displaystyle
a_{2} = 1/24 \ \ \ & \ \ \ C_{2} = \bigl[ \ \bigl[ \ X , \left[ \ X , Y \ \right] \ \bigr] , Y \ \bigr] \\
\\
\displaystyle
a_{3} = 1/720 \ \ \ & \ \ \ \begin{split} C_{3} = \bigl[ \ X , \bigl[ \ X , \bigl[ &\ X , \left[ \ X , Y \ \right] \ \bigr] \ \bigr] \ \bigr] \\ \displaystyle &- \bigl[ \ \bigl[ \ \bigl[ \ \left[ \ X , Y \ \right] , Y \ \bigr] , Y \ \bigr] , Y \ \bigr] \\ \end{split}
\\
\end{array}
\end{equation}

%%%%%%%%%%%%%%%%%%%%%%%%%%%%%%%%%%%%%%%%%%%%%%%%%%%%%%%%%%%%%%%%%%%%
\end{appendix}

% ------------------------------------------------------------------------
\end{document}